\numberwithin{equation}{section}
\theoremstyle{plain}
\newtheorem{thm}{Theorem}[section]
\newtheorem{theorem}[thm]{Theorem}
\newtheorem{lemma}[thm]{Lemma}
\newtheorem{prop}[thm]{Proposition}
\newtheorem{proposition}[thm]{Proposition}
\newtheorem{claim}{Claim}
\theoremstyle{remark}
\theoremstyle{definition}
\newcounter{mnotecount}[section]
\renewcommand{\phi}{\varphi}
\newcommand{\Omegatil}{\widetilde{\Omega}}
\newcommand{\bR}{\mathbb{R}}
\newcommand{\bL}{\mathbb{L}}
\newcommand{\cE}{\mathcal{E}}
\newcommand{\cP}{\mathcal{P}}
\renewcommand{\hbar}{\overline{h}}
\newcommand{\nubar}{\overline{\nu}}
\newcommand{\gtil}{\widetilde{g}}
\def\into{\hookrightarrow}
\newcommand{\phibar}{\overline{\varphi}}
\newcommand{\etatil}{\widetilde{\eta}}
\newcommand{\ubar}{\overline{u}}
\newcommand{\Rb}{\mathbb{R}}
\newcommand{\dsp}[1]{\displaystyle{#1}}
\newcommand{\fraccheloue}{\frac{N}{2}+1}
\newcommand{\Scal}{\textrm{Scal}}
\newcommand{\Lb}{\mathbb{L}}
\newcommand{\tilg}{\widetilde{g}}
\newcommand{\Sring}{\mathring{S}}
\newcommand{\definedas}{\mathrel{\raise.095ex\hbox{\rm :}\mkern-5.2mu=}}
\DeclareMathOperator{\tr}{tr}
\DeclareMathOperator{\divg}{div}
\DeclareMathOperator{\vol}{Vol}
\DeclareMathOperator{\supp}{supp}
\newcommand{\ric}{\mathrm{Ric}}
\newcommand{\scal}{\mathrm{Scal}}
\def\XXint#1#2#3{{\setbox0=\hbox{$#1{#2#3}{\int}$}
\vcenter{\hbox{$#2#3$}}\kern-.5\wd0}}
\definecolor{lgreen} {RGB}{180,210,100}
\definecolor{dblue}  {RGB}{20,66,129}
\definecolor{ddblue} {RGB}{11,36,69}
\definecolor{lred}   {RGB}{220,0,0}
\definecolor{nred}   {RGB}{224,0,0}
\definecolor{norange}{RGB}{230,120,20}
\definecolor{nyellow}{RGB}{255,221,0}
\definecolor{ngreen} {RGB}{98,158,31}
\definecolor{dgreen} {RGB}{78,138,21}
\definecolor{nblue}  {RGB}{28,130,185}
\definecolor{jblue}  {RGB}{20,50,100}
\definecolor{Apricot}{RGB}{255, 170, 123}
\renewcommand{\phi}{\varphi}
\newcommand{\dd}{\,\mathrm{d}}
\newcommand{\tnu}{\widetilde{\nu}}
\newcommand{\tn}{\widetilde{n}}
\newcommand{\nablah}{~^{(\textbf{h})}\nabla}
\renewcommand{\leq}{\leqslant}
\renewcommand{\geq}{\geqslant}
\DeclareMathOperator{\DeltaL}{\Delta_{\bL}}
\newtcolorbox{mymathbox}[1][]{colback=white, sharp corners, #1}
\begin{document}

\title[Constraint equations on compact manifolds with boundaries]
{A class of solutions to the conformal constraint equations on compact 
manifolds 
with apparent horizon boundary conditions}

\begin{abstract}
This article is dedicated to solving the Einstein constraint equations with 
apparent horizon boundaries and freely specified mean curvature. The main novelty is that we study the conformal constraint equations assuming only low regularity.
\end{abstract}

\author[J.-D. Pailleron]{Jean-David Pailleron}
\address[J.-D. Pailleron]{
  Institut Denis Poisson\\
  Universit\'e de Tours\\
  Parc de Grandmont\\ 37200 Tours \\ France}
\email{jean-david.pailleron@idpoisson.fr}

\keywords{Einstein constraint equations, non-constant mean curvature, conformal 
method, Lichnerowicz equation, compact manifold, apparent horizon}

\subjclass[2000]{53C21 (Primary), 35Q75, 53C80, 83C05 (Secondary)} 

\date{October 18, 2022}
\maketitle
\tableofcontents

\section{Introduction}

We consider the Cauchy problem in general relativity setting that a
$(n+1)$-manifold $\mathscr{M}$ and a globally hyperbolic Lorentzian metric $h$ 
describe the evolution of the gravitational field via Einstein's equation:
\begin{equation*}
 \underbrace{\ric^h_{\mu\nu}-\frac{\scal^h}{2} h_{\mu\nu}}_{G^h_{\mu\nu}}=\kappa 
T_{\mu\nu}.
\end{equation*}

Here $\ric^h$ and $\scal^h$ represent the Ricci tensor and the scalar curvature 
of $h$, $G^h$ is called the Einstein tensor, $T$ is the energy-momentum 
tensor modelling the matter and $\kappa$ is a constant equal to 
$\dsp{\frac{8\pi\mathcal{G}}{c^4}}$ (at least in dimension $n = 3$) where $\mathcal{G}$ is the gravitational constant and $c$ 
the speed of light in vacuum.\\

If $M$ is a spacelike hypersurface of $\mathscr{M}$ with unit normal 
$\etatil$, we can define on $M$:
\begin{itemize}
 \item the induced metric $\gtil=h_{|TM}$,
 \item the second fundamental form 
$\displaystyle{K:K(X,Y)=h(\nablah_{X}\etatil,Y)}$, where$\nablah$ represents 
the  Levi-Civita connection for the metric $h$.
\end{itemize}

Here, we will consider the case of vacuum; i.e. $T_{\mu\nu}=0$.
The objects $\gtil$ and $K$ are linked by the constraint equations, following 
from the Gauss-Codazzi equations (see e.g. \cite{BartIsen} for their derivation).

\begin{equation} \label{constraintequations}   
\begin{cases}
  \scal_{\gtil} -|K|^2_{\gtil} + (tr_{\gtil} K)^2&=0, \\
  \divg_{\gtil} K-\dd(tr_{\gtil} K)&=0.
\end{cases}
\end{equation}

It was proven by Y. Choquet-Bruhat and R. Geroch in \cite{CB1,CB2} (see 
also \cite{RingstromCauchyProblem} for a recent review of the subject) that if 
a $n$-manifold $M$, a Riemannian metric $\gtil$ on $M$ and a 2-tensor $K$ form 
a solution to the constraint equations, we can reconstruct the spacetime 
($\mathscr{M},h$) from the knowledge of $(M,\gtil,K)$.\\

The constraint equations (with or without boundary conditions) is 
underdetermined so the idea of the conformal method is to split the initial 
data $(M, \gtil ,K)$ into a given part and an unknown part that will have to be adjusted to satisfy \eqref{constraintequations}. We summarize the splitting that interests us.\\

\textbf{Conformal method (boundary-free):}
 \begin{itemize}[label={}]
  \item \textbf{Data:}
  \begin{itemize}[label=\textbullet]
  \item a Riemannian manifold $(M,g)$,
  \item a function $\tau : M\rightarrow \bR$,
  \item a symmetric 2-tensor $\sigma$ on $M$, traceless, transverse 
($\textrm{div}_g \sigma \equiv 0$). ($\sigma$ is then called a 
\emph{TT-tensor}).
 \end{itemize}
  \item \textbf{Unknowns:}
   \begin{itemize}[label=\textbullet]
    \item a vector field $W$,
    \item a positive function $\varphi$.
   \end{itemize}
\end{itemize}

What we do with the conformal method is to search for $\phi$ and $W$ so that,
\begin{eqnarray*}
\begin{cases}
  \gtil&=\phi^{N-2}g,\\
  K &=\displaystyle{\frac{\tau}{n}\gtil +\phi^{-2}(\sigma+\bL_{g} W).}
\end{cases}
\end{eqnarray*}

Here, we set $\dsp{N:=\frac{2n}{n-2}}$ and $\bL_g$ stands for the conformal Killing operator given by:
\begin{equation*}
 \bL_g W_{ij} :=\nabla_i W_j +\nabla_j W_i -\frac{2}{n}\nabla^k W_k g_{ij},
\end{equation*}
$\nabla$ being the Levi-Civita connection associated to $g$. We also note 
that $\tau=\tr_{\gtil} K$ is the mean curvature of the embedding $M \hookrightarrow \mathscr{M}$. Using this method, the system \ref{constraintequations} becomes:

\begin{subequations}\label{VCCEs}
\begin{empheq}[left=\empheqlbrace]{align}
-\frac{4(n-1)}{n-2}\Delta_g \phi + \scal_g \phi &= \frac{1-n}{n}\tau^2 
\phi^{N-1} +|\sigma+\bL_g W|^2_g \phi^{-N-1}, \label{eqLich}\\
\Delta_{\bL,g} W &= \frac{n-1}{n} \phi^N \dd\tau. \label{eqVect}
\end{empheq}
\end{subequations}

\vspace{0.5cm}
Here, $\Delta_g := \divg_g (\nabla_g \phi)$ is the standard Laplacian and $\Delta_{\bL,g} W :=\divg_g(\bL_g W)$ is commonly called the vector Laplacian. From now on, except when specified, all metric dependent objects will be expressed using $g$ as a reference metric except for those with a tilde that have $\tilg$ as a reference metric.
So we will omit most of the time the subscript $g$.
Of the two equations written above, the first one \eqref{eqLich} is called \emph{the Lichnerowicz equation} and the second one \eqref{eqVect},
\emph{the vector equation}.\\

The study of general relativity leads to consider singularities and the event horizon it (supposedly) lies within, an object commonly known as ``black hole''.
To adapt the problem of modelling black holes with constraint equations, we 
carve a ``hole'' in our hypersurface $M$ and thus add a boundary $\partial M$ 
that we will impose to be a \emph{marginally outer trapped surface} (MOTS) as 
we describe below. This hole that can potentially have multiple connected components if one wants 
to describe a universe containing several black holes will be the inside of 
the black hole and thus the solution to the constraint equations we will 
construct models the outside of the black hole(s).

We now introduce the notion of marginally outer trapped surface in general. These 
surfaces are indications defined using only the initial data that a spacetime 
contains black holes as it follows from the singularity theorem of Penrose and 
Hawking that the existence of a trapped surface leads to geodesic 
incompleteness of the spacetime (see e.g. \cite{Hawking}) In 
particular, we will loosely think of MOTS as describing the 
boundary of a black hole. The main interest of this notion lies in the fact that an event horizon 
asks for the complete knowledge of spacetime whereas an MOTS only depends on initial data.\\

We now give a description of MOTS, refering to \cite{wald} for more details.
Let $\Sigma$ be a two-sided hypersurface embedded in $M$, that will be chosen to be $\partial M$ later, so it defines two regions in $M$,
one of which we think as being the inside of the black hole. We 
denote by $\tnu$ the unit normal to $\Sigma$ in $M$ that points 
inwards of the black hole. We then consider $\etatil$ the future-directed unit 
normal to $M$. We let $l_+=\etatil-\tnu$ and $l_-=\etatil+\tnu$. By these 
conventions, $l_-$ (resp. $l_+$) is a future-pointing lightlike vector directed inside (resp. outside) of the black hole.
Let $\Theta_\pm$ be the null expansions with respect to $l_\pm$:
$$\Theta_\pm= \bar{g}^{\alpha\beta} \nablah_\alpha(l_\pm)_{\beta},$$
where $\bar{g}$ is the induced metric on $\partial M$, given by 
$$\bar{g}=\gtil-\tnu^\flat \otimes \tnu^\flat,$$

Relatively to $(M,\gtil,K)$, we obtain
\begin{eqnarray*}
 \Theta_\pm &=& \bar{g}^{\alpha\beta} \nablah_\alpha \tn_\beta \mp 
\bar{g}^{\alpha\beta} \nablah_\alpha \tnu_\beta\\
 &=&(\bar{g}^{\alpha\beta}-\tnu^\alpha \tnu^\beta)\nablah_\alpha \tnu_\beta \mp 
H_{\gtil}\\
 &=&\tr_{\gtil} K-K(\tnu , \tnu) \mp H_{\gtil},
\end{eqnarray*}
where $H_{\gtil}$ is the mean curvature of $\partial M$ in $M$ evaluated with 
respect to $\tnu$. $\partial M$ is called a marginally outer trapped surface if $\Theta_+ =0$ and $\Theta_- \leqslant 0$.\\

We want to find initial data $(M, \gtil, K)$ satisfying the constraint equations and such that $\partial M$ is a marginally outer trapped surface. Our constraint equations hence become:

\begin{equation}\label{eqBoundaryConstraints}
\tag{BC}
\begin{cases}
  \scal_{\gtil} -|K|^2_{\gtil} + (\tr_{\gtil} K)^2&=0,\\
  \divg_{\gtil} K-\dd(\tr_{\gtil} K)&=0,\\
  \tr_{\gtil} K-K(\tnu , \tnu) - H &=0,\\
  \tr_{\gtil} K-K(\tnu , \tnu) + H &=\Theta_-.,
\end{cases}
\end{equation}
where the last two equations are imposed on $\partial M$.
As in the case of a compact manifold without boundary, we can resort to the 
very common conformal method to construct 
solutions to \eqref{eqBoundaryConstraints}. Other ways of constructing solutions can be found in \cite{BartIsen}. Recently, variants of the conformal method have been proposed in \cite{DEL16,MaxwellInitialData}.\\

As for the boundary conditions now, if $\nu$ denotes the outward pointing unit normal to $\partial M$ with respect to $g$, we have the conformal change $\nu= \phi^{\frac{N}{2}-1}\tnu$. We also enforce the boundary condition $\sigma (\tnu, \cdot)=0$ for the TT-tensor as it will be explained in Proposition \ref{propYork}.\\

We now compute $K(\tnu,\tnu)$ using the conformal method data:

\begin{align*}
 K(\tnu,\tnu) &= \frac{\tau}{n} \tilg (\tnu ,\tnu) +\varphi^{-2}(\cancel{\sigma (\tnu ,\tnu)} +\Lb_{\tilg} W (\tnu ,\tnu))\\
 &= \frac{\tau}{n}+\varphi^{-N}\Lb_g W(\nu,\nu).
 \end{align*}

We then get
\begin{subequations}
\begin{align}
 0=\Theta_+ &= \frac{n-1}{n}\tau -\varphi^{-N}\Lb_g W (\nu,\nu)-H_{\tilg}, \label{eq1}\\
 \Theta_- &= \frac{n-1}{n}\tau -\varphi^{-N}\Lb_g W (\nu,\nu)+H_{\tilg}. \label{eq2}
\end{align}
\end{subequations}

By substracting both equations \eqref{eq2} and \eqref{eq1}, we obtain $2H_{\tilg}=\Theta_-$. By using the conformal transformation law for  $H_{\tilg}$ (see \cite{G2009} for proof),
\begin{equation}
H_{\tilg}=\varphi^{-\frac{N}{2}+1}\left(H_g+\frac{2(n-1)}{n-2}\frac{\partial_\nu \varphi}{\varphi}\right),
\label{hgtilde}
\end{equation} 
we get a first boundary condition for $\varphi$ related to the Lichnerowicz equation
$$\frac{2(n-1)}{n-2}\partial_\nu \varphi + H_g \varphi = \frac{\Theta_-}{2}\varphi^{\frac{N}{2}}.$$

Concerning the boundary condition for the vector equation, we add \eqref{eq2} and \eqref{eq1} to get

\begin{align*}
 \Theta_- &= 2\frac{n-1}{n}\tau -2\varphi^{-N}\Lb_g W(\nu,\nu) \\
 \Leftrightarrow \Lb_g W(\nu ,\nu) &= \varphi^N \left(\frac{n-1}{n}\tau-\frac{\Theta_-}{2}\right)
\end{align*}

This is a scalar relation, yet we wish to have a vector relation, so to have a reasonable boundary condition for the vector equation, we thus choose to free one of the coordinates by transforming this equality into

$$\mathbb{L}W(\nu, \cdot)=\left(\frac{n-1}{n}\tau -\frac{\Theta}{2}\right) \varphi^N \nu^\flat +\xi,$$

where $\xi$ denotes a 1-form over $\partial M$ we extend on the restriction of $TM$ to $\partial M$ by setting $\xi(\nu)=0$.\\

All the above discussion leads to the \textbf{Boundary Conformal Constraint 
Einstein Equations} \emph{(BCCE for short)}:
  
\begin{subequations}\label{BCCE}
\begin{empheq}[left=(BCCE)\empheqlbrace]{align}
\displaystyle{-\frac{4(n-1)}{n-2}\Delta \phi +\scal~\phi = 
-\frac{n-1}{n}\tau^2 \phi^{N-1} 
+|\sigma+\bL W|^2 \phi^{-N-1}} \label{eqLichnerowicz} \\
\displaystyle{\Delta_{\bL} W = \frac{n-1}{n} \phi^N \dd\tau}\label{eqVector}\\
\displaystyle{\frac{2(n-1)}{n-2}\partial_\nu \phi + H\phi = 
\frac{\Theta_-}{2}\phi^{\frac{N}{2}}}\label{eqCondLich}\\
\displaystyle{(\sigma + \bL W)(\nu, \cdot) = \left(\frac{n-1}{n}\tau 
-\frac{\Theta_-}{2}\right)\phi^{N}\nu^{\flat}
+\xi}\label{eqCondVector}
\end{empheq}
\end{subequations}

We have then a new set of data for the conformal method.

\vspace{0.25cm}
\textbf{Conformal method with boundary conditions:}
\begin{itemize}[label={}]
\item \textbf{Data:} \begin{itemize}[label=\textbullet] \item a Riemannian 
manifold $(M,g)$ with a boundary $\partial M$,
\item a function $\tau : M\rightarrow \bR$,
\item a symmetric 2-tensor $\sigma$ on $M$, traceless, transverse 
($\textrm{div}_g \sigma \equiv 0$), \emph{(TT-tensor)} such that $\sigma(\nu, \cdot) = 0$ on $\partial M$.
\item a function $\Theta_- : \partial M\rightarrow \bR^-$, $\Theta_- \leq 0$,
\item a 1-form $\xi$ belonging to $\Gamma(\partial M, T^*M)$ and such that $\xi(\nu) \equiv 0$.
\end{itemize}
\item \textbf{Unknowns:}
\begin{itemize}[label=\textbullet]
\item a positive function  $\phi:M\rightarrow\bR^+_*$,
\item a vector field $W$.
\end{itemize}
\end{itemize}
  
Before stating the result we will prove, we introduce notations: 

\begin{itemize}[label=-]
\item \emph{The Escobar invariant}. Instead of using the 
Yamabe invariant in further results, we have to generalize it to include the boundary $\partial M$ in its definition. We obtain the Escobar invariant whose properties will be discussed in Section \ref{secEscobar}.

\begin{equation}\label{eqEscobar}
\mathcal{E}(M, g):= \underset{0\not\equiv\phi\in W^{1,2}(M,\bR)}{\textrm{inf}}
\frac{\displaystyle{\int_M \left(\frac{4(n-1)}{n-2}|\dd\phi|^2_g +\scal_g 
        \phi^2\right) \dd \mu^g +2\int_{\partial M}H_g \phi^2 \dd \mu^g}}
{\displaystyle{\left(\int_M\phi^N \dd \mu^g \right)^{\frac{2}{N}}}}.
\end{equation}
  
\item The following notation will be useful in Section \ref{secLich} for truncation purposes. Let $a < b$ be two real numbers.
For any $x \in \bR$, we define
\[
 (x)_{a, b} \definedas \left\lbrace
 \begin{aligned}
  a & \quad \text{if } x < a,\\
  x & \quad \text{if } a \leq x \leq b,\\
  b & \quad \text{if } x > b.
 \end{aligned}
 \right.
\]
When $x$ only needs to be trucated from below or from above, we set
\[
 (x)_{a,} \definedas \max \{a, x\} =
 \left\lbrace
 \begin{aligned}
  a & \quad \text{if } x < a\\
  x & \quad \text{if } a \leq x
 \end{aligned}
 \right.,
 \qquad
 (x)_{,b} \definedas \min \{x, b\} = \left\lbrace
 \begin{aligned}
  x & \quad \text{if } x < b\\
  b & \quad \text{if } b \leq x
 \end{aligned}
 \right..
\]
\end{itemize}
  \subsection{Main result}
  
A first treatment of these equations is made by Dain in \cite{DainTrapped} 
and by Maxwell in \cite{Maxwell}, for vanishing mean curvature (i.e. $\tau \equiv 0$), on asymptotically Euclidean manifolds. The method we are going to use to treat the case of an arbitrary $\tau$ is a variant of the method introduced by Holst, Nagy, Tsogtgerel and Maxwell in \cite{holst2008, holst2009, maxwell2008},
due to Nguyen \cite{cang1} and developed in \cite{GicquaudNguyen, GicquaudSmallTT, GicquaudHabilitation}.
  
The HNTM method requires two things. First, the Yamabe invariant $\mathcal{Y}_g$ has to be positive and second, $\sigma$ has to be non zero but small in a certain sense we will explicit later on. In our context, it is 
necessary to consider not the Yamabe invariant but instead the Escobar invariant, $\mathcal{E}_g$, since our problem is set on a manifold with boundary. Furthermore, in addition to imposing smallness of $\sigma$, we will have to consider smallness assumptions for $\xi$.
  
In a paper published by Gicquaud and Ng\^o in 2014, \cite{GN14}, the same boundary problem we address here is tackled using a method based on the implicit function theorem. Note also that attempts were made to generalize the work by Dahl, Gicquaud, Humbert \cite{dgh} to the system \eqref{BCCE}. However technical difficulties are even worse than the ones we will face in this paper.\\ 

We now recap how the HNTM method works in the case of Maxwell's \cite{maxwell2008} and then, we will discuss T.C.Nguyen's twist to the method, as seen in \cite{cang1}. This is the method exposed in the latter that will be used in this paper, which is an improvement from the classical method.\\   
   
The HNTM method relies on a fixed point theorem, namely a variant of the Schauder theorem (see Theorem \ref{thmSchauder}).

Schematically, if we define
\begin{itemize}
\item $\mathcal{W}_{\tau}$, the application which, for any given positive 
function $\phi$ gives us the associated solution $W$ of the vector equation,
\item $\mathcal{L}_{\sigma, \tau}$ the application which, for any given 
vector field $W$ gives us the associated solution $\phi$ to the Lichnerowicz equation,
\end{itemize}
we can create $\Phi_{\sigma, \tau} := \mathcal{L}_{\sigma, \tau} \circ 
\mathcal{W}_{\tau}$ whose fixed points are solutions of the Lichnerowicz equation that gives us a corresponding $W$, solution of the vector equation; the couple $(\phi, W)$ being a solution to the coupled system.\\
  
In the study of the coupled system, Maxwell demonstrates that if $\varphi^+$ is a global supersolution (\textit{see \cite{holst2009} for definition}), whose existence is assured by \cite[Theorem 5]{holst2009} (see also \cite[Proposition 14]{maxwell2008}), 
there exists a constant $K_0>0$ such that the set $\Omega=\{ \varphi\in L^\infty(M, \Rb) /~K_0\leqslant \varphi\leqslant\varphi^+ \}$ is $\Phi_{\sigma, \tau}$-invariant. $\Omega$ will clearly be convex, closed and bounded in $L^\infty(M, \Rb)$, and it can be shown that $\Phi_{\sigma, \tau}$ is a compact operator by the use of a non-critical Sobolev embedding. This leads to the following result:

\begin{thm}[Theorem 1 of \cite{maxwell2008}]\label{thmMaxwell}
 Let $g\in W^{2,p}$, with $p>n$, be a compact Riemannian manifold. Suppose $g$ has no non-zero conformal Killing vector fields, 
 that $\mathcal{Y}_g>0$, $\sigma\in W^{1,p}$, a non-zero TT-tensor and a function $\tau\in W^{1,p} (M,\Rb)$. 
 
 If $\varphi^{+} \in W^{2,p}$ is a global supersolution for the parameters $(g,\sigma ,\tau)$, then there exists a couple 
 $(\varphi,W)\in W^{2,p}_+ \times W^{2,p}$ solution of the conformal constraint equations \eqref{eqLich}-\eqref{eqVect} with $\varphi \leqslant\varphi^+$.
\end{thm}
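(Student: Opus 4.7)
The plan is to apply the Schauder fixed point theorem (Theorem \ref{thmSchauder}) to the map $\Phi_{\sigma,\tau}=\mathcal{L}_{\sigma,\tau}\circ\mathcal{W}_\tau$ on a suitable closed, convex, bounded subset of $L^\infty(M,\bR)$, as sketched in the introduction. The three ingredients to assemble are: (i)~$\mathcal{W}_\tau$ is well-defined, (ii)~$\mathcal{L}_{\sigma,\tau}$ is well-defined on the relevant set of vector fields, and (iii)~$\Phi_{\sigma,\tau}$ leaves an appropriate set invariant and is compact on it.

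For (i), given $\varphi\in L^\infty_+(M)$ the vector equation $\Delta_{\bL} W = \frac{n-1}{n}\varphi^N \dd\tau$ is a linear elliptic system whose operator is an isomorphism between the standard Sobolev spaces, since by assumption $g$ carries no non-zero conformal Killing vector field. As $\varphi^N\in L^\infty$ and $\dd\tau\in L^p$, elliptic regularity yields a unique solution $W=\mathcal{W}_\tau(\varphi)\in W^{2,p}$ with a bound of the form $\|W\|_{W^{2,p}}\leq C\,\|\varphi\|_{\infty}^{N}\,\|\dd\tau\|_{L^p}$.

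For (ii), fixing $W$, I would solve the Lichnerowicz equation by the classical sub-/supersolution method (monotone iteration, as in \cite{holst2009}). The upper barrier $\varphi^+$ is supplied by hypothesis, so the work is in producing a lower barrier $K_0>0$ uniform in $W\in\mathcal{W}_\tau(\Omega)$. Here the hypothesis $\mathcal{Y}_g>0$ lets me assume after a conformal change that $\scal_g>0$, and the hypothesis $\sigma\not\equiv 0$ prevents the right-hand side of \eqref{eqLich} from degenerating; a small enough positive constant (or small eigenfunction of a conformally covariant operator, in the standard construction) then serves as a subsolution for every $W$ produced from $\varphi\leq\varphi^+$.

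For (iii), set $\Omega\definedas\{\varphi\in L^\infty(M):K_0\leq\varphi\leq\varphi^+\}$. It is convex, closed, and bounded in $L^\infty(M)$, and $\Phi_{\sigma,\tau}(\Omega)\subset\Omega$ by the two barrier constructions. Continuity of $\Phi_{\sigma,\tau}$ in the $L^\infty$ topology follows from continuous dependence of both the linear vector problem and the semilinear Lichnerowicz problem on their data; the uniform $L^\infty$ bounds on $\Omega$ and on $\mathcal{W}_\tau(\Omega)$ translate, via the elliptic estimates used in (i) and (ii), into uniform $W^{2,p}$ bounds on $\Phi_{\sigma,\tau}(\Omega)$. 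Since $p>n$, the Rellich-Kondrachov embedding $W^{2,p}\hookrightarrow L^\infty$ is compact, so $\Phi_{\sigma,\tau}\colon\Omega\to\Omega$ is compact. Schauder's theorem then furnishes a fixed point $\varphi\in\Omega\cap W^{2,p}$ with $W=\mathcal{W}_\tau(\varphi)\in W^{2,p}$, completing the proof.

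The step I expect to be the most delicate is the construction of the lower barrier $K_0$ that works uniformly for every $W$ arising from $\varphi\in\Omega$: one must ensure that the lower barrier does not collapse as $W$ varies, and this is precisely where the non-vanishing of $\sigma$ (rather than merely $\sigma+\bL W$) enters decisively. Once this is in hand, the continuity and compactness of $\Phi_{\sigma,\tau}$ are a bookkeeping exercise with standard elliptic estimates.
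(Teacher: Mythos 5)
Your proposal follows exactly the fixed-point scheme that the paper sketches in the two paragraphs surrounding the statement: define $\Phi_{\sigma,\tau}=\mathcal{L}_{\sigma,\tau}\circ\mathcal{W}_\tau$, show that $\Omega=\{K_0\leq\varphi\leq\varphi^+\}$ is $\Phi_{\sigma,\tau}$-invariant, obtain compactness from the non-critical embedding $W^{2,p}\hookrightarrow L^\infty$ (valid since $p>n$), and invoke Schauder. Note that the paper does not actually prove Theorem \ref{thmMaxwell} — it quotes it from \cite{maxwell2008} and offers only this sketch as motivation — so the only basis for comparison is that sketch, which your proposal elaborates correctly, including the role of the York $L^2$-orthogonality $\|\sigma+\bL W\|^2_{L^2}=\|\sigma\|^2_{L^2}+\|\bL W\|^2_{L^2}$ behind the uniform lower barrier $K_0$ (a point that the paper later makes explicit in its own boundary setting, e.g.\ Lemma~\ref{lmSubSol} and Proposition~\ref{propYork}).
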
 
 
Existence of a global supersolution can be proven if $\sigma$ is small enough:

\begin{prop}[Corollary 1 of \cite{maxwell2008}]\label{coroMaxwell}
Let $g\in W^{2,p}$, with $p>n$, be a compact Riemannian manifold. Suppose $g$ has no non-zero conformal Killing vector fields, 
 that $\mathcal{Y}(g)>0$, $\sigma\in W^{1,p}$, a non-zero TT-tensor and a function $\tau\in W^{1,p} (M,\Rb)$. 
 
 If $\sigma\not\equiv 0$ and $\|\sigma \|_{L^\infty}$ is small enough, then there exists a couple 
 $(\varphi,W)\in W^{2,p}_+ \times W^{2,p}$ solution of the conformal constraint equations \eqref{eqLich}-\eqref{eqVect}.
\end{prop}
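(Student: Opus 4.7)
The plan is to derive the statement from Theorem \ref{thmMaxwell}: the only missing ingredient there is the existence of a global supersolution $\varphi^+\in W^{2,p}_+$, and I want to produce one whenever $\|\sigma\|_{L^\infty}$ is small. First I would exploit the positivity of the Yamabe invariant: since $\mathcal{Y}(g)>0$, the resolution of the Yamabe problem in $W^{2,p}$ ($p>n$) yields a conformal representative with constant positive scalar curvature $s_0>0$. Conformal covariance of the Lichnerowicz equation and of the TT-condition on $\sigma$ lets me work in this gauge without loss of generality, so from now on I assume $\scal\equiv s_0>0$.

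Next I would look for $\varphi^+$ in the simplest possible form: a small positive constant $\epsilon>0$. To verify the global-supersolution property I take an arbitrary $0<\varphi\leq\varphi^+=\epsilon$ and let $W=\mathcal{W}_\tau(\varphi)$ be the unique solution of $\Delta_{\bL}W=\tfrac{n-1}{n}\varphi^N d\tau$; uniqueness is guaranteed by the hypothesis that $g$ carries no nonzero conformal Killing field, which makes $\Delta_\bL$ an isomorphism between the relevant Sobolev spaces. Standard elliptic $W^{2,p}$-estimates combined with the Sobolev embedding $W^{2,p}\hookrightarrow C^1$ (valid since $p>n$) give
\[
 \|\bL W\|_{L^\infty}\;\leq\; C\,\|\varphi^N d\tau\|_{L^p}\;\leq\;C'\,\epsilon^N,
\]
for constants depending only on $(g,\tau)$. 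In particular $|\sigma+\bL W|\leq\|\sigma\|_{L^\infty}+C'\epsilon^N$ pointwise.

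Since $\varphi^+$ is constant, $\Delta\varphi^+=0$, and the supersolution inequality reduces to
\[
 s_0\,\epsilon\;\geq\;-\tfrac{n-1}{n}\tau^2\,\epsilon^{N-1}+|\sigma+\bL W|^2\,\epsilon^{-N-1}.
\]
Discarding the favourable negative term, a sufficient condition is
\[
 \sqrt{s_0}\,\epsilon^{(N+2)/2}\;\geq\;\|\sigma\|_{L^\infty}+C'\,\epsilon^N.
\]
Because $N=2n/(n-2)>2$, one has $(N+2)/2<N$, so the left-hand side dominates the $\epsilon^N$ correction for small $\epsilon$. Fixing such an $\epsilon$ first, and then imposing $\|\sigma\|_{L^\infty}<\sqrt{s_0}\,\epsilon^{(N+2)/2}-C'\,\epsilon^N$, promotes $\varphi^+\equiv\epsilon$ to a global supersolution, at which point Theorem \ref{thmMaxwell} finishes the argument.

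The main subtlety is conceptual rather than computational: a global supersolution must dominate the Lichnerowicz equation uniformly in the subsolution $\varphi\in(0,\varphi^+]$, even though the nonlinearity $|\sigma+\bL W|^2\varphi^{-N-1}$ blows up as $\varphi\to 0$. What saves us is the monotone bound $\|\bL W\|_{L^\infty}\lesssim\|\varphi\|_{L^\infty}^N$: it lets the numerator be controlled by $\varphi^+$ while the denominator is evaluated at the same $\varphi^+$, so both are frozen at the favourable scale $\epsilon$. This trick is precisely what requires $\mathcal{Y}(g)>0$; in the nonpositive-Yamabe regime the constant ansatz collapses and one would need a finer construction, such as a superharmonic weight built from the conformal Laplacian.
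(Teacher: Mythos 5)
The paper does not actually prove this proposition; it is quoted as Corollary~1 of \cite{maxwell2008} and used only as motivation for the main theorem, so there is no internal proof to compare against. Your overall strategy — a constant-scale global supersolution, the uniform bound $\|\bL W\|_{L^\infty}\lesssim\|\varphi\|_{L^\infty}^N$ coming from the isomorphism property of $\Delta_{\bL}$ and the embedding $W^{2,p}\hookrightarrow C^1$ for $p>n$, the exponent gap $(N+2)/2<N$, then Theorem~\ref{thmMaxwell} — is the right one, and your closing remark about why the constant ansatz collapses without $\mathcal{Y}>0$ is a fair diagnosis of the method's limits.

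There is, however, a genuine gap in the gauge-change step. The Lichnerowicz equation alone is conformally covariant, but the \emph{coupled} system \eqref{eqLich}--\eqref{eqVect} is not. Under $g\mapsto g_0=u^{N-2}g$, $\varphi\mapsto u^{-1}\varphi$, $\sigma\mapsto u^{-2}\sigma$, one computes $\bL_{g_0}W=u^{N-2}\bL_g W$ for the same vector field $W$, so $\varphi_0^{-2}(\sigma_0+\bL_{g_0}W)=\varphi^{-2}(\sigma+u^{N}\bL_g W)$, which is not of the required form $\varphi^{-2}(\sigma+\bL_g W')$: the tensor $u^{N}\bL_g W$ generically acquires a nontrivial TT-part under York splitting, and the source $\varphi^N\dd\tau$ of the vector equation is not covariant either. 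A solution of the system in the Yamabe gauge $(g_0,u^{-2}\sigma,\tau)$ is therefore \emph{not} a solution of \eqref{eqLich}--\eqref{eqVect} for the original data $(g,\sigma,\tau)$, so ``assume $\scal\equiv s_0>0$ without loss of generality'' does not deliver the proposition as stated. The fix is small and keeps every one of your estimates: stay in the gauge $g$ and take $\varphi^+=\epsilon u$ with $u$ the Yamabe conformal factor (equivalently, the first eigenfunction of the conformal Laplacian). Conformal covariance of the conformal Laplacian alone gives
\[
-\frac{4(n-1)}{n-2}\Delta_g(\epsilon u)+\scal_g\,(\epsilon u)= s_0\,u^{N-1}\epsilon > 0,
\]
the vector field $W$ is then computed from the $g$-vector equation with $0<\varphi\leq\epsilon u$, and $u$ being bounded above and below on the compact $M$ is absorbed into your constants. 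This also mirrors the paper's own subsolution construction in Lemma~\ref{lmSubSol}, where the linear problem is posed and solved directly in the gauge $g$ and the conformal change to $u_0^{N-2}g$ is invoked only locally, to run the maximum principle.
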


Combining Theorem \ref{thmMaxwell} and Proposition \ref{coroMaxwell}, we get existence of a solution to the conformal constraint equations provided $\sigma \not\equiv 0$ and $\|\sigma\|_{L^\infty}$ is small enough.
 
The main result of Nguyen in \cite{cang1} was to find lower regularity assumptions under which this result stayed true. He obtained the following result:

\begin{thm}\label{thmnguy}
 Let $(M,g)$ be a compact Riemannian manifold such that $g\in W^{2,{p}}(M,S_2 M)$ for some $\dsp{p>\frac{n}{2}}$ and $\mathcal{Y}(M, g)>0$. Assume also given
 a function $\tau \in W^{1,2p}(M, \Rb)$ and a TT-tensor $\sigma \in W^{1,2p}(M, S_2 M)$, $\sigma\not\equiv 0$.
 Then, there exists $\varepsilon =\varepsilon(M,g,\tau)$ such that, if
 $$\|\sigma\|_{L^2(M, S_2M)} < \varepsilon,$$
 there exists at least one solution $(\varphi, W)\in W^{2,p}(M,\Rb) \times W^{2,p}(M,TM)$ to the conformal constraint equations \eqref{eqLich}-\eqref{eqVect}.
\end{thm}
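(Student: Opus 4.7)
The strategy is the HNTM scheme recalled above, applied via Schauder's fixed-point theorem (Theorem \ref{thmSchauder}) to the composition $\Phi_{\sigma,\tau} = \mathcal{L}_{\sigma,\tau} \circ \mathcal{W}_\tau$ on a convex closed bounded set of the form
\[
 \Omega = \{ \phi \in L^\infty(M,\bR) : K_0 \leq \phi \leq \phi^+ \},
\]
where $\phi^+$ is a global supersolution of the Lichnerowicz equation \eqref{eqLich} and $K_0 > 0$ is a uniform lower bound. The main differences with Maxwell's setting \cite{maxwell2008} are the lowered regularity $g \in W^{2,p}$ with $p > n/2$ (so $g$ need not be continuous in the critical case) and the replacement of $L^\infty$-smallness of $\sigma$ by an $L^2$-smallness, at the cost of requiring $\sigma, d\tau \in L^{2p}$.

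\textbf{Global supersolution.} This is the crucial step in which $\|\sigma\|_{L^2}$ enters. The plan is to seek $\phi^+$ as a sufficiently large multiple of a carefully chosen conformal factor, using $\cY(M,g) > 0$ to fix a gauge where the scalar curvature is controlled from below. The supersolution inequality for \eqref{eqLich} then reduces to a nonlinear pointwise estimate of the schematic form
\[
 c_0 \, \phi^+ \geq |\sigma + \bL W|^2 (\phi^+)^{-N-1},
\]
with $W$ itself controlled by $\phi^+$ through the vector equation \eqref{eqVect}. To convert the $L^2$-smallness of $\sigma$ into the pointwise control needed on $\phi^+$, I would run a Moser iteration based on the Sobolev inequality packaged in $\cY(M,g) > 0$; the coupled estimate for $\bL W$ is then closed by a contraction argument exploiting the linearity of \eqref{eqVect} together with $\|\sigma\|_{L^2} < \varepsilon$.

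\textbf{Invariance, continuity, compactness.} Once $\phi^+$ is constructed, the inclusion $\Phi_{\sigma,\tau}(\Omega) \subset \Omega$ follows from combining the sub/supersolution method for \eqref{eqLich} (with $\cY(M,g) > 0$ ensuring strict positivity and the lower bound $K_0$) and elliptic regularity for \eqref{eqVect}: for $\phi \in L^\infty$ one has $\phi^N d\tau \in L^{2p}$, hence $W \in W^{2,2p}$ and $\bL W \in W^{1,2p} \hookrightarrow L^\infty$ since $2p > n$. Here one uses implicitly that $g$ has no non-zero conformal Killing field, so that \eqref{eqVect} is uniquely solvable. Continuity and compactness of $\Phi_{\sigma,\tau}$ exploit that $p > n/2$ guarantees $W^{2,p} \hookrightarrow L^q$ compactly for every finite $q$ (though not into $C^0$): I would topologize $\Omega$ by an $L^q$-norm for $q$ large, verify that $\mathcal{W}_\tau$ is continuous from $\Omega$ into $W^{2,2p}$ by linear elliptic theory and that $\mathcal{L}_{\sigma,\tau}$ is continuous from bounded subsets of $W^{1,2p}$ into $W^{2,p}$ by standard nonlinear arguments, and conclude via Rellich--Kondrachov. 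Schauder's theorem then produces a fixed point $\phi$, and hence a solution $(\phi, W) \in W^{2,p}(M,\bR) \times W^{2,p}(M,TM)$.

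The hard part is clearly the construction of $\phi^+$ under only $L^2$-smallness of $\sigma$: in contrast with the $L^\infty$ setting, one cannot pointwise dominate $|\sigma + \bL W|^2$ by an a priori constant, and must instead combine the Moser iteration tied to the sharp Sobolev constant associated to $\cY(M,g) > 0$ with a simultaneous control of $\bL W$ by $\phi^+$ through \eqref{eqVect}. The low regularity of $g$ also forces one to work throughout with integrated rather than pointwise quantities, which is exactly what makes an $L^2$-smallness assumption natural in this setting.
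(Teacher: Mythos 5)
Your proposal is on the wrong track: it reconstructs the \emph{Maxwell} approach of Theorem \ref{thmMaxwell} (global supersolution $\phi^+$, $L^\infty$-sandwich $\Omega = \{K_0 \leq \phi \leq \phi^+\}$), whereas Nguyen's theorem — and this paper's boundary analogue — is specifically the result obtained by \emph{abandoning} that scheme. The paper says so explicitly right after the statement: ``This result does not require the use of a global supersolution. It relies instead on a bootstrap argument \ldots .'' The whole point of the $L^2$-smallness hypothesis is that a pointwise supersolution is out of reach: since $\sigma$ is only assumed small in $L^2(M, S_2M)$ (it may be very large on small sets, even though it is a priori in $L^\infty$), an inequality of the form $A^2 \leq c\,(\phi^+)^{N+2}$ cannot hold pointwise for any bounded $\phi^+$, and a Moser iteration ``tied to the Sobolev constant'' would not recover it: the iteration would require $A^2$ small in $L^{n/2+\epsilon}$, a norm that is not controlled by $\|\sigma\|_{L^2}$.

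The actual engine of the argument is very different. One solves the Lichnerowicz equation weakly for $A = |\sigma + \bL W| \in L^2$ by minimizing the family of regularized functionals $I_\mu$ over $W^{1,2}$ (Proposition \ref{propLich}); the crucial analytic fact is the improved integral estimate of Proposition \ref{propImprovedRegL2}, namely $\|\phi^{\fraccheloue}\|_{W^{1,2}(M,\bR)} \lesssim \|A\|_{L^2(M,\bR)}$, obtained by testing the weak Lichnerowicz equation against $\psi = (\phibar^{N+1})_{,k}$ and letting $k \to \infty$. This converts $L^2$-control on $A$ into $L^{\fraccheloue}$-control on $\phi^N$, which is exactly what is needed to bound the source term $\gamma = \tfrac{n-1}{n}\phi^N\,\dd\tau$ of the vector equation in the dual Sobolev space $L^{2n/(n+2)}$. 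The fixed-point set for Schauder is not a set of conformal factors at all, but a ball of vector fields $\Omega_R = \{W \in W^{1,2}(M,TM) : \int_M |\bL W|^2 \,\dd\mu^g \leq R^2\}$; invariance (Proposition \ref{propInvariant}) follows because the estimate above is superlinear with exponent $N/(N+2) \cdot 2 = n/(n-1) > 1$, so the inequality $\zeta_1 (R^2 + \varepsilon^2)^{N/(N+2)} \leq R$ has solutions $R$ once $\varepsilon = \|\sigma\|_{L^2}$ is small. Compactness of $\Phi$ (Proposition \ref{propContinuity}) again hinges on the $W^{1,2}$ bound for $\phi^{\fraccheloue}$ and Rellich's theorem, not on any $L^\infty$ bound; $L^\infty$-regularity of $\phi$ is recovered only a posteriori, by the bootstrap ``Rounds'' at the end of Section \ref{secCoupled}. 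Your closing observation, that the low regularity forces one to trade pointwise quantities for integrated ones, is precisely right — but to exploit it you must abandon the supersolution framework, not try to repair it.
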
 

This result does not require the use of a global supersolution. It relies instead on a bootstrap argument where we construct nested subsets of $L^\infty(M, \Rb)$ that are for $\Phi_{\sigma, \tau}$ until we reach a bounded one. We will elaborate upon in the following sections.

The main improvement in this theorem compared to Theorem \ref{thmMaxwell} is the fact that $\sigma$ is assumed to be small in $L^2$-norm, not in $L^\infty$-norm meaning that $\sigma$ can be very big in (very) localized parts of $M$.

In this paper, the difference we will encounter here is the boundary terms that will appear while integrating by parts and the subsequent Sobolev spaces to consider. But the main difficulty will be the boundary terms in the elliptic estimates. Indeed, the boundary condition for the vector equation has to be thought as a Neumann condition for $W$. To estimate two derivatives of $W$, we need to estimate one of $\Lb W$ so, in particular, the derivative of $\varphi^N$ as it appears in the right hand side of \eqref{eqCondVector}. This difficulty is what explains why our main result is only valid in small dimension.\\

We can now state the main result we will prove in the following paper:

\begin{thm}\label{thmMain}
Let $(M, g)$ be a compact Riemannian boundary manifold of dimension $3\leqslant n\leqslant 5$ with $g \in W^{2, p}(M, S_2 M)$ 
for a certain $\dsp{p > \frac{n}{2}}$ and $\mathcal{E}(g)>0$.
Suppose given
\begin{itemize}
 \item $\tau \in W^{1, 2p}(M, \Rb)$,
 \item $\Theta_- \in W^{1, 2p}(M, \Rb)$, $\Theta_- \leq 0$ on $\partial M$,
 \item $\sigma \in L^{2p}(M, S_2 M)$ a TT-tensor,
 \item $\xi \in W^{1-\frac{1}{p}, p}(\partial M, TM)$.
\end{itemize}
If $\sigma$ is small enough in
$L^2(M, S_2 M)$-norm and $\xi$ in $L^{n-1}(\partial M, TM)$-norm,
with at least one of them nonzero, then there exists at least one solution $(\varphi, W) \in W^{2, p}(M, \Rb) \times W^{2, p}(M,
TM)$ to the system \ref{BCCE}.
\end{thm}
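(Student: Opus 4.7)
The plan is to implement Nguyen's variant of the HNTM fixed-point scheme, adapted to the present boundary setting. Following the notation of the introduction, I define
\[
\mathcal{W}_\tau : L^\infty_+(M) \rightarrow W^{2,p}(M, TM), \qquad \phi \mapsto W,
\]
sending a positive $\phi$ to the unique solution of the Neumann-type linear problem \eqref{eqVector}-\eqref{eqCondVector}, together with $\mathcal{L}_{\sigma,\tau}$ assigning to $W$ the unique positive solution $\phi$ of the nonlinear Robin problem \eqref{eqLichnerowicz}-\eqref{eqCondLich}. Setting $\Phi_{\sigma, \tau} := \mathcal{L}_{\sigma, \tau} \circ \mathcal{W}_\tau$, I seek a fixed point via a Schauder-type argument on a suitable closed convex subset of $L^\infty_+(M)$.

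The first step is to solve and estimate the vector equation. Assuming that $g$ admits no nontrivial conformal Killing field compatible with the homogeneous version of \eqref{eqCondVector}, the operator $\Delta_\bL$ with this boundary condition is invertible, and standard boundary $L^p$-theory gives $W \in W^{2,p}$ with $\|W\|_{W^{2,p}}$ controlled polynomially by $\|\phi\|_\infty$, $\|\tau\|_{W^{1,2p}}$, $\|\Theta_-\|_{W^{1,2p}}$, and $\|\xi\|_{W^{1-1/p,p}}$. The low-regularity estimate that drives the scheme, obtained by pairing \eqref{eqVector} with $W$, integrating by parts, substituting \eqref{eqCondVector}, and using Korn plus a Sobolev trace, should read
\[
\|\bL W\|_{L^2(M)} \leq C \|\phi\|_\infty^N \bigl(\|\dd\tau\|_{L^2} + \|\tau\|_{L^\infty} + \|\Theta_-\|_{L^\infty}\bigr) + C \|\xi\|_{L^{n-1}(\partial M)}.
\]
For the Lichnerowicz step, positivity of $\mathcal{E}(g)$ together with the favorable sign $\Theta_- \leq 0$ of the nonlinear Robin term produce a sub/supersolution pair (the subsolution being positive by non-triviality of $\sigma + \bL W$ or $\xi$), and monotone iteration yields a unique positive $W^{2,p}$-solution. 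A Moser iteration based on the Sobolev inequality underlying the Escobar invariant then gives
\[
\|\phi\|_\infty \leq C \bigl(1 + \|\sigma + \bL W\|_{L^2(M)}\bigr)^\alpha
\]
for some $\alpha \in (0, 1)$, in the spirit of \cite{cang1}.

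Combining these two a priori estimates produces a self-improving inequality for $\|\phi\|_\infty$ which, under the smallness assumptions on $\|\sigma\|_{L^2}$ and $\|\xi\|_{L^{n-1}}$, yields a uniform bound $\|\phi\|_\infty \leq R$. The set of $\phi$ satisfying this bound is closed, convex, $\Phi_{\sigma, \tau}$-invariant in $L^\infty_+(M)$, and nonempty thanks to the subsolution; compactness of $\Phi_{\sigma, \tau}$ on this set follows from the compact embedding $W^{2,p} \hookrightarrow L^\infty$ (valid for $p > n/2$ after one bootstrap), and the Schauder theorem provides a fixed point $\phi$. The pair $(\phi, \mathcal{W}_\tau(\phi))$ is then the sought solution in $W^{2,p} \times W^{2,p}$.

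The main obstacle, and the source of the dimensional restriction $3 \leq n \leq 5$, is the boundary elliptic estimate for the vector equation. In contrast with Nguyen's boundaryless case, \eqref{eqCondVector} forces $\bL W(\nu, \cdot)$ to depend on $\phi^N$, so the one-derivative control of $\bL W$ needed to upgrade $W$ to $W^{2,p}$ requires controlling a derivative of $\phi^N$ along $\partial M$. The scaling of the associated trace-Sobolev inequalities only closes when $n \leq 5$, and managing this tight interplay with the Lichnerowicz estimates is what constitutes the technical heart of the argument.
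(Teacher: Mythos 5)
Your overall architecture — a Schauder fixed point for a composition of Lichnerowicz and vector solve maps, with the dimensional restriction coming from the boundary bootstrap — is recognizable, but the key a priori estimate you invoke does not hold in the regularity regime of the theorem, and this breaks the scheme.

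You propose to run the fixed point in $L^\infty_+(M)$, with the invariant set cut out by a bound $\|\phi\|_\infty \leq R$, and the a priori estimate
$\|\phi\|_\infty \leq C\bigl(1 + \|\sigma + \bL W\|_{L^2(M)}\bigr)^\alpha$ ``via Moser iteration.''
This is exactly what cannot be obtained here. With $A := |\sigma + \bL W|$ controlled only in $L^2(M)$, so $A^2 \in L^1(M)$, the right-hand side of the Lichnerowicz equation is not in any $L^q$ with $q>1$, and Moser iteration (or De Giorgi/Nash, or a maximum principle with a global supersolution) cannot produce an $L^\infty$ bound on $\phi$. This is precisely the point of Nguyen's variant: one forgoes supremum bounds entirely. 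What one can extract, and what the paper proves (Proposition~\ref{propImprovedRegL2}), is the weaker estimate $\left\|\phi^{\fraccheloue}\right\|_{W^{1,2}(M,\bR)} \lesssim \|A\|_{L^2(M,\bR)}$, obtained by testing the weak form against $(\phibar^{N+1})_{,k}$, not by iteration. Correspondingly, the paper performs the fixed point not in $L^\infty_+(M)$ but in $W^{1,2}(M,TM)$, with the invariant set $\Omega_R = \{W : \int_M |\bL W|^2 \leq R^2\}$ (Proposition~\ref{propInvariant}); the estimate on $\omega$ and $\gamma$ that feeds the vector solve uses $\|\phi^N\|_{L^{\fraccheloue}}$ and $\|\phi^N\|_{L^q(\partial M)}$, not $\|\phi\|_\infty^N$. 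Your proposed vector-equation estimate with the factor $\|\phi\|_\infty^N$ would therefore also not close, since that quantity is not controlled.

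A secondary gap: you assert compactness of $\Phi_{\sigma,\tau}$ on your set via $W^{2,p} \hookrightarrow L^\infty$ ``after one bootstrap.'' But inside the fixed-point loop $A$ is only in $L^2$, so the Lichnerowicz solve returns only a $W^{1,2}$ function (a weak solution in the sense of \eqref{eqWeakForm0}), not a $W^{2,p}$ one; the $W^{2,p}$ regularity is recovered \emph{after} a fixed point is found, by the three-round bootstrap between Lemma~\ref{lmLich2}, Proposition~\ref{propLichStrong} and Proposition~\ref{propVector}, and it is there that the trace of $\phi^N$ on $\partial M$ and the resulting recursion on integrability exponents force $n \leq 5$. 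You correctly identify the source of the dimensional restriction, but the order of operations (bootstrap inside versus outside the fixed-point argument) matters, and the invariant-set/compactness argument as you wrote it would not run. To repair the proposal you would need to (i) shift the ambient space from $L^\infty_+(M)$ to $W^{1,2}(M,TM)$, (ii) replace the Moser-type $L^\infty$ estimate by the $W^{1,2}$-estimate on $\phi^{\fraccheloue}$, and (iii) move the regularity upgrade to $W^{2,p}\times W^{2,p}$ to a post-fixed-point bootstrap.
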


\subsection*{Outline of the paper}
In section \ref{secLichnerowicz}, we will focus on solving the Lichnerowicz equation \eqref{eqLichnerowicz} with the boundary condition \eqref{eqCondLich} after discussing beforehand some properties of the Escobar invariant. It will be done with a very low regularity context.

Section \ref{secVect} will be devoted to solving the vector equation \eqref{eqVector} with the boundary condition \eqref{eqCondVector}.

Finally, section \ref{secCoupled} will see the use of the Schauder fixed point theorem \ref{thmSchauder} to prove our main result \ref{thmMain}.

\subsection*{Acknowledgements} The author of this paper would like to thank Romain Gicquaud for his great help writing this paper and his great patience and generosity while reviewing preliminary versions of it.
\section{The Lichnerowicz equation}\label{secLichnerowicz}
In this section, we focus on solving for $\phi$ the Lichnerowicz equation
\eqref{eqLichnerowicz} with the boundary condition \eqref{eqCondLich}. To
keep the notation simple, we set $A \definedas |\sigma + \bL W|$.
Construction of solutions to the Lichnerowicz equation on compact manifolds with boundary has been treated in \cite{Dilts, HolstTsogtgerel}. We refer also the reader to the recent article of Sicca and Tsogtgerel \cite{sicca} for the study of the related problem of prescribed scalar and mean curvatures.

The main originality here is to deal with a very low regularity context, i.e. $A\in L^2 (M,\Rb)$ and so $A^2 \in L^1 (M,\Rb)$, which is outside the usual standard elliptic regularity theory. In this section, we will suppose $A\not\equiv 0$.

For convenience, we rewrite here the Lichnerowicz equation \eqref{eqLichnerowicz} with the apparent 
horizon bondary condition \eqref{eqCondLich} with our new notation:

\begin{subequations}\label{eqLichSys}
\begin{empheq}[left=\empheqlbrace]{align}
-\frac{4(n-1)}{n-2} \Delta \phi + \scal_g~\phi + \frac{n-1}{n} \tau^2 
\phi^{N-1} 
&= A^2 \phi^{-N-1}\qquad\text{on }M,\label{eqLichnerowiz2}\\
\frac{2(n-1)}{n-2} \partial_\nu \phi + H_g \phi &= \frac{\Theta_-}{2} 
\phi^{\frac{N}{2}}\qquad\text{on } \partial M.\label{eqCondLich2}
\end{empheq}
\end{subequations}

Let $\dsp{p > \frac{n}{2}}$ be given. The regularity assumptions we will make in this section are the following:
\begin{itemize}
 \item $g \in W^{2, p}(M, S_2M)$,
 \item $\tau \in L^\infty(M, \bR)$,
 \item $H_g \in W^{1-\frac{1}{p}, p}(\partial M, \bR)$,
 \item $\Theta_- \in W^{1-\frac{1}{p}, p}(\partial M, \bR) 
\cap L^\infty(\partial M, \bR)$.
\end{itemize}

Note that, from the Sobolev embedding theorem (see \cite{LeoniSobolev}), we have
$H_g \in L^q(\partial M, \bR)$, with $\dsp{\frac{1}{q} = \frac{1}{p} - 
\left(1-\frac{1}{p}\right) \frac{1}{n-1}}$,
so, in particular, $q > n-1$.

\subsection{The Escobar invariant}\label{secEscobar}
As the Sobolev and trace inequalities will play an important role in what
follows, we recall them here (see e.g. \cite{LeoniSobolev}):
\begin{itemize}
\item \textsc{Sobolev inequality:} there exists a constant
$s_0 = s_0(M, g)$ so that, for all $u \in W^{1, 2}(M, \bR)$, we have
\[
 \|u\|_{L^N(M, \bR)}^2 \leq s_0 \|u\|_{W^{1, 2}(M, \bR)}^2,
\]
\item \textsc{Trace inequality:} there exists a continuous operator
$W^{1, 2}(M, \bR) \to L^{\fraccheloue}(\partial M, \bR)$ that coincide with the 
classical
restriction operator $u \mapsto u\vert_{\partial M}$ for $C^1$-functions.
As a consequence, there exists a constant $t_0$ such that, for all
$u \in W^{1, 2}(M, \bR)$, we have
\[
 \|u\|_{L^{\fraccheloue}(\partial M, \bR)}^2 \leq t_0 \|u\|_{W^{1, 2}(M, \bR)}^2.
\]
\end{itemize}

We remind the reader that the Escobar invariant was given in \eqref{eqEscobar}.
Due to our regularity requirements, this invariant is well-defined.
As a shorthand, we will denote by $Q_g(u)$ the quadratic form that appears as 
the numerator:
\begin{equation}\label{eqDefQuadrForm}
 Q_g(u) \definedas \frac{1}{2} \int_M \left[\frac{4(n-1)}{n-2} |\dd u|^2 + 
\scal_g u^2\right] \dd\mu^g + \int_{\partial M} H_g u^2 \dd\mu^g.
\end{equation}

We also let
\begin{equation}\label{eqEigenvalue}
 \lambda_0(M, g) \definedas \inf_{u \in W^{1, 2}(M, \bR), u \not \equiv 0} 
\frac{\dsp{\frac{1}{2} \int_M \left[\frac{4(n-1)}{n-2} |\dd u|^2 + \scal_g~u^2\right] 
\dd\mu^g + \int_{\partial M} H_g u^2 \dd\mu^g}}{\dsp{\int_M u^2 \dd\mu^g}}
\end{equation}
denote the first eigenvalue $\lambda$ for the conformal Laplacian with Robin 
boundary condition:
\[
\left\lbrace
\begin{aligned}
 -\frac{4(n-1)}{n-2} \Delta u + \scal_g~u &= \lambda u \qquad\text{on }M,\\
 \frac{2(n-1)}{n-2} \partial_\nu u + H_g u &= 0 \qquad\text{on }\partial M.
\end{aligned}
\right.
\]

As $(M, g)$ has finite volume, we have, for all $u \in L^N(M, \bR)$,
\[
 \int_M u^2 \dd\mu^g \leq \vol(M, g)^{2/n} \left(\int_M u^N \dd\mu^g\right)^{2/N}.
\]
This implies that, assuming that $\cE(M, g) > 0$,
\begin{align}\label{eqEigenvalueIneq}
 \lambda_0(M, g) &\geq \inf_{u \in W^{1, 2}(M, \bR), u \not \equiv 0} 
\frac{Q_g(u)}{\dsp{\vol(M, g)^{2/n} \left(\int_M u^N \dd\mu^g\right)^{2/N}}}\nonumber\\
 &= \frac{1}{\vol(M, g)^{2/n}} \cE(M, g) > 0.
\end{align}

\begin{lemma}\label{lmEscobar}
We have the equivalence
\[
 \cE(M, g) > 0 \Leftrightarrow Q_g\text{ is coercive}.
\]
\end{lemma}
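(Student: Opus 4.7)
The easy direction $(\Leftarrow)$ follows in one line: if $Q_g(u) \geq c\|u\|_{W^{1,2}}^2$ for some $c>0$, then combining with the Sobolev inequality $\|u\|_{L^N}^2 \leq s_0\|u\|_{W^{1,2}}^2$ yields $Q_g(u) \geq (c/s_0) \|u\|_{L^N}^2$, so $\cE(M,g) \geq 2c/s_0 > 0$ by the definition \eqref{eqEscobar}.

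For the direction $(\Rightarrow)$ the plan is to argue by contradiction. Assuming $\cE(M,g) > 0$ but $Q_g$ fails to be coercive, one extracts a sequence $(u_k) \subset W^{1,2}(M, \bR)$ with $\|u_k\|_{W^{1,2}} = 1$ and $Q_g(u_k) \to 0$. The Escobar inequality $Q_g(u_k) \geq \tfrac{1}{2}\cE(M,g)\|u_k\|_{L^N}^2$ forces $\|u_k\|_{L^N} \to 0$, hence by H\"older on the compact $M$ also $\|u_k\|_{L^2} \to 0$. The goal is then to show that $\|du_k\|_{L^2} \to 0$ as well, producing a contradiction with $\|u_k\|_{W^{1,2}}=1$. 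Rearranging the definition \eqref{eqDefQuadrForm} of $Q_g$ gives
\[
\frac{2(n-1)}{n-2}\|du_k\|_{L^2}^2 = Q_g(u_k) - \frac{1}{2}\int_M \scal_g\, u_k^2 \dd\mu^g - \int_{\partial M} H_g\, u_k^2 \dd\mu^g,
\]
so it suffices to prove that both curvature integrals vanish in the limit.

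For the interior integral, the hypothesis $g \in W^{2,p}$ with $p > n/2$ gives $\scal_g \in L^p(M)$, and H\"older yields $|\int_M \scal_g u_k^2| \leq \|\scal_g\|_{L^p}\|u_k\|_{L^{2p/(p-1)}}^2$. Since $p > n/2$ forces $2p/(p-1) < N$, Lyapunov interpolation between $L^2$ and $L^N$, using $\|u_k\|_{L^N}$ bounded (by Sobolev) together with $\|u_k\|_{L^2}\to 0$, drives this term to zero. The boundary integral is handled analogously: $H_g \in L^q(\partial M)$ with $q > n-1$, and the trace $W^{1,2}(M) \hookrightarrow L^{2(n-1)/(n-2)}(\partial M)$ is continuous and compact into $L^r(\partial M)$ for every $r < 2(n-1)/(n-2)$. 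Because $q > n-1$ gives $2q/(q-1) < 2(n-1)/(n-2)$, the same interpolation scheme applies, using this time the compactness of the trace into $L^2(\partial M)$ combined with boundedness in $L^{2(n-1)/(n-2)}(\partial M)$.

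The main obstacle is the low regularity of the curvature data: since neither $\scal_g$ nor $H_g$ is assumed in $L^\infty$, the standard coercivity proof via pointwise control of the zeroth-order coefficients is unavailable and has to be replaced by this subcritical interpolation trick, which exploits the gap between the integrability exponents of $\scal_g,H_g$ and the critical Sobolev and trace exponents.
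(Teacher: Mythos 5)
Your plan is correct in outline, but it follows a genuinely different route from the paper and leaves one step unjustified.

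On the route: you argue globally by contradiction, placing $u_k$ on the unit $W^{1,2}$-sphere with $Q_g(u_k)\to 0$, using the Escobar bound $Q_g(u_k)\geq\tfrac12\cE(M,g)\|u_k\|_{L^N}^2$ to force $\|u_k\|_{L^N(M)}\to 0$ (hence $\|u_k\|_{L^2(M)}\to 0$ by H\"older), and then driving $\|\dd u_k\|_{L^2(M)}\to 0$. The paper instead first establishes the unconditional subcoercivity estimate $Q_g(u)\geq\|u\|^2_{W^{1,2}(M,\bR)}-C\|u\|^2_{L^2(M,\bR)}$ --- by the same H\"older/Young manipulation for $\scal_g$ that you use, and by a contradiction argument isolated to the trace estimate \eqref{eqEstimate2} --- and then removes the $L^2$ defect by the convex-combination trick $Q_g=\alpha Q_g+(1-\alpha)Q_g$ together with $\lambda_0(M,g)>0$. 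The paper's version is constructive (it yields an explicit coercivity constant $\alpha$, and the trace estimate \eqref{eqEstimate2} is reusable elsewhere), whereas your version is shorter and invokes $\cE>0$ directly instead of passing through the Robin eigenvalue.

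The gap: for the boundary term you write that ``compactness of the trace into $L^2(\partial M)$'' drives $\|u_k\|_{L^2(\partial M,\bR)}$ to zero, but compactness alone only produces a convergent subsequence; it does not identify the limit. What is missing is the observation that $(u_k)$ bounded in $W^{1,2}(M,\bR)$ together with $u_k\to 0$ in $L^2(M,\bR)$ forces $u_k\rightharpoonup 0$ weakly in $W^{1,2}(M,\bR)$ (every weak subsequential limit is $0$), so by continuity of the trace the boundary values converge weakly to zero in $L^2(\partial M,\bR)$; combined with strong subsequential convergence from compactness, this identifies the strong limit as zero, and a subsequence-of-subsequence argument upgrades it to the whole sequence. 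This identification is precisely what the paper spends its effort on in the proof of \eqref{eqEstimate2}, and it is the technical heart of the low-regularity boundary term (where $H_g$ is only $L^q$), so it should be spelled out rather than subsumed under ``the same interpolation scheme applies.''
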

It should be noted that coercivity for quadratic forms (on $W^{1, 2}(M,\Rb)$) is 
equivalent to the following : there exists $\varepsilon > 0$ so that, for all $u 
\in W^{1, 2}(M, \bR)$,
we have
\[
 Q_g(u) \geq \varepsilon \|u\|^2_{W^{1, 2}(M, \bR)}.
\]

\begin{proof}
The proof is based on applications of the H\"older inequality. We write
\begin{align*}
 \int_M \scal_g~u^2 \dd\mu^g
  &\geq - \int_M |\scal_g|~|u|^\gamma |u|^{2-\gamma} \dd\mu^g\\
  &\geq -\|\scal_g\|_{L^p(M, \bR)} \| |u|^\gamma\|_{L^a(M, \bR)} \| 
|u|^{2-\gamma}\|_{L^b(M, \bR)}\\
  &= -\|\scal_g\|_{L^p(M, \bR)} \|u\|_{L^{\gamma a}(M, \bR)}^\gamma 
\|u\|_{L^{(2-\gamma)b}(M, \bR)}^{2-\gamma}.
\end{align*}
with $\gamma, a, b$ constant so that $a, b \in [1, \infty]$ satisfy
\[
 \frac{1}{p} + \frac{1}{a} + \frac{1}{b} = 1.
\]
Choosing $\dsp{\gamma = 2 - \frac{n}{p} \in (0, 2)}$, $\dsp{a = \frac{2}{\gamma}}$ and $\dsp{b = \frac{N}{2-\gamma}}$,
we get
\begin{equation}\label{eqEstimate1}
\begin{aligned}
 \int_M \scal_g~u^2 \dd\mu^g
 &= -\|\scal_g\|_{L^p(M, \bR)} \|u\|_{L^2(M, \bR)}^\gamma \|u\|_{L^N(M, 
\bR)}^{2-\gamma}\\
 &\geq - \varepsilon_1 \|u\|_{L^N(M, \bR)}^2 - c_1 \|u\|_{L^2(M, \bR)}^2,
\end{aligned}
\end{equation}
where we used Young's inequality with an arbitrary parameter $\varepsilon_1 > 0$. 
The value of $c_1$ is explicit (and depends on $\varepsilon_1$) but will not be
useful in what follows.

Similarly, we have
\[
 \int_{\partial M} H_g u^2 \dd\mu^g \geq - \|H_g\|_{L^q(\partial M, \bR)} 
\|u\|_{L^r(\partial M, \bR)}^2,
\]
with $r$ such that $\dsp{\frac{1}{q} + \frac{2}{r} = 1}$ so
$\dsp{r \in \left(2, \frac{N}{2}+1\right)}$. We claim that for any $\varepsilon_2 > 0$,
there exists a constant $c_2$ such that
\begin{equation}\label{eqEstimate2}
 \|u\|_{L^r(\partial M, \bR)}^2 \leq \varepsilon_2 \|u\|_{W^{1, 2}(M, \bR)}^2 + 
c_2 
\|u\|_{L^2(M, \bR)}^2.
\end{equation}
The proof goes by contradiction. Assume that there exists an $\varepsilon_2 > 0$
so that \eqref{eqEstimate2} fails for all $c_2 > 0$. There exists a
sequence $(u_k)_k$ of non-zero functions satisfying, for all $k\in\mathbb{N}$,
\begin{equation}\label{eqEstimate3}
 \|u_k\|_{L^r(\partial M, \bR)}^2 \geq \varepsilon_2 \|u_k\|_{W^{1, 2}(\partial M, 
\bR)}^2 + k \|u_k\|_{L^2(M, \bR)}^2.
\end{equation}
Multiplying each $u_k$ by some constant, we can assume that
$\|u_k\|_{L^r(\partial M, \bR)} = 1$ for all $k$. We then have
\[
 \|u_k\|_{W^{1, 2}(M, \bR)}^2 \leq \varepsilon_2^{-1},
\]
so $(u_k )_k$ is bounded in $W^{1,2}(M,\Rb)$.
Since the embedding $W^{1, 2}(M,\bR) \into L^2(M, \bR)$ is compact, as
is the trace operator $W^{1, 2}(M, \bR) \to L^r(\partial M, \bR)$, upon extracting a subsequence, we can assume that
$(u_k)_k$ converges weakly in $W^{1, 2}(M, \bR)$ to some function $u_\infty$,
strongly in $L^2(M, \bR)$ to $v_\infty$ and the traces $u_k\vert_{\partial M}$
converge strongly in $L^r(\partial M, \bR)$ to $w_\infty$.

For any $f \in L^2(M, \bR)$, we have
\[
 \int_M f u_k \dd\mu^g \to \int_M f v_\infty \dd\mu^g,
\]
but, as $\dsp{v \mapsto \int_M f v \dd\mu^g}$ is a linear form on $W^{1, 2}(M, \bR)$,
we also have 
\[
 \int_M f u_k \dd\mu^g \to \int_M f u_\infty \dd\mu^g.
\]
As this is true for any $f$, this shows that $u_\infty \equiv v_\infty$.
From \eqref{eqEstimate3}, it follows that $\dsp{\|u_k\|_{L^2(M, \bR)}^2 \leq \frac{1}{k}}$
so $u_\infty \equiv 0$. As $v \mapsto \int_{\partial M} w_\infty v \dd\mu^g$ is
a continuous linear form on $W^{1, 2}(M, \bR)$, we have
\[
 0 = \int_{\partial M} w_\infty u_\infty \dd\mu^g = \lim_{k \to \infty} 
\int_{\partial M} w_\infty u_k \dd\mu^g = \int_{\partial M} w_\infty^2 \dd\mu^g.
\]
This shows that $w_\infty \equiv 0$. But, as $\|u_k\|_{L^r(\partial M, \bR)} = 
1$,
we have $\|w_\infty\|_{L^r(\partial M, \bR)} = 1$. This gives the desired
contradiction. Then, there exists a constant $c_2$ such that \eqref{eqEstimate1} is satisfied, for any $u\in W^{1,2}(M,\bR)$.\\

Inserting \eqref{eqEstimate2} and \eqref{eqEstimate3} in the definition of 
$Q_g$, we get that
\begin{align*}
 Q_g(u)
  &\geq \frac{2(n-1)}{n-2} \int_M |\dd u|^2 \dd\mu^g - \left(\frac{\varepsilon_1}{2} + \varepsilon_2 \|H_g\|_{L^q(\partial M, \bR)} \right) \|u\|_{W^{1, 2}(M, \bR)}^2\\
  &\qquad\qquad - \left(\frac{c_1}{2} + \|H_g\|_{L^q(\partial M, \bR)} c_2 \right) \|u\|_{L^2(M, 
\bR)}^2.
\end{align*}
As $\varepsilon_1$ and $\varepsilon_2$ can be chosen as close to zero as we want,
we can assume that
\[
 \frac{2(n-1)}{n-2} - \left(\frac{\varepsilon_1}{2} + \varepsilon_2 \|H_g\|_{L^q(\partial M, \bR)} \right)
 \geq 1.
\]
Thus, we have proven that there exists a constant $C$ such that
\begin{equation}\label{eqEstimate4}
 Q_g(u) \geq \|u\|_{W^{1, 2}(M, \bR)}^2 - C \|u\|_{L^2(M, \bR)}^2.
\end{equation}
Finally, we introduce a parameter $\alpha \in (0, 1)$ and write
\begin{align*}
 Q_g(u) &= \alpha Q_g(u) + (1-\alpha) Q_g(u)\\
 & \geq \alpha \left(\|u\|_{W^{1, 2}(M, \bR)}^2 - C \|u\|_{L^2(M, 
\bR)}^2\right) 
+ (1-\alpha) \lambda_0(M, g) \|u\|_{L^2(M, \bR)}^2.
\end{align*}
Choosing $\dsp{\alpha = \frac{\lambda_0(M, g)}{C + \lambda_0(M, g)} > 0}$, so
that $\alpha C = (1-\alpha) \lambda_0(M, g)$, we finally get
that, for all $u \in W^{1, 2}(M, \bR)$,
\[
 Q_g(u) \geq \alpha \|u\|_{W^{1, 2}(M, \bR)}^2.
\]
This is the desired coercivity estimate.
\end{proof}

\begin{proposition}\label{propEigenFunction}
Assume that $\cE(M, g) > 0$.
The space of functions $u \in W^{1, 2}(M, \bR)$ such that $Q_g(u) = 
\lambda_0(M, 
g) \|u\|^2_{L^2(M,  \bR)}$
is one dimensional. It is generated by a (strictly) positive function
$u_0 \in W^{2, p}(M, \bR)$ such that
\begin{equation}\label{eqEigenfunction}
\left\lbrace
\begin{aligned}
 -\frac{4(n-1)}{n-2} \Delta u_0 + \scal_g~u_0 &= \lambda_0(M, g) u_0 \qquad\text{on }M,\\
 \frac{2(n-1)}{n-2} \partial_\nu u_0 + H_g u_0 &= 0 \qquad\text{on }\partial M.
\end{aligned}
\right.
\end{equation}
\end{proposition}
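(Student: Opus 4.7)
The plan is to construct $u_0$ by the direct method of the calculus of variations applied to the Rayleigh quotient defining $\lambda_0(M, g)$, then to upgrade regularity, establish positivity, and finally to deduce simplicity of the eigenspace.

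First I would pick a minimizing sequence $(u_k)_k \subset W^{1, 2}(M, \bR)$ normalized by $\|u_k\|_{L^2(M, \bR)} = 1$ with $Q_g(u_k) \to \lambda_0(M, g)$. The coercivity proved in Lemma \ref{lmEscobar}, available thanks to $\cE(M, g) > 0$, produces a uniform $W^{1, 2}$-bound on this sequence. Rellich--Kondrachov together with the compactness of the trace operator (already exploited in the proof of Lemma \ref{lmEscobar}) yield a subsequence converging weakly in $W^{1, 2}(M, \bR)$ to some $u_0$, strongly in $L^2(M, \bR)$, and whose traces converge strongly in $L^r(\partial M, \bR)$ for any $r$ with $2 \leq r < \frac{N}{2}+1$. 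The gradient term in $Q_g$ is weakly lower semi-continuous on $W^{1, 2}(M, \bR)$ while the scalar- and mean-curvature terms pass to the limit by these strong convergences, so $Q_g(u_0) \leq \lambda_0(M, g)$, and $\|u_0\|_{L^2} = 1$ forces equality; thus $u_0$ realizes the infimum.

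The Lagrange multiplier rule then identifies $u_0$ as a weak solution of \eqref{eqEigenfunction} with eigenvalue $\lambda_0(M, g)$. I would upgrade its regularity by a standard elliptic bootstrap for the Robin problem: $u_0 \in W^{1, 2} \hookrightarrow L^N$ combined with $\scal_g \in L^p$ (coming from $g \in W^{2, p}$ with $p > n/2$) places the right-hand side of the equation in successively better Lebesgue spaces, and the $L^p$-theory for elliptic boundary value problems with $W^{2, p}$ coefficients and $W^{1-1/p, p}$ Robin data (matching the assumed regularity of $H_g$) eventually delivers $u_0 \in W^{2, p}(M, \bR)$. Since $|u_0|$ has the same Dirichlet energy and $L^2$-norm as $u_0$, I may replace $u_0$ by $|u_0|$ and assume $u_0 \geq 0$. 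Then, rewriting the equation as $-\Delta u_0 + c(x) u_0 = 0$ with $c \in L^p$ and a Robin boundary condition, the strong maximum principle together with a Hopf-type boundary point lemma force $u_0 > 0$ on all of $\Mbar$.

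For simplicity, the self-adjoint operator associated to $Q_g$ has compact resolvent (again by coercivity of $Q_g$ and Rellich--Kondrachov), so the $\lambda_0$-eigenspace is finite-dimensional and every minimizer of the Rayleigh quotient belongs to it. The positivity argument above applies to every non-zero element of this eigenspace, so up to sign each such element is strictly positive on $\Mbar$. If the eigenspace had dimension at least two, I could pick $v$ in it that is $L^2$-orthogonal to $u_0$; then $\int_M u_0 v \dd\mu^g = 0$ with $u_0 > 0$ would force $v$ to change sign, contradicting that $\pm v$ must be strictly positive. Hence the eigenspace is one-dimensional and generated by $u_0$. The main obstacle I anticipate is the low-regularity elliptic bootstrap: with $\scal_g$ only in $L^p$ and $H_g$ only in a fractional trace space, Schauder estimates are unavailable and one must iterate $L^p$-estimates carefully, relying on the compact Sobolev embeddings, in order to climb all the way up to $W^{2, p}$ while properly handling the Robin boundary condition.
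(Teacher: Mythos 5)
Your proof is correct and follows essentially the same direct-method strategy as the paper for existence, regularity and positivity: minimizing sequence, coercivity from Lemma \ref{lmEscobar}, weak $W^{1,2}$ and strong $L^2$/trace compactness, Euler--Lagrange, $L^p$-elliptic bootstrap for the Robin problem, and the strong maximum principle with a Hopf-type boundary lemma. (The paper leans on convexity of $Q_g$ for weak lower semicontinuity, while you split into a weakly l.s.c.\ gradient term and strongly converging lower-order terms; both are fine.)

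Where you genuinely diverge is the one-dimensionality argument. The paper's route: given another minimizer $v$ independent of $u_0$, rescale so $-u_0 \le v \le u_0$, slide $v_t = v + t u_0$ until $t_0 = \min\{t : v_t \ge 0\}$, and observe that $v_{t_0}$ is a non-negative, non-trivial eigenfunction vanishing somewhere, contradicting the strong maximum principle. Your route: observe that for any eigenfunction $v$, $|v|$ also minimizes the Rayleigh quotient, hence (by regularity and the maximum principle) $|v|>0$ everywhere, so $v$ never vanishes and has constant sign on (connected) $M$; then two independent eigenfunctions, one chosen $L^2$-orthogonal to $u_0 > 0$, cannot both have constant sign. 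Both are standard Perron--Frobenius-type arguments. Yours passes through the (easily established but not strictly needed) observation that the $\lambda_0$-eigenspace is finite-dimensional, and the phrase ``the positivity argument above applies\dots so up to sign each such element is strictly positive'' deserves the one extra sentence that $|v|>0$ forces $v$ never to vanish, hence to have a definite sign; with that remark made explicit, your simplicity argument is complete. The paper's sliding argument has the small advantage of never needing orthogonality or compact-resolvent facts, only the maximum principle; yours has the advantage of being perhaps more familiar from the spectral theory of Schr\"odinger operators.
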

Remark that this proposition remains true without assuming $\cE(M, g) > 0$.
As this will not be needed in the sequel, we restrict to this slightly
simpler case.

\begin{proof}
By the definition of $\lambda_0(M, g)$, there exists a sequence of
functions $u_k \not\equiv 0$ such that
$$
 \frac{Q_g(u_k)}{\|u_k\|^2_{L^2(M, \bR)}} \to \lambda_0(M, g).
$$
Without loss of generality, we can assume that $\|u_k\|^2_{L^2(M, \bR)} = 1$.
And, since $Q_g(|u_k|) = Q_g(u_k)$, we can also assume that $u_k \geq 0$.
By the coercivity of $Q_g$ (see Lemma \ref{lmEscobar}), we get that the
sequence $(u_k)_k$ is bounded in $W^{1, 2}(M, \bR)$. As a consequence,
we can assume, without loss of generality that $(u_k)_k$ converges
to a certain $u_\infty \in W^{1, 2}(M, \bR)$ strongly in $L^2(M, \bR)$
and weakly in $W^{1, 2}(M, \bR)$. This implies that
$\|u_\infty\|_{L^2(M, \bR)} = 1$ and, as $Q_g$ is convex,
$Q_g(u_\infty) \leq \liminf_{k \to \infty} Q_g(u_k) = \lambda_0(M, g)$. By definition of $\lambda_0(M, g)$, we have $Q_g(u_\infty) = \lambda_0(M, g)$.

For each $v \in W^{1, 2}(M, \bR)$, the mapping
$$
 t \mapsto Q_g(u_\infty + tv) - \lambda_0 (M,g) \|u_\infty + tv\|^2_{L^2(M, \bR)}
$$
is differentiable and reaches its minimum value at $t=0$. This shows that $u_\infty$
solves
\begin{equation}\label{eqEigenFunctionWeak}
0 = \int_M \left[\frac{4(n-1)}{n-2} \<du_\infty, dv\> + \scal_g~u_\infty v - 
\lambda_0 u_\infty v\right] \dd\mu^g + 2 \int_M H_g u_\infty v \dd\mu^g,
\end{equation}
which is the weak formulation of Equation \eqref{eqEigenfunction}.
By elliptic regularity, we have that $u_\infty$ belongs to $W^{2, p}(M,\bR)$ and 
solves \eqref{eqEigenfunction} in the strong sense (see \cite[Lemma 4.5]{sicca} for the detail of the proof, see also Proposition \ref{propLichStrong} for a similar argument). From \cite[Lemma 
B.7]{HolstTsogtgerel}, we conclude that $u_\infty$ is continuous and strictly 
positive.

Assume now that there exists a function $v \in W^{1, 2}(M, \bR)$ linearly
independent of $u$ such that $Q_g(v) = \lambda_0(M, g) \|v\|_{L^2(M, \bR)}$.
By the same reasoning, we have that $v \in W^{2, p}(M, \bR)$ satisfies
\eqref{eqEigenfunction}. As $u_\infty$ and $v$ are both continuous, we
can assume, by rescaling $v$ that $-u_\infty \leq v \leq u_\infty$.
Consider the family of functions $v_t = v+tu_\infty$. If $t \leq -1$,
we have $v_t \leq 0$ and if $t \geq 1$, $v_t \geq 0$. We set
$t_0 = \min \{t, v_t \geq 0\}$. Then $v_{t_0} \geq 0$ and there exists
a point $x \in M$ such that $v_{t_0}(x) = 0$. As $v_{t_0} \not\equiv 0$
(since $u_\infty$ and $v$ are linearly independent), this gives a
contradiction with the strong maximum principle
\cite[Lemma B.7]{HolstTsogtgerel}.
\end{proof}

The interest of the Escobar invariant is its conformal invariance, i.e. if $g$ and $g_0$ are two conformally equivalent metrics, we get $\mathcal{E}(M,g) =\mathcal{E}(M,g_0)$. Seek \cite[Lemma 2.3]{sicca} for more details. Later on, we will only need the special case $g_0 := u_0^{N-2} g$ (for example in the proof of Proposition \ref{propImprovedRegL2}), which we prove below.  
By using the relation \eqref{hgtilde} and the following, proved in particular in \cite[Chapter 1, Section 7.2]{chow2006}:

\begin{equation}    
    \scal_{\tilg}=\left(-\frac{4(n-2)}{n-1}\Delta_g \varphi +\scal_g \varphi\right)\varphi^{1-N}
    \label{transconfscal}
    \end{equation}

Equations \eqref{eqEigenfunction} give us
\begin{equation}\label{eqConfTransf}
 \scal_{g_0} \equiv \lambda_0 (M,g) u_0^{2-N}, \qquad H_{g_0} \equiv 0.
\end{equation}

Finally, for any $\ubar \in W^{1, 2}(M, \bR)$, we compute
\begin{align*}
Q_g(u_0 \ubar)
 &= \frac{1}{2} \int_M \left[\frac{4(n-1)}{n-2} |\dd(u_0 \ubar)|_g^2 + 
\scal_g~(u_0 \ubar)^2\right] \dd\mu^g + \int_{\partial M} H_g (u_0 \ubar)^2 
\dd\mu^g\\
 &= \frac{1}{2} \int_M \left[\frac{4(n-1)}{n-2} \left(u_0^2 |\dd\ubar|_g^2 + 
\<u_0 
\dd u_0, \dd(\ubar)^2\> + \ubar^2 |\dd u_0|^2_g\right) + \scal_g~(u_0 \ubar)^2\right] 
\dd\mu^g\\
 &\qquad + \int_{\partial M} H_g (u_0 \ubar)^2 \dd\mu^g\\
  &= \frac{1}{2} \int_M \left[\frac{4(n-1)}{n-2} \left(u_0^2 |\dd\ubar|_g^2 - 
\ubar^2 u_0 \Delta_g u_0\right) + \scal_g~(u_0 \ubar)^2\right] \dd\mu^g\\
 &\qquad + \int_{\partial M} \left[\frac{2(n-1)}{n-2} \ubar^2 u_0 \partial_\nu 
u_0 + H_g (u_0 \ubar)^2\right] \dd\mu^g\\
 &= \frac{1}{2} \int_M \left[\frac{4(n-1)}{n-2} u_0^2 |\dd\ubar|_g^2 + 
\lambda_0(M, g) u_0^2 \ubar^2 \right] \dd\mu^g\\
 &= \frac{1}{2} \int_M \left[\frac{4(n-1)}{n-2} |\dd\ubar|_{g_0}^2 + \lambda_0(M, 
g) u_0^{2-N} \ubar^2 \right] \dd\mu^{g_0}.\\
\end{align*}
Hence, we have proven that
\begin{equation}\label{eqConfTransQ}
 Q_g(u_0 \ubar) = Q_{g_0}(\ubar).
\end{equation}

\subsection{Solving the Lichenrowicz equation}\label{secLich}
Our first aim in this section is to prove the following result:

\begin{proposition}\label{propLich}
Assume $A \in L^2(M, \bR)$, $A \not\equiv 0$ is given and $g$ satisfies $\mathcal{E}(M,g)>0$. Then there exists a
unique positive function $\phi \in W^{1, 2}(M, \bR)$ satisfying the following
weak formulation of \eqref{eqLichSys}: for all $\psi \in W^{1, 2}(M, \bR) \cap
L^\infty(M, \bR)$,
\begin{equation}\label{eqWeakForm0}
\begin{aligned}
0 &= \int_M \left[\frac{4(n-1)}{n-2} \<\dd\phi, \dd\psi\> + \scal_g~\phi\psi 
+ 
\frac{n-1}{n} \tau^2 \phi^{N-1} \psi - \frac{A^2 \psi}{\phi^{N+1}}\right] 
\dd\mu^g\\
  &\qquad + 2 \int_{\partial M} \left[H_g \phi - \frac{\Theta_-}{2} 
\phi^{\frac{N}{2}} 
\right]\psi \dd\mu^g.
\end{aligned}
\end{equation}
Furthermore, $\phi$ is uniformly bounded from below and the mapping $A \mapsto \phi$ 
is continuous as a map from $L^2(M, \bR)$ to $W^{1, 2}(M, \bR)$.
\end{proposition}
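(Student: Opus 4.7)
The approach is variational. Introduce on $W^{1,2}(M, \bR)$ the functional
\[
F(\phi) \definedas Q_g(\phi) + \frac{n-1}{nN} \int_M \tau^2 |\phi|^N \dd\mu^g + \frac{1}{N} \int_M \frac{A^2}{|\phi|^N} \dd\mu^g - \frac{1}{\frac{N}{2}+1} \int_{\partial M} \Theta_- |\phi|^{\frac{N}{2}+1} \dd\mu^g,
\]
with the convention that $A^2/|\phi|^N = +\infty$ at points where $A \neq 0$ and $\phi = 0$. A direct computation of the first variation of $F$ produces precisely the left-hand side of the weak formulation \eqref{eqWeakForm0}, so strictly positive critical points of $F$ are exactly the sought solutions.

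For existence I apply the direct method. Since $\Theta_- \leq 0$, every nonlinear term in $F$ is nonnegative; combined with the coercivity of $Q_g$ from Lemma \ref{lmEscobar}, this yields $F(\phi) \geq \varepsilon \|\phi\|_{W^{1, 2}(M, \bR)}^2$, so $F$ is bounded below and has bounded sublevel sets. Evaluating at $\phi \equiv 1$ and using $A \in L^2(M, \bR)$ shows $\inf F < \infty$. A minimizing sequence $(\phi_k)_k$, which I may assume nonnegative by passing to $|\phi_k|$, converges along a subsequence weakly in $W^{1, 2}(M, \bR)$ to some $\phi_\infty \geq 0$, strongly in $L^2(M, \bR)$ and pointwise almost everywhere on $M$ by Rellich--Kondrachov, and in $L^r(\partial M, \bR)$ for every $r < \frac{N}{2} + 1$ by the compact trace embedding. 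Lower semicontinuity along this sequence follows from convexity plus weak convergence for $Q_g$, from Fatou's lemma for the two pointwise-a.e.\ convergent nonnegative interior terms, and from Fatou combined with the compact trace for the boundary term, so $\phi_\infty$ attains $\inf F$.

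The principal obstacle is to upgrade $\phi_\infty$ to a genuine weak solution of \eqref{eqWeakForm0} for every $\psi \in W^{1, 2}(M, \bR) \cap L^\infty(M, \bR)$, which requires a uniform pointwise lower bound on $\phi_\infty$. I would first derive the Euler--Lagrange identity for restricted variations $\psi = \phi_\infty \eta$ with $\eta \in L^\infty(M, \bR)$, for which $\phi_\infty(1 + t\eta)$ remains nonnegative for small $|t|$ and the singular $\phi^{-N}$ term is differentiable. This yields a weak equation of the form
\[
-\frac{4(n-1)}{n-2}\Delta \phi_\infty + V \phi_\infty = A^2 \phi_\infty^{-N-1}
\]
with $V \definedas \scal_g + \frac{n-1}{n}\tau^2 \phi_\infty^{N-2}$ and an $L^1$ right-hand side. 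Low-regularity elliptic theory available under $g \in W^{2, p}$, $p > n/2$, promotes $\phi_\infty$ to a continuous function on $\Mbar$, after which the strong maximum principle from \cite[Lemma B.7]{HolstTsogtgerel} upgrades $\phi_\infty \geq 0$ to $\phi_\infty > 0$ on the compact set $\Mbar$, delivering the uniform lower bound $\phi_\infty \geq c_0 > 0$. Once this is in hand, general variations $\psi \in W^{1,2} \cap L^\infty$ are admissible and \eqref{eqWeakForm0} is recovered.

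Uniqueness follows from strict convexity of $F$ on the cone of a.e.-positive functions: the coercivity of $Q_g$ makes it strictly convex, $\phi^N$ and $\phi^{-N}$ (with $N > 2$) are strictly convex on $(0, \infty)$, and $|\phi|^{\frac{N}{2}+1}$ is convex (since $\frac{N}{2} + 1 > 1$) multiplied by $-\Theta_- \geq 0$. For the continuity of $A \mapsto \phi$, given $A_k \to A$ in $L^2(M, \bR)$ the associated minimizers $\phi_k$ satisfy $F_{A_k}(\phi_k) \leq F_{A_k}(1)$ uniformly, so they are bounded in $W^{1, 2}(M, \bR)$ and admit a weak subsequential limit $\phitil$. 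Comparing $F_{A_k}(\phi_k) \leq F_{A_k}(\phi_*)$, where $\phi_*$ is the solution associated with $A$, and passing to the limit using $\|A_k^2 - A^2\|_{L^1(M, \bR)} \to 0$ together with the uniform lower bound on $\phi_k$ identifies $\phitil$ with $\phi_*$; convergence of the energy values combined with coercivity then upgrades weak convergence to strong convergence in $W^{1, 2}(M, \bR)$.
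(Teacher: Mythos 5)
Your functional $F$ coincides with the paper's $I_0$, and the broad variational strategy agrees. The essential difference — and the genuine gap — is that you omit the paper's $\mu$-regularization ($I_\mu$ with the singular term $A^2/(\phi+\mu)^N$), and without it the argument for the uniform positive lower bound does not close.

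The issue is a chicken-and-egg problem. To derive \eqref{eqWeakForm0} for all $\psi \in W^{1,2}(M,\bR)\cap L^\infty(M,\bR)$, you need to know that $\phi_\infty + \lambda\psi$ is admissible for the variation, which requires a uniform lower bound on $\phi_\infty$; to obtain the lower bound, you would like to run a comparison or maximum-principle argument, which needs the weak equation. Your proposed way out — restricted variations $\psi = \phi_\infty\eta$ followed by elliptic regularity and the strong maximum principle — has several unjustified steps. First, the restricted variation only gives the weak equation tested against functions of the form $\phi_\infty\eta$, which vanish where $\phi_\infty$ vanishes, so no information is obtained there; this is not a genuine weak formulation of the elliptic problem. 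Second, the claim that the right-hand side $A^2\phi_\infty^{-N-1}$ lies in $L^1(M,\bR)$ is unsubstantiated: finiteness of $F(\phi_\infty)$ only gives $A^2\phi_\infty^{-N}\in L^1(M,\bR)$, and the extra factor $\phi_\infty^{-1}$ can destroy integrability precisely where a lower bound is missing. Third, even if the right-hand side were in $L^1(M,\bR)$, elliptic regularity from $L^1$ data does not yield continuity of $\phi_\infty$ on $\Mbar$ — one needs $L^q$ data for some $q>n/2$ (cf.\ Proposition \ref{propLichStrong}) — so the invocation of \cite[Lemma B.7]{HolstTsogtgerel} is not available.

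The paper's regularization breaks the circle: for $\mu>0$ the singular term is dominated ($A^2(\phi+\mu)^{-N-1}\leq (2/\mu)^{N+1}A^2\in L^1(M,\bR)$), so the minimizer $\phi_\mu$ of $I_\mu$ over $C_{\mu/2}$ is shown in \eqref{eqWeakForm} to solve the full weak equation against all $\psi\in W^{1,2}(M,\bR)\cap L^\infty(M,\bR)$. This makes the subsolution comparison of Lemma \ref{lmMaxPrinciple} available (testing against truncations $(\phi_--\phi_\mu)_{0,k}$ and letting $k\to\infty$), which produces the $\mu$-independent bound $\phi_\mu\geq\phi_->0$ from the explicit subsolution of Lemmas \ref{lmSubSol}--\ref{lmSubSol2}. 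Only then does the paper pass $\mu\to 0$. If you want to avoid the regularization, you would still need a substitute for this mechanism to control $\phi_\infty$ from below; the elliptic-regularity shortcut you propose does not supply one. The strict convexity argument for uniqueness and the energy-comparison argument for continuity of $A\mapsto\phi$ are plausible once the lower bound is in hand, but they inherit the same gap, since both use the full Euler--Lagrange characterization of $\phi_\infty$.
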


Several other existence results for the Lichnerowicz equation on a compact manifold with boundary are stated in \cite{Dilts, HolstTsogtgerel}. Part of 
the ideas are borrowed from \cite{GicquaudNguyen,GicquaudSmallTT}. However, 
new arguments are needed to handle regularity at the boundary and develop a 
low regularity framework.\\

To prove the proposition, we introduce a family of functionals over $W^{1, 
2}(M, \bR)$ depending on a parameter $\mu \in [0, \infty)$:
\begin{equation}\label{eqDefImu}
\begin{aligned}
 I_\mu(\phi)
  &\definedas \int_M \left[\frac{1}{2} \left(\frac{4(n-1)}{n-2} |\dd\phi|^2 + 
\scal_g~\phi^2\right) + \frac{1}{N}\left(\frac{n-1}{n} \tau^2 \phi^N + 
\frac{A^2}{(\phi+\mu)^N} \right)\right] \dd\mu^g\\
  &\qquad + \int_{\partial M} \left[H_g \phi^2 - \frac{2}{N+2} \Theta_- 
\phi^{\fraccheloue} \right] \dd\mu^g\\
  & = Q_g(\phi) + \frac{1}{N}\int_M \left(\frac{n-1}{n} \tau^2 \phi^N + 
\frac{A^2}{(\phi+\mu)^N} \right) \dd\mu^g - \frac{2}{N+2} \int_{\partial M} 
\Theta_- \phi^{\fraccheloue} \dd\mu^g.
\end{aligned}
\end{equation}
It should be noted that $I_\mu$ is not defined over all $W^{1, 2}(M, \bR)$
but at least over the subset
\[
 C_{\mu/2} \definedas \{\phi \in W^{1, 2}(M, \bR), \phi \geq 
-\mu/2\text{ 
a.e.}\}
\]
(the full domain of definition is larger but this will not be
relevant here). $C_{\mu/2}$ is convex and closed subset as it can
be defined as
\[
 C_{\mu/2} = \bigcap_{\substack{f \in L^2(M, \bR),\\ f \geq 0~a.e.}} 
\left\{\phi \in W^{1, 2}(M, \bR), \int_M f \phi \dd\mu^g \geq 
-\frac{\mu}{2} 
\int_M f \dd\mu^g\right\}.
\]
The functional $I_0$ will play an important role as
its critical points will be solutions to equations \eqref{eqLichSys}.

We remark that $I_\mu(\phi) \geq Q_g(\phi)$ for all $\phi \in W^{1, 
2}(M, \bR)$.
As a consequence, from Lemma \ref{lmEscobar}, we conclude that $I_\mu$ is
coercive.

\begin{lemma}\label{lmSemicont}
For all $\mu > 0$ the functional $I_\mu$ is sequentially weakly lower
semicontinuous on $C_{\mu/2}$, i.e. for any sequence $(\phi_k)_k$
weakly converging to some $\phi_\infty$, we have
\[
 I_\mu(\phi_\infty) \leq \liminf_{k \to \infty} I_\mu(\phi_k).
\]
\end{lemma}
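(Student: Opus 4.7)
My plan is to split $I_\mu$ according to \eqref{eqDefImu} into its six summands and verify each one separately on $C_{\mu/2}$. Fix a sequence $\phi_k \rightharpoonup \phi_\infty$ weakly in $W^{1,2}(M, \bR)$ with all $\phi_k \in C_{\mu/2}$. Since $C_{\mu/2}$ is a strongly closed convex subset of $W^{1,2}$ (as the intersection of affine half-spaces associated to non-negative $L^2$ functions given in its definition), it is weakly closed, so $\phi_\infty \in C_{\mu/2}$. By the Rellich-Kondrachov theorem, $\phi_k \to \phi_\infty$ strongly in $L^q(M, \bR)$ for every $q < N$, and the traces converge strongly in $L^s(\partial M, \bR)$ for every $s < \tfrac{N}{2}+1 = \tfrac{2(n-1)}{n-2}$; up to a subsequence, $\phi_k \to \phi_\infty$ a.e.\ on $M$.

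The gradient term $\int_M |d\phi|^2 \dd\mu^g$ in $Q_g$ is a squared Hilbert norm, hence weakly lsc. The two lower order pieces of $Q_g$ are in fact weakly continuous: for $\int_M \scal_g \phi^2 \dd\mu^g$, Hölder yields
\[
 \left|\int_M \scal_g(\phi_k^2 - \phi_\infty^2) \dd\mu^g\right| \leq \|\scal_g\|_{L^p(M)} \|\phi_k^2 - \phi_\infty^2\|_{L^{p/(p-1)}(M)},
\]
and the hypothesis $p > n/2$ is precisely equivalent to $2p/(p-1) < N$, so the convergence in $L^{2p/(p-1)}(M)$ is strong and this term passes to the limit. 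An entirely parallel argument, using $H_g \in L^q(\partial M)$ with $q > n-1$ (equivalent to $2q/(q-1) < \tfrac{N}{2}+1$), handles $\int_{\partial M} H_g \phi^2 \dd\mu^g$.

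The three remaining terms involve critical Sobolev exponents or the singular potential, so strong convergence is no longer available; for these I combine convexity with strong continuity on $W^{1,2}$, which yields weak lsc by a standard result in convex analysis. The map $\phi \mapsto \int_M \tau^2 \phi^N \dd\mu^g$ is convex (as $x \mapsto x^N$ is convex for $N > 2$, interpreted appropriately on $C_{\mu/2}$) and strongly continuous via the critical Sobolev embedding $W^{1,2}(M) \hookrightarrow L^N(M)$ together with $\tau \in L^\infty$. The map $\phi \mapsto -\tfrac{2}{N+2}\int_{\partial M} \Theta_- \phi^{N/2+1} \dd\mu^g$ is convex, because $-\Theta_- \geq 0$ and $x \mapsto x^{N/2+1}$ is convex, and strongly continuous via the critical trace embedding. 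Finally, $\phi \mapsto \tfrac{1}{N}\int_M A^2/(\phi+\mu)^N \dd\mu^g$ is convex since $x \mapsto (x+\mu)^{-N}$ is convex on $(-\mu, \infty)$.

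The key step, and the main technical obstacle, is the strong continuity of this last singular term. Since $A^2$ only belongs to $L^1$, I cannot rely on Sobolev embeddings; instead I exploit the uniform bound $\phi+\mu \geq \mu/2$ available on $C_{\mu/2}$, which yields the pointwise domination $A^2/(\phi+\mu)^N \leq (2/\mu)^N A^2 \in L^1$. Combined with the a.e.\ convergence of a suitable subsequence from Rellich-Kondrachov, Lebesgue's dominated convergence theorem delivers strong continuity (equivalently, a Fatou argument directly on the nonnegative integrand gives the liminf inequality via a subsequence-of-subsequence argument). Summing the six contributions yields $I_\mu(\phi_\infty) \leq \liminf_{k\to\infty} I_\mu(\phi_k)$, completing the proof.
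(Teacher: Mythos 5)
Your proposal is correct and follows essentially the same route as the paper: the crux in both cases is the singular term $J_\mu$, handled via the uniform domination $A^2/(\phi_k+\mu)^N \leq (2/\mu)^N A^2 \in L^1$ on $C_{\mu/2}$, the a.e.\ convergence of a subsequence coming from the compact embedding $W^{1,2}\hookrightarrow L^2$, and dominated convergence. The only cosmetic difference is that you split $Q_g$ further, showing the scalar-curvature and $H_g$ pieces are actually weakly \emph{continuous} by H\"older plus subcritical Rellich, whereas the paper treats $Q_g$ as a whole via coercivity $\Rightarrow$ convexity $\Rightarrow$ weak lower semicontinuity; both are valid.
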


\begin{proof}
$Q_g$ is strongly continuous and coercive, hence convex. In particular, it
is weakly lower semicontinuous. The same is true for the maps
\[
 \phi \mapsto \frac{1}{N} \frac{n-1}{n} \int_M  \tau^2 \phi^N 
\dd\mu^g\quad\text{and}\quad \phi \mapsto - \frac{2}{N+2} \int_{\partial M} 
\Theta_- \phi^{\fraccheloue} \dd\mu^g.
\]
All that we need to check is that
\[
 J_\mu: \phi \mapsto \frac{1}{N}\int_M \frac{A^2}{(\phi+\mu)^N} \dd\mu^g
\]
is sequentially weakly semicontinuous. Assume by contradiction that there exists
a sequence $(\phi_k)_k$ in $C_{\mu/2}$ weakly converging to
$\phi_\infty$ such that
$J_\mu(\phi_k) \not\to_{k \to \infty} J_\mu(\phi_\infty)$.

Upon passing to a subsequence, we can assume that for some $\varepsilon > 0$,
we have $|J_\mu(\phi_k) - J_\mu(\phi_\infty)| \geq \varepsilon$.
We can also assume that $\phi_k \to \phi_\infty$ in $L^2(M, \bR)$, for all $k\geq 0$ (see the
proof of Lemma \ref{lmEscobar}). In particular, we have
$\phi_k \to \phi_\infty$ a.e.. Now remark that, for all $k \geq 0$, we have
\[
 \frac{A^2}{(\phi_k + \mu)^N} \leq \left(\frac{2}{\mu}\right)^N A^2.
\]
As a consequence, it follows from the dominated convergence theorem that
\[
 J_\mu(\phi_k) = \frac{1}{N}\int_M \frac{A^2}{(\phi_k+\mu)^N} \dd\mu^g 
\to_{k \to 
\infty} \frac{1}{N}\int_M \frac{A^2}{(\phi_\infty+\mu)^N} \dd\mu^g = 
J_\mu(\phi_\infty).
\]
This contradicts the assumption that
$|J_\mu(\phi_k) - J_\mu(\phi_\infty)| \geq \varepsilon$ and give the desired
contradiction: $J_\mu$ is sequentially weakly continuous.
\end{proof}

Since $I_\mu$ is sequentially weakly lower semicontinuous and coercive,
it admits a minimum $\phi_\mu$ over $C_{\mu/2}$. Since
$I_\mu(|\phi|) \leq I_\mu(\phi)$ for all $\phi \in W^{1, 2}(M, \bR)$, 
we can
assume that $\phi_\mu \geq 0$ a.e.

As we only assume that $A^2 \in L^1(M, \bR)$, $I_\mu$ is not differentiable
in the sense of Fréchet. Still, if $\phi \in C_{\mu/2}$
and if $\psi \in W^{1, 2}(M, \bR) \cap L^\infty(M, \bR)$, we have,
for all $\lambda$ small enough $\dsp{\left( |\lambda| \leq \frac{\mu}{4} 
\|\psi\|_{L^\infty (M,\Rb)}\right)}$,
\begin{align*}
&\frac{1}{N}\int_M \frac{A^2}{(\phi + \lambda \psi + \mu)^N} \dd\mu^g - 
\frac{1}{N}\int_M \frac{A^2}{(\phi+\mu)^N} \dd\mu^g\\
&\qquad = \frac{1}{N}\int_M A^2 \left[\frac{1}{(\phi + \lambda \psi + \mu)^N} 
- 
\frac{1}{(\phi + \mu)^N}\right] \dd\mu^g\\
&\qquad = -\lambda \int_M \int_0^1 \frac{\psi A^2}{(\phi + \theta\lambda 
\psi+\mu)^{N+1}} \dd\theta \dd\mu^g.
\end{align*}
Note that due to our upper bound for $\lambda$, the last integrand has its absolute value bounded from above by
$$\dsp{\left(\frac{\mu}{4}\right)^{-N+1} A^2 |\psi| \in L^1(M, \bR)}$$ 
and converges a.e. to
$\psi A^2 \phi^{-N-1}$ as $\lambda$ tends to zero.
As a consequence, from the dominated convergence theorem, we conclude
that
\begin{align*}
&\frac{1}{N}\int_M \frac{A^2}{(\phi + \lambda \psi + \mu)^N} \dd\mu^g - 
\frac{1}{N}\int_M \frac{A^2}{(\phi+\mu)^N} \dd\mu^g\\
&\qquad = -\lambda \int_M \frac{\psi A^2}{(\phi + \mu)^{N+1}} \dd\mu^g + 
o(\lambda).
\end{align*}
This shows that the term $J_\mu(\phi)$ admits directional derivatives
in the direction of $\psi \in W^{1, 2}(M, \bR) \cap L^\infty(M, \bR)$.
By standard arguments, we get that
\begin{equation}\label{eqWeakForm}
\begin{aligned}
0 &= \frac{\dd}{\dd \lambda} I_\mu (\phi_\mu +\lambda \psi) \vert_{\lambda =0}\\ 
&= \int_M \left[\frac{4(n-1)}{n-2} \<\dd\phi_\mu, \dd\psi\> + 
\scal_g~\phi_\mu\psi 
+ \frac{n-1}{n} \tau^2 \phi_\mu^{N-1} \psi - \frac{A^2 
\psi}{(\phi_\mu+\mu)^{N+1}}\right] \dd\mu^g\\
  &\qquad + 2 \int_{\partial M} \left[H_g \phi_\mu - \frac{\Theta_-}{2} 
\phi_\mu^{N/2} \right]\psi \dd\mu^g.
\end{aligned}
\end{equation}
If $\phi_\mu \in W^{2, 2}(M, \bR)$ (which will be proven in Proposition 
\ref{propLichStrong} with $\mu = 0$), we can perform an integration by parts and get that, 
for all $\psi \in W^{1, 2}(M, \bR) \cap L^\infty(M, \bR)$,
\[
\begin{aligned}
0 &= \int_M \left[-\frac{4(n-1)}{n-2} \Delta \phi_\mu + \scal_g~\phi_\mu 
+ 
\frac{n-1}{n} \tau^2 \phi_\mu^{N-1} - 
\frac{A^2}{(\phi_\mu+\mu)^{N+1}}\right] 
\psi \dd\mu^g\\
  &\qquad + 2 \int_{\partial M} \left[\frac{2(n-1)}{n-2} \partial_\nu \phi + 
H_g 
\phi_\mu - \frac{\Theta_-}{2} \phi_\mu^{N/2} \right]\psi \dd\mu^g.
\end{aligned}
\]
This shows that \eqref{eqWeakForm} is a weak formulation of the following
modification of the problem \eqref{eqLichSys}:
\begin{subequations}\label{eqLichSysMu}
\begin{empheq}[left=\empheqlbrace]{align}
-\frac{4(n-1)}{n-2} \Delta \phi_\mu + \scal_g~\phi_\mu
 &= - \frac{n-1}{n} \tau^2 \phi_\mu^{N-1} +
\frac{A^2}{(\phi_\mu+\mu)^{N+1}},\label{eqLichMu}\\
\frac{2(n-1)}{n-2} \partial_\nu \phi_\mu + H_g \phi_\mu &=  
\frac{\Theta_-}{2} 
\phi_\mu^{\frac{N}{2}}.\label{eqCondLichMu}
\end{empheq}
\end{subequations}

We now let $\mu$ tend to zero. Before this, we have to prove that
$\phi_\mu$ is uniformly bounded from below. This will be done by
constructing a suitable subsolution to the system \eqref{eqLichSys}.

\begin{lemma}\label{lmSubSol}
Let $\Lambda > 0$, there exists a unique solution
$u \in W^{2, p}(M, \bR)$ to the following Robin problem:
\begin{subequations}\label{eqLinSys}
\begin{empheq}[left=\empheqlbrace]{align}
-\frac{4(n-1)}{n-2} \Delta u + \scal_g u + \frac{n-1}{n} \tau^2 u &= 
(A^2)_{,\Lambda} \qquad \text{on }M,\label{eqLinLich}\\
\frac{2(n-1)}{n-2} \partial_\nu u + H_g u &= - u \qquad \text{on }\partial M.\label{eqLinCondLich}
\end{empheq}
\end{subequations}
The function $u$ is strictly positive.
\end{lemma}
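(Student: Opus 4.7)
The plan is to solve the linear Robin problem \eqref{eqLinSys} by a direct Lax--Milgram argument on $W^{1, 2}(M, \bR)$, then to upgrade the regularity to $W^{2, p}$ via standard elliptic theory, and finally to deduce strict positivity from the strong maximum principle.

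First, I will test \eqref{eqLinLich} against $v \in W^{1, 2}(M, \bR)$, integrate the Laplacian by parts, and substitute \eqref{eqLinCondLich} to eliminate $\partial_\nu u$ on $\partial M$. This yields the bilinear form
\[
 B(u, v) \definedas \int_M \left[\frac{4(n-1)}{n-2} \<\dd u, \dd v\> + \scal_g~uv + \frac{n-1}{n} \tau^2 uv\right] \dd\mu^g + 2 \int_{\partial M} (1 + H_g) uv \dd\mu^g
\]
together with the linear form $\ell(v) \definedas \int_M (A^2)_{,\Lambda} v \dd\mu^g$. The associated quadratic form satisfies $B(u, u) = 2 Q_g(u) + \int_M \tfrac{n-1}{n} \tau^2 u^2 \dd\mu^g + 2 \int_{\partial M} u^2 \dd\mu^g$, so Lemma \ref{lmEscobar} (applicable since $\cE(M, g) > 0$) combined with the non-negativity of the two extra terms gives coercivity of $B$ on $W^{1, 2}(M, \bR)$. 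Continuity of $B$ and $\ell$ follows from the Sobolev and trace inequalities together with the regularity assumptions on $\scal_g$, $\tau$, $H_g$ and the bound $(A^2)_{,\Lambda} \leq \Lambda$. Lax--Milgram then produces a unique weak solution $u \in W^{1, 2}(M, \bR)$.

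Since $(A^2)_{,\Lambda} \in L^\infty(M, \bR) \subset L^p(M, \bR)$ and the coefficients $g \in W^{2, p}$, $\tau \in L^\infty$, $H_g \in W^{1 - \frac{1}{p}, p}(\partial M, \bR)$ satisfy the regularity hypotheses fixed at the beginning of Section \ref{secLichnerowicz}, the standard bootstrap of elliptic regularity for the Robin problem (exactly as invoked in Proposition \ref{propEigenFunction}) lifts $u$ to $W^{2, p}(M, \bR)$, so that \eqref{eqLinSys} holds pointwise almost everywhere.

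For positivity I plan to test the weak equation against $v = -u_-$, with $u_- \definedas \max\{-u, 0\} \in W^{1, 2}(M, \bR)$. Using $\<\dd u, \dd u_-\> = -|\dd u_-|^2$ and $u\, u_- = -u_-^2$ almost everywhere, the left-hand side collapses to $B(u_-, u_-) \geq \alpha \|u_-\|^2_{W^{1, 2}(M, \bR)}$ by coercivity, while the right-hand side equals $-\int_M (A^2)_{,\Lambda} u_- \dd\mu^g \leq 0$. Thus $u_- \equiv 0$, i.e.\ $u \geq 0$. The standing assumption $A \not\equiv 0$ of the section forces $(A^2)_{,\Lambda} \not\equiv 0$ for $\Lambda > 0$, whence $u \not\equiv 0$, and the strong maximum principle \cite[Lemma B.7]{HolstTsogtgerel}, already used for the same Robin problem in Proposition \ref{propEigenFunction}, yields $u > 0$ on $\Mbar$. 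I expect the main technical care to be required precisely at this last step, in checking that the low-regularity Robin boundary condition fits the hypotheses of the maximum principle, but this is exactly the setting for which the framework of \cite{HolstTsogtgerel} is designed, and no new ideas should be needed.
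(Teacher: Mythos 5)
Your proof is correct, but it follows a genuinely different route from the paper's, and the comparison is instructive.

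For existence and uniqueness, you use Lax--Milgram on the symmetric bilinear form $B$ (coercive because $B(u,u) = 2Q_g(u) + \tfrac{n-1}{n}\int_M \tau^2 u^2 + 2\int_{\partial M} u^2 \geq 2Q_g(u)$, and $Q_g$ is coercive by Lemma \ref{lmEscobar}), then bootstrap to $W^{2,p}$. The paper instead works directly at the $W^{2,p}$ level: it quotes \cite[Lemma B.6]{HolstTsogtgerel} for Fredholmness with index zero, and then checks injectivity by the same integration-by-parts computation that underlies your coercivity. The two arguments are essentially interchangeable here; your Lax--Milgram route is a bit more self-contained, while the paper's Fredholm route bypasses the regularity bootstrap.

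For positivity, the divergence is sharper. You first prove $u \geq 0$ by testing against $-u_- = -\max\{-u, 0\}$, observing that $B(u, -u_-) = B(u_-, u_-) \geq \alpha\|u_-\|_{W^{1,2}}^2$ while $\ell(-u_-) \leq 0$, and then invoke the strong maximum principle \cite[Lemma B.7]{HolstTsogtgerel} to push $u \geq 0$, $u \not\equiv 0$ up to $u > 0$. The paper does not establish $u \geq 0$ first; instead it conformally changes to $g_0 = u_0^{N-2}g$ (using the first Robin eigenfunction $u_0$ from Proposition \ref{propEigenFunction}) so that, for $\ubar = u/u_0$, the zero-order coefficients become $\lambda_0 u_0^{2-N} + \tfrac{n-1}{n}\tau^2 u_0^N \geq 0$ on $M$ and $u_0^{N/2+1} \geq 0$ on $\partial M$, and then applies Lemma B.7 to the transformed problem. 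The conformal change is the paper's way of putting the operator into the sign-definite form Lemma B.7 expects when no sign information on the solution is supplied. Your route extracts the sign information directly from the variational structure, which makes the conformal-change step unnecessary; this is consistent with how the paper itself uses Lemma B.7 in Proposition \ref{propEigenFunction}, where $u_\infty \geq 0$ is known and $\scal_g$, $H_g$ are sign-changing. So the caveat you flag at the end is the right thing to worry about, and it resolves favorably: your argument goes through, and is in fact slightly more elementary, since it avoids the second appeal to the eigenfunction $u_0$.
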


\begin{proof}
Note that $(A^2)_{,\Lambda} \in L^\infty(M, \bR) \subset L^p(M, \bR)$.
The solution $u$ to \eqref{eqLinSys} exists and belongs to $W^{2, p}(M, \bR)$.
Indeed, from general theory (see e.g. \cite[Lemma B.6]{HolstTsogtgerel},
the associated operator
\[
\begin{array}{ccc}
 W^{2, p}(M, \bR) &\to & L^p(M, \bR) \times W^{1-\frac{1}{p}, p}(\partial M, \bR)\\
 u & \mapsto & \left(\dsp{-\frac{4(n-1)}{n-2} \Delta u + \scal_g~u + \frac{n-1}{n} 
\tau^2 u,~\frac{2(n-1)}{n-2} \partial_\nu u + (H_g+1) u}\right)
\end{array}
\]
is Fredholm with index zero. As a consequence, all we need to check is
that the only solution to the associated homogeneous problem is $v \equiv 0$.
However, if $v$ solves the homogeneous problem
\begin{subequations}\label{eqLinSysH}
\begin{empheq}[left=\empheqlbrace]{align}
-\frac{4(n-1)}{n-2} \Delta v + \scal_g v + \frac{n-1}{n} \tau^2 v &= 0 \qquad \text{on }M,
\label{eqLinLichH}\\
\frac{2(n-1)}{n-2} \partial_\nu v + H_g v &= - v \qquad \text{on }\partial M,\label{eqLinCondLichH}
\end{empheq}
\end{subequations}
we have, multiplying \eqref{eqLinLichH} by $v$, integrating over $M$
and using the boundary condition \eqref{eqLinCondLichH},
\begin{align*}
 0
 &= \int_M \left[-\frac{4(n-1)}{n-2} v \Delta v + \scal_g~v^2 + \frac{n-1}{n} 
\tau^2 v^2\right] \dd\mu^g\\
 &= \int_M \left[\frac{4(n-1)}{n-2} |\dd v|^2 + \scal_g~v^2 + \frac{n-1}{n} \tau^2 
v^2\right] \dd\mu^g - \frac{4(n-1)}{n-2} \int_{\partial M} v \partial_\nu v 
\dd\mu^g\\
 &= \int_M \left[\frac{4(n-1)}{n-2} |\dd v|^2 + \scal_g~v^2 + \frac{n-1}{n} \tau^2 
v^2\right] \dd\mu^g + 2 \int_{\partial M} (H_g + 1) v^2 \dd\mu^g\\
 &= 2 Q_g(v) + \frac{n-1}{n} \int_M \tau^2 v^2 + \int_{\partial M} v^2 \dd\mu^g.
\end{align*}
Since, by assumption, all terms on the last line are non-negative, we have
$Q_g(v) = 0$ which implies $v \equiv 0$.

To show that $u$ is strictly positive, we reintroduce the metric $g_0 = 
u_0^{N-2} g$, where $u_0$ was introduced in Proposition 
\ref{propEigenFunction} and set $\dsp{\ubar = \frac{u}{u_0}}$. From the conformal 
transformation law of the conformal Laplacian, we get
that $\ubar$ solves
\[
 \left\lbrace
 \begin{aligned}
  -\frac{4(n-1)}{n-2} \Delta_{g_0} \ubar + \lambda_0 u_0^{2-N}~\ubar + 
  \frac{n-1}{n} \tau^2 u_0^N \ubar &= u_0^{N-1} (A^2)_{,\Lambda} \qquad \text{on }M,\\
  \frac{2(n-1)}{n-2} \partial_{\nubar} \ubar +  u_0^{\frac{N}{2}+1} \ubar&= 0 \qquad \text{on }\partial M.
 \end{aligned}
 \right.
\]
As $(A^2)_{,\Lambda} \geq 0$ is non trivial, we conclude from \cite[Lemma 
B.7]{HolstTsogtgerel} that $\ubar$ is strictly positive on $M$. This ends the 
proof of the lemma.
\end{proof}

\begin{lemma}\label{lmSubSol2}
Let $u$ be as given in Lemma \ref{lmSubSol}.
There exists a constant $\lambda_- > 0$ such that $\phi_- = \lambda_- u$
is a positive subsolution to the system \eqref{eqLichSysMu} with $\mu \in (0, 1)$,
i.e.
\begin{subequations}\label{eqLichSysMuSub}
\begin{empheq}[left=\empheqlbrace]{align}
-\frac{4(n-1)}{n-2} \Delta \phi_- + \scal_g~\phi_- + \frac{n-1}{n} \tau^2 
\phi_-^{N-1} &\leq \frac{A^2}{(\phi_-+\mu)^{N+1}} \quad \text{on }M, \label{eqLichMuSub}\\
\frac{2(n-1)}{n-2} \partial_\nu \phi_- + H_g \phi_- &\leq  
\frac{\Theta_-}{2} 
\phi_-^{N/2} \quad \text{on }\partial M.\label{eqCondLichMuSub}
\end{empheq}
\end{subequations}
\end{lemma}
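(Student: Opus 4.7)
The plan is to use the linear PDE satisfied by $u$ (Lemma \ref{lmSubSol}) to rewrite each of the two subsolution inequalities as pointwise conditions, and then to choose $\lambda_-$ small enough so that both conditions hold uniformly for $\mu \in (0,1)$. Multiplying \eqref{eqLinLich} by $\lambda_-$, one has
\[
 -\frac{4(n-1)}{n-2} \Delta \phi_- + \scal_g \phi_- + \frac{n-1}{n} \tau^2 \phi_- = \lambda_- (A^2)_{,\Lambda},
\]
so the interior subsolution inequality \eqref{eqLichMuSub} is equivalent to
\[
 \lambda_- (A^2)_{,\Lambda} + \frac{n-1}{n} \tau^2 \left(\phi_-^{N-1} - \phi_-\right) \leq \frac{A^2}{(\phi_- + \mu)^{N+1}}.
\]
Similarly, multiplying \eqref{eqLinCondLich} gives $\frac{2(n-1)}{n-2} \partial_\nu \phi_- + H_g \phi_- = -\phi_-$, so \eqref{eqCondLichMuSub} reduces to $-\phi_- \leq \frac{\Theta_-}{2} \phi_-^{N/2}$, i.e.\ to $1 + \frac{\Theta_-}{2} \phi_-^{N/2-1} \geq 0$ since $\phi_- \geq 0$.

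For the interior inequality, my first requirement is $\lambda_- \|u\|_{L^\infty} \leq 1$, which forces $0 \leq \phi_- \leq 1$. Since $N - 1 > 1$, this makes the correction term $\frac{n-1}{n}\tau^2 (\phi_-^{N-1} - \phi_-)$ nonpositive, so it is enough to secure the pointwise bound $\lambda_- (A^2)_{,\Lambda} \leq A^2/(\phi_- + \mu)^{N+1}$. A case analysis handles this: on $\{A^2 \leq \Lambda\}$ one has $(A^2)_{,\Lambda} = A^2$ and the factor $A^2$ cancels, so one only needs $\lambda_- (\phi_- + \mu)^{N+1} \leq 1$; on $\{A^2 > \Lambda\}$ one has $(A^2)_{,\Lambda} = \Lambda < A^2$, so the same condition suffices. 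Since $\phi_- + \mu \leq \lambda_- \|u\|_{L^\infty} + 1$ uniformly in $\mu \in (0,1)$, taking $\lambda_-$ sufficiently small yields $\lambda_-(\lambda_- \|u\|_{L^\infty} + 1)^{N+1} \leq 1$, and the interior inequality holds.

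For the boundary inequality, the exponent $\frac{N}{2} - 1 = \frac{2}{n-2}$ is strictly positive, and since $\Theta_- \in L^\infty(\partial M, \bR)$,
\[
 \left|\frac{\Theta_-}{2}\phi_-^{N/2-1}\right| \leq \frac{\|\Theta_-\|_{L^\infty}}{2} \bigl(\lambda_- \|u\|_{L^\infty}\bigr)^{\frac{2}{n-2}},
\]
which is bounded by $1$ provided $\lambda_-$ is small enough. Choosing $\lambda_-$ smaller than the minimum of the three thresholds thus obtained (and positive) yields the required subsolution, with $\phi_- > 0$ coming from the strict positivity of $u$ asserted in Lemma \ref{lmSubSol}. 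The step that one should be careful about is the interior inequality near points where $A$ is large: the truncation $(A^2)_{,\Lambda}$ (rather than the full $A^2$) on the right-hand side of \eqref{eqLinLich} is precisely what makes the forcing a bounded function and allows the above uniform choice of $\lambda_-$.
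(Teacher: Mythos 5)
Your proof is correct and follows essentially the same strategy as the paper's: use \eqref{eqLinLich}--\eqref{eqLinCondLich} to rewrite the two subsolution inequalities, impose $\lambda_-\|u\|_{L^\infty(M,\bR)} \leq 1$ so that the correction $\frac{n-1}{n}\tau^2(\phi_-^{N-1} - \phi_-)$ is nonpositive, and then further shrink $\lambda_-$ to secure $\lambda_-(\phi_- + \mu)^{N+1} \leq 1$ uniformly in $\mu \in (0,1)$, together with the analogous smallness condition on the boundary. Two small remarks: the paper dispenses with your case split on $\{A^2 \leq \Lambda\}$ vs.\ $\{A^2 > \Lambda\}$ by simply invoking $(A^2)_{,\Lambda} \leq A^2$ pointwise (which is what both your cases reduce to), and your closing comment slightly misattributes the role of the truncation -- it is there to put the forcing in $L^p$ so that Lemma \ref{lmSubSol} yields $u \in W^{2,p}(M,\bR) \hookrightarrow L^\infty(M,\bR)$; for the pointwise subsolution comparison itself, only the inequality $(A^2)_{,\Lambda} \leq A^2$ is needed.
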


\begin{proof}
Indeed, we have
\[
 -\frac{4(n-1)}{n-2} \Delta \phi_- + \scal_g~\phi_- + \frac{n-1}{n} 
\tau^2 
\phi_-^{N-1} = \lambda (A^2)_{,\Lambda} + \lambda \frac{n-1}{n} \tau^2 u 
\left(\lambda^{N-2} u^{N-2} - 1\right),
\]
so $\phi_-$ will satisfy \eqref{eqLichMuSub} provided that
\[
\begin{aligned}
 \lambda^{N-2} u^{N-2} & \leq 1,\\
 \lambda & \leq (\lambda u + \mu)^{-N-1}
\end{aligned}
\]
(note that we used the fact that $(A^2)_{,\Lambda} \leqslant A^2$).
As $u \in W^{2, p}(M, \bR) \subset L^\infty(M, \bR)$, the first condition
is fulfilled provided that $\lambda \leq \|u\|_{L^\infty(M, \bR)}^{-1}$
and the second one if $\lambda \leq (\lambda \|u\|_{L^\infty(M, \bR)} + 
1)^{-N-1}$.
As this second condition forces $\lambda \leq 1$, we get that 
\eqref{eqLichMuSub}
is satisfied if
\[
 \lambda \leq (1 + \|u\|_{L^\infty(M, \bR)})^{-N-1}
\]
as we clearly have $(1 + \|u\|_{L^\infty(M, \bR)})^{-N-1} \leq 
\|u\|_{L^\infty(M, \bR)}^{-1}$.

Regarding the boundary condition \eqref{eqCondLichMuSub}, it is fulfilled
provided that $\dsp{-\phi_- \leq \frac{\Theta_-}{2} \phi_-^{\frac{N}{2}}}$, i.e. if
\[
 \lambda^{N/2-1} u^{N/2-1} \frac{|\Theta_-|}{2} \leq 1.
\]
Once again, this will be satisfied provided that
\[
 \lambda \leq \|u\|_{L^\infty(M, \bR)} 
\left\|\frac{|\Theta_-|}{2}\right\|_{L^\infty(\partial M, \bR)}^{\frac{2}{N-2}}.
\]

As a consequence, we have proven that, if
\begin{equation}\label{eqCondLambda}
  \lambda \leq \min \left\{(1 + \|u\|_{L^\infty(M, \bR)})^{-N-1}, 
\|u\|_{L^\infty(M, \bR)} \left\|\frac{\Theta_-}{2}\right\|_{L^\infty(\partial 
M, 
\bR)}^{\frac{2}{N-2}}\right\},
\end{equation}
the function $\phi_-$ satisfies \eqref{eqLichSysMuSub}.
\end{proof}

\begin{lemma}\label{lmMaxPrinciple}
Let $\phi_-$ be as defined in Lemma \ref{lmSubSol2}. We have
$\phi_\mu \geq \phi_-$ a.e. for all $\mu \in (0, 1)$.
\end{lemma}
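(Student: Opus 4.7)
The plan is to run a standard weak comparison argument with the nonnegative test function $\psi \definedas (\phi_- - \phi_\mu)_+$. Since $\phi_- = \lambda_- u$ with $u \in W^{2,p}(M,\bR) \hookrightarrow L^\infty(M,\bR)$ and $\phi_\mu \in C_{\mu/2}$ (so $\phi_\mu \geq -\mu/2$ a.e.), this $\psi$ belongs to $W^{1,2}(M,\bR) \cap L^\infty(M,\bR)$ and is therefore admissible in the weak formulation \eqref{eqWeakForm} satisfied by $\phi_\mu$.

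First I would derive a weak subsolution inequality for $\phi_-$. Since $\phi_- \in W^{2,p}(M,\bR)$, multiplying the pointwise inequality \eqref{eqLichMuSub} by $\psi \geq 0$ and integrating by parts is legitimate and produces a boundary term of the form $-\frac{4(n-1)}{n-2}\int_{\partial M}\psi\,\partial_\nu \phi_-\,\dd\mu^g$; the Robin-type inequality \eqref{eqCondLichMuSub}, multiplied by $\psi \geq 0$, controls this term with the correct sign and, after rearrangement, yields a weak inequality of exactly the same shape as \eqref{eqWeakForm} but with $\leq 0$ in place of $=0$. Subtracting \eqref{eqWeakForm} from this inequality then produces a single inequality involving the differences $\phi_- - \phi_\mu$, $\phi_-^{N-1} - \phi_\mu^{N-1}$, $\phi_-^{N/2} - \phi_\mu^{N/2}$, and $(\phi_-+\mu)^{-N-1} - (\phi_\mu + \mu)^{-N-1}$, all tested against $\psi$.

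Next I would perform a sign analysis on the set $\{\phi_- > \phi_\mu\}$, which is the only region where $\psi$ is nonzero. There $\<\dd(\phi_- - \phi_\mu), \dd\psi\> = |\dd\psi|^2$ and $(\phi_- - \phi_\mu)\psi = \psi^2$; the monotonicity of $t \mapsto t^{N-1}$ and $t \mapsto t^{N/2}$ on $[0,\infty)$ (applicable because $\phi_\mu \geq 0$ a.e.) gives $(\phi_-^{N-1} - \phi_\mu^{N-1})\psi \geq 0$ and $(\phi_-^{N/2} - \phi_\mu^{N/2})\psi \geq 0$, which, combined with $\Theta_- \leq 0$, forces the $\tau^2$ and $\Theta_-$ contributions to be non-negative. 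Likewise $(\phi_-+\mu)^{-N-1} \leq (\phi_\mu+\mu)^{-N-1}$ there, which makes the $A^2$ contribution non-negative once its minus sign is taken into account. Discarding all these non-negative terms collapses the inequality to $2\,Q_g(\psi) \leq 0$, and the coercivity of $Q_g$ supplied by Lemma \ref{lmEscobar} then forces $\psi \equiv 0$, i.e.\ $\phi_\mu \geq \phi_-$ a.e. The only delicate bookkeeping step is to ensure that the Robin-type boundary inequality is fed into the integration by parts with signs compatible with $\psi \geq 0$; once this is handled, the argument rests solely on $\phi_\mu \geq 0$ a.e., the hypothesis $\Theta_- \leq 0$, and the coercivity of $Q_g$.
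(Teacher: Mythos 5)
Your proof is correct, and the core mechanism --- subtract the weak subsolution inequality for $\phi_-$ from the weak equality for $\phi_\mu$, test against the positive part of $\phi_--\phi_\mu$, discard the three sign-definite terms coming from the $\tau^2$, $A^2$ and $\Theta_-$ contributions, and conclude with the coercivity of $Q_g$ --- is exactly the paper's. The one place where you genuinely streamline things is the choice of test function. The paper is careful to use the truncated function $\psi=(\phi_--\phi_\mu)_{0,k}$ and then pass to the limit $k\to\infty$, justified by the remark that \eqref{eqWeakForm} only admits test functions in $W^{1,2}(M,\bR)\cap L^\infty(M,\bR)$. You point out that the truncation from above is superfluous here: since $\phi_-=\lambda_- u$ with $u\in W^{2,p}(M,\bR)\hookrightarrow L^\infty(M,\bR)$ (using $p>\frac n2$), and the minimizer $\phi_\mu$ satisfies $\phi_\mu\geq 0$ a.e., the positive part $(\phi_--\phi_\mu)_+$ is automatically bounded above by $\|\phi_-\|_{L^\infty(M,\bR)}$ and hence already lies in $W^{1,2}(M,\bR)\cap L^\infty(M,\bR)$. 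This observation lets you skip the $k\to\infty$ limiting argument (with its dominated-convergence bookkeeping for both $\psi$ and $\dd\psi$) entirely. The truncation is needed elsewhere in the paper (for example in the continuity argument for $A\mapsto\phi$, where both functions being compared are unbounded), but not in this lemma. Your sign analysis is correct --- the key points being $\phi_\mu\geq 0$ a.e.\ for the monotonicity of $t\mapsto t^{N-1}$ and $t\mapsto t^{N/2}$, the decrease of $t\mapsto(t+\mu)^{-N-1}$ for the $A^2$ term, and $\Theta_-\leq 0$ --- and your handling of the boundary integration by parts, feeding the Robin-type inequality \eqref{eqCondLichMuSub} into the boundary term with the correct sign, reproduces the paper's \eqref{eqWeakFormSub}.
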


\begin{proof}
The usual trick would be to subtract the equation satisfied by $\phi_\mu$
with the inequalities satisfied by $\phi_-$, multiply by $(\phi_- - 
\phi_\mu)_{0, 
}$
and integrate by parts (see e.g. \cite{HolstTsogtgerel} and also
\cite{dgh, GicquaudLichnerowicz} for
different approaches). Nevertheless, as we are only allowed to have
bounded test functions in \eqref{eqWeakForm}, we need to refine this argument
somehow. The strategy here is to take $\psi = (\phi_- - \phi_\mu)_{0, k}$
for any $k \geq 1$ and let $k$ tend to infinity. If we show that $\psi =0$, we will get that $\phi_- -\phi_\mu \leqslant 0$ a.e.\\

It follows from \eqref{eqLichSysMuSub} that $\phi_-$ satisfies this, for
all non-negative function $\psi \in W^{1, 2}(M, \bR) \cap L^\infty(M, \bR)$,

\begin{equation}\label{eqWeakFormSub}
\begin{aligned}
0 &\geq \int_M \left[\frac{4(n-1)}{n-2} \<\dd\phi_-, \dd\psi\> + 
\scal_g~\phi_-\psi 
+ \frac{n-1}{n} \tau^2 \phi_-^{N-1} \psi - \frac{A^2 
\psi}{(\phi_-+\mu)^{N+1}}\right] \dd\mu^g\\
  &\qquad + 2 \int_{\partial M} \left[H_g \phi_- - \frac{\Theta_-}{2} 
\phi_-^{N/2} \right]\psi \dd\mu^g.
\end{aligned}
\end{equation}
Hence, subtracting \eqref{eqWeakFormSub} and \eqref{eqWeakForm},
we get that, for all non-negative function $\psi \in W^{1, 2}(M, \bR) \cap 
L^\infty(M, \bR)$,
\begin{equation}\label{eqWeakFormDiff}
\begin{aligned}
0 &\geq \int_M \left[\frac{4(n-1)}{n-2} \<\dd(\phi_- - \phi_\mu), \dd\psi\> + 
\scal_g~(\phi_- - \phi_\mu) \psi \right. \\
 &\qquad + \left. \frac{n-1}{n} \tau^2 (\phi_-^{N-1} 
- 
\phi_\mu^{N-1}) \psi\right] \dd\mu^g\\
  &\qquad - \int_M A^2 \psi \left(\frac{1}{(\phi_-+\mu)^{N+1}} - 
\frac{1}{(\phi_\mu+\mu)^{N+1}}\right) \dd\mu^g\\
  &\qquad + 2 \int_{\partial M} \left[H_g (\phi_- - \phi_\mu) - 
\frac{\Theta_-}{2} (\phi_-^{N/2} - \phi_\mu^{N/2})\right]\psi \dd\mu^g.
\end{aligned}
\end{equation}

We now set, for any integer $k \geq 0$, $\psi = (\phi_- - \phi_\mu)_{0, k}
\in W^{1, 2}(M, \bR) \cap L^\infty(M, \bR)$.
We have (see \cite[Theorem 7.8]{GilbargTrudinger}), for all $x \in M$,
\[
 \dd\psi(x) =
 \left\lbrace
 \begin{aligned}
  \dd(\phi_- - \phi_\mu)(x) &\quad \text{if } 0 < (\phi_- - 
\phi_\mu)(x) < k,\\
  0 &\quad \text{elsewhere}.
 \end{aligned}
 \right.
\]
We have that the three following terms are non negative:

\begin{itemize}
\item $\dsp{\int_M \tau^2 (\phi_-^{N-1} - \phi_\mu^{N-1}) \psi \dd \mu^g}$ ,
\item $\dsp{-\int_M A^2 \psi \left(\frac{1}{(\phi_-+\mu)^{N+1}} - 
\frac{1}{(\phi_\mu+\mu)^{N+1}}\right) \dd\mu^g}$ ,
\item $\dsp{-\int_{\partial M} \left[\frac{\Theta_-}{2} (\phi_-^{N/2} - 
\phi_\mu^{N/2})\right]\psi \dd\mu^g}$ .
 \end{itemize}
As a consequence, \eqref{eqWeakFormDiff} yields
\begin{equation}\label{eqWeakFormDiff2}
\begin{aligned}
0 &\geq \int_M \left[\frac{4(n-1)}{n-2} \<\dd(\phi_- - \phi_\mu), \dd\psi\> + 
\scal_g~(\phi_- - \phi_\mu) \psi\right] \dd\mu^g\\
&\qquad + 2 \int_{\partial M} H_g (\phi_- - \phi_\mu)\psi \dd\mu^g
\end{aligned}
\end{equation}
We now let $k$ tend to $\infty$. We have
\[
(\phi_- - \phi_\mu)_{0, k} \to (\phi_- - \phi_\mu)_{0, } 
\quad\text{in } W^{1, 
2}(M, \bR).
\]
Indeed, since
\[
 0 \leq (\phi_- - \phi_\mu)_{0, k} \leq (\phi_- - \phi_\mu)_{0,}~,
\]
we have
\[
 0 \leq (\phi_- - \phi_\mu)_{0, } - (\phi_- - \phi_\mu)_{0, k} \leq 
(\phi_- - 
\phi_\mu)_{0, }
\]
and $(\phi_- - \phi_\mu)_{0, k} \to (\phi_- - \phi_\mu)_{0, }$ a.e.
Since $\phi_-$ and $\phi_\mu$ belong to $L^2(M, \bR)$, we have
$(\phi_- - \phi_\mu)_{0, } \in L^2(M, \bR)$. So we conclude from the
dominated convergence theorem that
\[
 \left\|(\phi_- - \phi_\mu)_{0, } - (\phi_- - \phi_\mu)_{0, 
k}\right\|^2_{L^2(M, 
\bR)} = \int_M \left((\phi_- - \phi_\mu)_{0, } - (\phi_- - 
\phi_\mu)_{0, 
k}\right)^2 \dd\mu^g \to_{k \to \infty} 0.
\]
This proves convergence in $L^2(M, \bR)$. Convergence of the gradient
is similar. We have
\[
 \dd((\phi_- - \phi_\mu)_{0, }) - \dd((\phi_- - \phi_\mu)_{0, k}) = 
\chi_{\left\{(\phi_- - 
\phi_\mu)_{0, }   \geq k \right\} } \dd((\phi_- - \phi_\mu)_{0, }) \to_{k \to \infty} 
0,
\]
so
\[
 \left|\dd((\phi_- - \phi_\mu)_{0, }) - \dd((\phi_- - \phi_\mu)_{0, 
k})\right|^2 
\leq |\dd((\phi_- - \phi_\mu)_{0, })|^2 \in L^1(M, \bR).
\]
From the dominated convergence theorem, we conclude again that
$\|\dd((\phi_- - \phi_\mu)_{0, }) - \dd((\phi_- - \phi_\mu)_{0, 
k})\|_{L^2(M, \bR)} 
\to 0$.
As a consequence, we can pass to the limit $k \to \infty$ in 
\eqref{eqWeakFormDiff2}:
\begin{align*}
0 &\geq \int_M \left[\frac{4(n-1)}{n-2} |\dd(\phi_- - \phi_\mu)_{0,}|^2 + 
\scal_g~(\phi_- - \phi_\mu)_{0,}^2\right] \dd\mu^g\\
 &\qquad + 2 \int_{\partial M} H_g (\phi_- - \phi_\mu)_{0, }^2 \dd\mu^g = 
Q_g((\phi_- - \phi_\mu)_{0,}).
\end{align*}
From the coercivity of $Q_g$, we conclude that $(\phi_- - \phi_\mu)_{0,} 
\equiv 
0$
a.e., namely $\phi_- \leq \phi_\mu$.
\end{proof}

A similar argument shows that $\phi_{\mu'} \leq \phi_\mu$ if $\mu' \geq 
\mu$
as $\phi_{\mu'}$ is then a weak subsolution to \eqref{eqWeakForm}.
In particular, choosing $\mu=\mu' $, this shows that $\phi_\mu$ is the only positive function
satisfying \eqref{eqWeakForm}.

\begin{lemma} \label{lmmu0}
Let $\phi \definedas \limsup_{\mu \to 0^+} \phi_\mu$. Then $\phi$
satisfies, for all $\psi \in W^{1, 2}(M, \bR) \cap L^\infty(M, \bR)$,
the weak form of the Lichnerowicz equation \eqref{eqWeakForm0}.
We have $\phi \geq \phi_-$, $\phi$ minimizes the functional $I_0$
(see \eqref{eqDefImu} for the definition) and is the only
non-negative function in $W^{1, 2}(M, \bR)$ satisfying
\eqref{eqWeakForm0}.
\end{lemma}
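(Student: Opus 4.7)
The plan is to pass to the limit $\mu \to 0^+$ in the weak formulation \eqref{eqWeakForm}, exploiting the monotonicity $\phi_{\mu'} \leq \phi_\mu$ for $\mu' \geq \mu$ together with the uniform positive lower bound $\phi_\mu \geq \phi_-$ from Lemma \ref{lmMaxPrinciple}. First I would establish a $\mu$-uniform $W^{1,2}$ bound on $\phi_\mu$. Since $\phi_- = \lambda_- u$ and $u \in W^{2,p}(M, \bR) \hookrightarrow C^0(M, \bR)$ is strictly positive on compact $M$, there is a constant $c > 0$ with $\phi_- \geq c$ pointwise. Consequently $A^2(\phi_-+\mu)^{-N} \leq c^{-N} A^2 \in L^1(M, \bR)$ uniformly in $\mu \in (0,1)$, so $I_\mu(\phi_-)$ is uniformly bounded. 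By minimality, $I_\mu(\phi_\mu) \leq I_\mu(\phi_-) \leq C$. As every term in $I_\mu(\phi_\mu)$ except $Q_g(\phi_\mu)$ is non-negative (using $\Theta_- \leq 0$), Lemma \ref{lmEscobar} gives $\|\phi_\mu\|_{W^{1,2}(M, \bR)} \leq C'$ uniformly. Combined with monotonicity, this implies that $\phi \definedas \limsup_{\mu \to 0^+} \phi_\mu$ is in fact the pointwise increasing limit, lies in $W^{1,2}(M, \bR)$ with $\phi \geq \phi_-$, and $\phi_\mu \rightharpoonup \phi$ weakly in $W^{1,2}$, strongly in $L^2(M, \bR)$, and a.e.

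Next I would pass to the limit in each term of \eqref{eqWeakForm}. The gradient term converges by weak $W^{1,2}$ convergence; the terms involving $\scal_g$ and $H_g$ by $L^2$ and trace continuity; and the integrals $\int_M \tau^2 \phi_\mu^{N-1}\psi\,\dd\mu^g$ and $\int_{\partial M} \Theta_- \phi_\mu^{N/2}\psi\,\dd\mu^g$ by dominated convergence, with dominants $\phi^{N-1}$ and $\phi^{N/2}$ justified by the monotonicity $\phi_\mu \leq \phi$ and the Sobolev embedding $W^{1,2} \hookrightarrow L^N$. The crucial term is
\[
\int_M \frac{A^2 \psi}{(\phi_\mu + \mu)^{N+1}}\,\dd\mu^g,
\]
where the lower bound $\phi_\mu \geq \phi_- \geq c$ provides the $L^1$-dominant $A^2 |\psi|/c^{N+1}$, so dominated convergence yields convergence to $\int_M A^2 \psi/\phi^{N+1}\,\dd\mu^g$. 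This produces \eqref{eqWeakForm0}.

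For minimization, given any non-negative $\eta \in W^{1,2}(M, \bR)$ with $I_0(\eta) < \infty$, I would use $I_\mu(\phi_\mu) \leq I_\mu(\eta) \leq I_0(\eta)$ (the second inequality from $(\eta+\mu)^{-N} \leq \eta^{-N}$), then take $\liminf_{\mu \to 0^+}$ and invoke weak lower semicontinuity of $Q_g$ together with dominated/monotone convergence for the other terms to obtain $I_0(\phi) \leq I_0(\eta)$. For uniqueness, I would first note that the computations of Lemma \ref{lmSubSol2} work verbatim at $\mu = 0$ upon possibly shrinking $\lambda_-$ (the condition $\lambda \leq (\lambda u + \mu)^{-N-1}$ becomes $\lambda^{N+2}\|u\|_\infty^{N+1} \leq 1$), so $\phi_-$ is a subsolution of \eqref{eqLichSys}; the argument of Lemma \ref{lmMaxPrinciple} applied with test function $(\phi_- - \phi')_{0,k}$ then shows every non-negative $W^{1,2}$ solution $\phi'$ satisfies $\phi' \geq \phi_-$. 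For two such solutions, subtract their weak formulations and test with $\psi = (\phi - \phi')_{0,k}$: on $\{\phi > \phi'\}$, the contributions from $\tau^2$, $\Theta_-$, and (crucially) $A^2$ all carry the correct sign by monotonicity of $t \mapsto t^{N-1}$, $t \mapsto t^{N/2}$, and $t \mapsto -t^{-N-1}$, so passing $k \to \infty$ as in Lemma \ref{lmMaxPrinciple} collapses the inequality to $0 \geq Q_g((\phi-\phi')_{0,})$; coercivity forces $\phi \leq \phi'$ and symmetry gives equality. The main obstacle throughout is the low regularity $A^2 \in L^1(M, \bR)$, which makes the singular nonlinearity $A^2\phi^{-N-1}$ lie outside standard elliptic theory; the construction of the subsolution $\phi_-$ bounded below by a positive constant is what makes both the limiting procedure and the maximum-principle uniqueness argument well-defined.
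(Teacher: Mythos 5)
Your proof is correct and follows essentially the same line of argument as the paper: a uniform $W^{1,2}$ bound on $\phi_\mu$ (the paper bounds $I_\mu(\phi_\mu)\leq I_\mu(1)\leq I_0(1)$, you bound it by $I_\mu(\phi_-)$, which is equivalent), passage to the limit in the weak formulation using the uniform positive lower bound $\phi_\mu\geq\phi_-\geq c>0$ to dominate the singular term, the contradiction argument for minimality via the variant of \eqref{eqLimImu}, and the maximum-principle comparison for uniqueness. Where you differ is only in being more explicit than the paper: you exploit the already-established monotonicity of $\mu\mapsto\phi_\mu$ to upgrade subsequential convergence to convergence of the full family, you spell out the dominants for each nonlinear term, and you make explicit the intermediate step that any non-negative $W^{1,2}$ solution of \eqref{eqWeakForm0} lies above $\phi_-$ before comparing two solutions, all of which the paper compresses into ``we can pass to the limit'' and ``repeating the argument we did before''.
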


\begin{proof}
We first remark that
\[
 Q_g(\phi_\mu) \leq I_\mu(\phi_\mu) \leq I_\mu(1) \leq I_0 (1) .
 \]
By the coercivity of $Q_g$, we conclude that the $\phi_\mu$ are uniformly
bounded in $W^{1, 2}(M, \bR)$. As a consequence, there is a decreasing
sequence $(\mu_k)_k$ tending to zero such that $(\phi_{\mu_k})_k$ converges
to $\phi \in W^{1, 2}(M, \bR)$, weakly in
$W^{1, 2}(M, \bR)$, strongly in $L^2(M, \bR)$ and $\phi$ verifies $\phi \geq \phi_-$ a.e.. In particular, we can
pass to the limit in \eqref{eqWeakForm} and get that $\phi$ solves
\eqref{eqWeakForm0} for any
$\psi \in W^{1, 2}(M, \bR) \cap L^\infty(M, \bR)$.
 
Repeating the argument we did before, we see that $\phi$ is the only
positive function satisfying \eqref{eqWeakForm0}. All that we have to show
is that $I_0(\phi)$ is the minimum value of $I_0$ over all positive functions
in $W^{1, 2}(M, \bR)$. From the discussion regarding the weak semicontinuity
of $I_\mu$ above, we have
\begin{equation}\label{eqLimImu}
 \liminf_{k \to \infty} I_{\mu_k}(\phi_{\mu_k}) \geq I_0(\phi).
\end{equation}
Assume that there exists $\phi' \geq 0$ such that $I_0(\phi') < 
I_0(\phi)$.
We have
\[
 I_\mu(\phi') = I_0(\phi') - \frac{1}{N} \int_M A^2 \left[(\phi')^{-N} 
- 
(\phi'+\mu)^{-N}\right] \leq I_0(\phi').
\]
This contradicts \eqref{eqLimImu} since $\phi_\mu$ minimizes $I_\mu$.
\end{proof}

All that is left to prove now in the statement of Proposition \ref{propLich}
is the fact that the mapping $A \mapsto \phi$ is continuous. Assume given a
sequence $(A_k)_k$ of elements of $L^2(M, \bR)$ converging strongly to
$A \in L^2(M, \bR)$. We denote by $(\phi_k)_k$ the corresponding sequence
of solutions to the problem \eqref{eqWeakForm0}.

We first claim that $(\phi_k)_k$ is uniformly bounded from below. Let
$\Lambda$ be as in Lemma \ref{lmSubSol}. Since
$0 \leq (A_k^2)_{,\Lambda} \leq \Lambda$, and $A_k^2 \to A^2$ a.e., we
have $(A_k^2)_{, \Lambda} \to (A^2)_{, \Lambda}$ in $L^p(M, \bR)$ as a
consequence of the dominated convergence theorem.
If $u$ denotes the solution to the Robin problem \eqref{eqLinSys}
and $u_k$ that of the Robin problem \eqref{eqLinSys} where $A$ is
replaced by $A_k$, we then have $u_k \to u$ in $W^{2, p}(M, \bR)$.
In particular, the sequence $(u_k)_k$ is bounded in $L^\infty(M, \bR)$.

Choosing $\lambda_-$ sufficiently small, i.e.
\[
 \lambda_- \leq \min \left\{\inf_k (1 + \|u_k\|_{L^\infty(M, \bR)})^{-N-1}, 
\inf_k \|u_k\|_{L^\infty(M, \bR)} 
\left\|\frac{\Theta_-}{2}\right\|_{L^\infty(\partial M, \bR)}^{\frac{2}{N-2}}\right\}
\]
(see \eqref{eqCondLambda}), $\lambda_- u$ is a subsolution to \eqref{eqLichSys}
and $\lambda_- u_k$ a subsolution to \eqref{eqLichSys} with $A$ replaced by
$A_k$. As $u_k \to u$ in $L^\infty(M, \bR)$ and $u_k$ and $u$ are all
bounded from below by a positive constant, there exists an $\varepsilon > 0$
such that $u_k, u \geq \varepsilon$. As $\lambda_- u$ (resp. $\lambda_- u_k$)
is a subsolution to \eqref{eqLichSys} (resp. to \eqref{eqLichSys} with
$A$ replaced by $A_k$, see Lemma \ref{lmSubSol2}), we have
$u_k, u \geq \eta \definedas \lambda_- \varepsilon > 0$.

We now proceed as in the proof of Lemma \ref{lmSubSol2}. We subtract the
weak forms \eqref{eqWeakForm} of the system satisfied by $\phi$ and 
$\phi_k$:
\[
\begin{aligned}
0 &= \int_M \left[\frac{4(n-1)}{n-2} \<\dd(\phi - \phi_k), \dd\psi\> + 
\scal_g~(\phi 
- \phi_k) \psi + \frac{n-1}{n} \tau^2 (\phi^{N-1} - \phi_k^{N-1}) 
\psi\right] 
\dd\mu^g\\
  &\qquad - \int_M \psi \left(\frac{A^2}{\phi^{N+1}} - 
\frac{A_k^2}{\phi_k^{N+1}}\right) \dd\mu^g\\
  &\qquad + 2 \int_{\partial M} \left[H_g (\phi - \phi_k) - 
\frac{\Theta_-}{2} 
(\phi^{N/2} - \phi_k^{N/2})\right]\psi \dd\mu^g.
\end{aligned}
\]
We choose $\psi = (\phi - \phi_k)_{-K, K}$ for some large
$K > 0$. Note that if $f$ is an increasing function, we have for all $x$ and $y$, $(x-y)(f(x)-f(y))\geq 0$. From this remark, we see that
\[
 0 \leq \frac{n-1}{n} \int_M \tau^2 (\phi^{N-1} - \phi_k^{N-1}) \psi 
\dd\mu^g - 
\int_{\partial M} \frac{\Theta_-}{2} (\phi^{N/2} - \phi_k^{N/2})\psi 
\dd\mu^g,
\]
so we get
\begin{equation}\label{eqWeakFormDiff3}
\begin{aligned}
\int_M \psi \left(\frac{A^2}{\phi^{N+1}} - 
\frac{A_k^2}{\phi_k^{N+1}}\right) 
\dd\mu^g
  &\geq \int_M \left[\frac{4(n-1)}{n-2} \<\dd(\phi - \phi_k), \dd\psi\> + 
\scal_g~(\phi - \phi_k) \psi\right] \dd\mu^g\\
  &\qquad + 2 \int_{\partial M} H_g (\phi - \phi_k)\psi \dd\mu^g.
\end{aligned}
\end{equation}
We rework the left-hand side as follows. 
Assume first that $\phi_k \leq \phi$ at some point of $M$, then
\[
 \psi \left(\frac{A^2}{\phi^{N+1}} - \frac{A_k^2}{\phi_k^{N+1}}\right)
 = \frac{A^2 - A_k^2}{\phi^{N+1}} \psi + A_k^2 \left(\frac{1}{\phi^{N+1}} 
- 
\frac{1}{\phi_k^{N+1}}\right) \psi.
\]
As $x \mapsto x^{-N-1}$ is decreasing and $x \mapsto (x)_{-K, K}$
increasing, we get that
\[
 A_k^2 \left(\frac{1}{\phi^{N+1}} - \frac{1}{\phi_k^{N+1}}\right) \psi 
\leq 0.
\]
Since, $\phi_k \geq 0$, we have $0 \leq \psi \leq \phi$. Hence,
\[
 \psi \left(\frac{A^2}{\phi^{N+1}} - \frac{A_k^2}{\phi_k^{N+1}}\right)
 \leq \frac{A^2 - A_k^2}{\phi^{N+1}} \psi \leq \frac{|A^2 - 
A_k^2|}{\phi^N} \leq 
\eta^{-N} |A^2 - A_k^2|.
\]
Similar calculations show that the previous inequality also holds when
$\phi_k \geq \phi$. Using \eqref{eqWeakFormDiff3}, we have thus proven
\[
\begin{aligned}
\eta^{-N} \int_M |A^2 - A_k^2| \dd\mu^g
  &\geq \int_M \left[\frac{4(n-1)}{n-2} \<\dd(\phi - \phi_k), \dd\psi\> + 
\scal_g~(\phi - \phi_k) \psi\right] \dd\mu^g\\
  &\qquad + 2 \int_{\partial M} H_g (\phi - \phi_k)\psi \dd\mu^g.
\end{aligned}
\]
Letting $K$ tend to $\infty$ and arguing as in the proof of Lemma
\ref{lmSubSol2}, we conclude that
\[
 Q_g(\phi - \phi_k) \leq \eta^{-N} \int_M |A^2 - A_k^2| \dd\mu^g.
\]
From the coercivity of $Q_g$, we obtain that $\phi - \phi_k \to 0$ in
$W^{1, 2}(M, \bR)$.

This ends the proof of Proposition \ref{propLich}. 

\subsection{Regularity theory for the solutions of the Lichnerowicz equation}

We now get an improved
estimate for the solution $\phi$ :
\begin{proposition}\label{propImprovedRegL2}
The solution $\phi$ constructed in Proposition \ref{propLich} satisfies
$\phi^{\fraccheloue} \in W^{1, 2}(M, \bR)$ and
$\left\|\phi^{\fraccheloue}\right\|_{W^{1, 2}(M, \bR)} \lesssim \|A\|_{L^2(M, 
\bR)}$, where $U_1 \lesssim U_2$ means that $U_1 \leq c U_2$ for some constant $c$ independant of $\phi$ and $A$. 
\end{proposition}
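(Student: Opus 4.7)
The plan is to follow the suggestion implicit in the conformal-change discussion around \eqref{eqConfTransQ}: pass to the conformal gauge $g_0 \definedas u_0^{N-2}g$ (with $u_0$ the strictly positive eigenfunction from Proposition \ref{propEigenFunction}), so that, by \eqref{eqConfTransf}, $H_{g_0}\equiv 0$ and $\scal_{g_0} = \lambda_0(M,g)u_0^{2-N}>0$; then test the transformed weak Lichnerowicz equation against a truncation of $\phibar^{N+1}$, with $\phibar \definedas \phi/u_0$. In this gauge every term on the left-hand side of the weak form has a good sign, and the $A^2$-contribution yields a bound by $\|A\|_{L^2}^2$ once one uses the pointwise inequality $\psi \leq \phibar^{N+1}$.

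First I would rewrite \eqref{eqWeakForm0} on $(M,g_0)$ in terms of $\phibar$. Using the conformal covariance of the conformal Laplacian, the identity \eqref{hgtilde}, and the defining relation $\frac{2(n-1)}{n-2}\partial_\nu u_0 + H_g u_0 = 0$ of $u_0$, one obtains that $\phibar$ satisfies
\[
\begin{aligned}
0 &= \int_M \left[\frac{4(n-1)}{n-2}\<d\phibar,d\psi\>_{g_0} + \scal_{g_0}\phibar\psi + \frac{n-1}{n}\tau^2 \phibar^{N-1}\psi - \frac{A_0^2 \psi}{\phibar^{N+1}}\right] d\mu^{g_0} \\
 &\qquad - \int_{\partial M}\Theta_-\phibar^{N/2}\psi\, d\mu^{g_0}
\end{aligned}
\]
for every $\psi \in W^{1,2}(M,\bR)\cap L^\infty(M,\bR)$, where $A_0 \definedas u_0^{-N}A$. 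Since $u_0$ is continuous and strictly positive on $M$, it is bounded above and below by positive constants, so $\|A_0\|_{L^2(M,g_0)} \lesssim \|A\|_{L^2(M,g)}$.

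Next I would test against $\psi_k \definedas \min(\phibar^{N+1},k)\in W^{1,2}(M,\bR)\cap L^\infty(M,\bR)$, whose distributional gradient equals $(N+1)\phibar^N d\phibar$ on $\{\phibar^{N+1}<k\}$ and is zero elsewhere (cf.\ \cite[Theorem 7.8]{GilbargTrudinger}). The five terms on the left-hand side all become cooperative: the gradient term is $\frac{4(n-1)(N+1)}{n-2}\int_{\{\phibar^{N+1}<k\}}\phibar^N |d\phibar|^2_{g_0}\, d\mu^{g_0}\geq 0$; the $\scal_{g_0}$- and $\tau^2$-terms are manifestly $\geq 0$; the $A_0^2$-term is bounded above by $\int_M A_0^2\, d\mu^{g_0} \lesssim \|A\|_{L^2(M,g)}^2$ because $\psi_k\leq \phibar^{N+1}$; and the boundary term is $\geq 0$ since $\Theta_-\leq 0$. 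Discarding the $\tau^2$- and boundary terms, this gives
\[
\int_{\{\phibar^{N+1}<k\}}\phibar^N |d\phibar|_{g_0}^2\, d\mu^{g_0} + \int_M \scal_{g_0}\phibar\psi_k\, d\mu^{g_0} \lesssim \|A\|_{L^2(M,g)}^2.
\]

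Finally I would let $k\to\infty$. Monotone convergence, together with the identity $\phibar^N |d\phibar|^2 = (N/2+1)^{-2}|d\phibar^{N/2+1}|^2$, the uniform lower bound $\scal_{g_0}\geq c>0$ (since $u_0$ is bounded above), and the convergence $\phibar\psi_k\nearrow \phibar^{N+2}$, delivers $\|\phibar^{N/2+1}\|_{W^{1,2}(M,g_0)}^2 \lesssim \|A\|_{L^2(M,g)}^2$. Converting back to $g$ via $\phi^{N/2+1} = u_0^{N/2+1}\phibar^{N/2+1}$ and using once more the uniform two-sided bounds on $u_0$ (so that multiplication by $u_0^{N/2+1}$ is a bicontinuous isomorphism of $W^{1,2}$, and the $W^{1,2}$-norms of $g$ and $g_0$ are equivalent) yields the desired estimate $\|\phi^{N/2+1}\|_{W^{1,2}(M,g)} \lesssim \|A\|_{L^2(M,g)}$.

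The main obstacle is the truncation itself: verifying admissibility of $\psi_k$ as a test function, the precise form of its distributional gradient, and the legitimacy of the limit $k\to\infty$ in every term, in particular the boundary one. A secondary difficulty is organizing the conformal change so that the \emph{a priori} sign-indefinite $H_g$ boundary contribution in \eqref{eqWeakForm0} is absorbed exactly into the choice of gauge; this is precisely what the identity $H_{g_0}\equiv 0$ from \eqref{eqConfTransf} accomplishes.
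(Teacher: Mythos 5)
Your proposal coincides with the paper's proof: same conformal change $g_0 = u_0^{N-2}g$ with $\phibar = \phi/u_0$, same test functions $\psi_k = (\phibar^{N+1})_{,k}$, same use of monotone/dominated convergence on the respectively positive and $A^2$-terms, and the same return to the original metric via the boundedness of $u_0^{\pm1}$. The only remark worth recording is that the parenthetical "since $u_0$ is bounded above" is not quite sufficient to justify $\scal_{g_0}\geq c>0$; one also needs $\lambda_0(M,g)>0$, which follows from $\cE(M,g)>0$ via \eqref{eqEigenvalueIneq} and should be invoked explicitly.
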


\begin{proof}
Calculations similar to the one done to get Equation \eqref{eqConfTransQ}
yield, if $\phibar = u_0^{-1} \phi$. We remind the reader that $g_0 =u_0^{N-2} g$ where $u_0$ was defined in Proposition \ref{propEigenFunction}.
\[
 I_0(\phi) = Q_{g_0}(\phibar) + \frac{1}{N} \int_M \left[\tau^2 
\phibar^N 
\dd\mu^{g_0} + A^2 u_0^{-2N} \phibar^{-N}\right] \dd\mu^{g_0} - \frac{2}{N+2} 
\int_{\partial M} \Theta_- \phibar^{\fraccheloue} \dd\mu^{g_0}.
\]
As a consequence, $\phi$ solves \eqref{eqWeakForm0} if and only if
$\phibar$ satisfies, for any $\psi \in W^{1, 2}(M, \bR)\cap L^\infty (M,\bR)$,
\begin{equation}\label{eqWeakFormConf}
\begin{aligned}
0 &= \int_M \left[\frac{4(n-1)}{n-2} \<\dd\phibar, \dd\psi\>_{g_0} + \lambda_0(M, 
g) 
u_0^{2-N} \phibar\psi + \frac{n-1}{n} \tau^2 \phibar^{N-1} \psi - 
\frac{A^2 
u_0^{-2N} \psi}{\phibar^{N+1}}\right] \dd\mu^{g_0}\\
  &\qquad \int_{\partial M} (-\Theta_-) \phibar^{N/2} \psi \dd\mu^{g_0}.
\end{aligned}
\end{equation}
Now, the idea is to set $\psi = \psi_k = (\phibar^{N+1})_{, k}$ in 
\eqref{eqWeakFormConf}
for $k \geq 0$ and let $k$ tend to infinity. Since we have
\[
 \dd\psi = \left\lbrace\begin{aligned} (N+1) \phibar^N \dd\phibar & \quad\text{if } 
\phibar^{N+1} < 
k,\\ 0 & \quad \text{otherwise,}\end{aligned}\right.
\]
we see that the sequence of functions 
\[
 \<\dd\phibar, \dd\psi_k\>_{g_0} = \frac{N+1}{\dsp{\left(\frac{N}{2}+1\right)^2}} 
\left\lbrace\begin{aligned} |\dd(\phibar^{\fraccheloue})|^2 & \quad\text{if } 
\phibar^{N+1} < k,\\ 0 & \quad \text{otherwise.}\end{aligned}\right.
\]
is increasing and converges a.e. to
$\displaystyle \frac{N+1}{\dsp{\left(\frac{N}{2}+1\right)^2}} 
|\dd(\phibar^{\fraccheloue})|^2$.
By the monotone convergence theorem, we conclude that
\begin{align*}
 &\int_M \left[\frac{4(n-1)}{n-2} \<\dd\phibar, \dd\psi_k\>_{g_0} + \lambda_0(M, 
g) 
u_0^{2-N} \phibar\psi_k + \frac{n-1}{n} \tau^2 \phibar^{N-1} 
\psi_k\right] 
\dd\mu^{g_0}\\
 &\qquad \to_{k \to \infty} \int_M \left[\frac{3n-2}{n-1} 
|\dd(\phibar^{\fraccheloue})|^2_{g_0} + \lambda_0(M, g) \phibar^{N+2} + 
\frac{n-1}{n} 
\tau^2 \phibar^{2N}\right] \dd\mu^{g_0}
\end{align*}
Similarly,
\[
 \int_{\partial M} (-\Theta_-) \phibar^{N/2} \psi_k \dd\mu^{g_0} \to_{k \to 
\infty} \int_{\partial M} (-\Theta_-) \phibar^{\frac{3N}{2}+2} \dd\mu^{g_0}
\]
and by dominated convergence
\[
 \int_M \frac{A^2 u_0^{-2N} \psi_k}{\phibar^{N+1}} \dd\mu^{g_0} \to_{k \to 
\infty} 
\int_M A^2 u_0^{-2N} \dd\mu^{g_0}.
\]
Thus, passing to the limit in \eqref{eqWeakFormConf} we have shown that
\begin{align*}
 0
 &= \int_M \left[\frac{3n-2}{n-1} |\dd(\phibar^{\fraccheloue})|^2_{g_0} + \lambda_0(M, 
g) 
\phibar^{N+2} + \frac{n-1}{n} \tau^2 \phibar^{2N}\right] \dd\mu^{g_0}\\
 &\qquad + \int_{\partial M} (-\Theta_-) \phibar^{\frac{3N}{2}+2} \dd\mu^{g_0} - 
\int_M 
A^2 u_0^{-2N} \dd\mu^{g_0}.
\end{align*}
In particular,
\[
 \int_M \left[\frac{3n-2}{n-1} |\dd(\phibar^{\fraccheloue})|^2_{g_0} + \lambda_0(M, 
g) 
\phibar^{N+2}\right]\dd\mu^{g_0} \leq \int_M A^2 u_0^{-2N} \dd\mu^{g_0}.
\]
As $u_0, u_0^{-1} \in W^{2, p}(M, \bR) \subset L^\infty(M, \bR)$, the multiplication by $u_0$ is an automorphism of $W^{1,2}(M,\bR)$, so we conclude that
\[
 \|\phi^{\frac{N}{2}+1}\|_{W^{1, 2}(M, \bR)} \lesssim \|A\|_{L^2(M, \bR)}.
\]

Regularity for $\phi^N$ can now be obtained as follows. We set
$\psi = \phi^{\frac{N}{2}+1}$ and $\dsp{a = \frac{N}{\dsp{\fraccheloue}}}$ so $\phi^N = \psi^a$.
and $\dd\phi^N = a \psi^{a-1} \dd\psi$. From the Sobolev embedding theorem, we know 
that $\psi \in L^N(M, \bR)$ so $\psi^{a-1} \in L^{N/(a-1)}(M, \bR)$. Thus, from 
H\"older's inequality, we get
\[
 \left\|\dd\phi^N\right\|_{L^{r_0}(M, \bR)}
  \leq \|\psi^{a-1}\|_{L^{N/(a-1)}(M, \bR)} \|\dd\psi\|_{L^2(M, \bR)}
\]
provided that $\dsp{\frac{1}{r_0} = \frac{a-1}{N} + \frac{1}{2}}$. Simple calculations 
show that $\dsp{r_0 = \frac{2n(n-1)}{n^2-2}}$ as claimed. Chasing out estimates in 
terms of $\|A\|_{L^2(M, \bR)}$, we conclude that 
$$\left\|\phi^N\right\|_{W^{1, 
r_0}(M, \bR)} \lesssim \|A\|^{1 - \frac{1}{n}}_{L^2(M, \bR)}.$$
\end{proof}

The estimate of Proposition \ref{propImprovedRegL2} allows us to prove continuity
and compactness of the mapping $A \mapsto \phi^N$. This will be useful in
Section \ref{secCoupled} where we apply the Schauder fixed point theorem:

\begin{proposition}\label{propContinuity}
Let $\Phi: L^2(M, \bR) \to L^a(M, \bR) \times L^b(\partial M, \bR)$ be the
mapping associating to an $A$ in $L^2(M, \bR)$ the pair
$(\phi^N, \phi^N \vert_{\partial M})$, where $\phi$ denotes the solution to the
weak formulation of the Lichnerowicz equation (see Proposition \ref{propLich}).
Then $\Phi$ is well-defined, continuous and compact provided
\[
 a \in \left[1, \frac{N}{2}+1\right)\quad\text{and}\quad
 b \in \left[1, \frac{1}{N}\left(\frac{N}{2}+1\right)^2\right).
\]
\end{proposition}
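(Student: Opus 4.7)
The plan is to deduce everything from the single estimate $\|\phi^{\fraccheloue}\|_{W^{1,2}(M,\bR)} \lesssim \|A\|_{L^2(M,\bR)}$ established in Proposition \ref{propImprovedRegL2}, combined with the Sobolev embedding, the trace inequality recalled in Section \ref{secEscobar}, and Rellich--Kondrachov compactness. For well-definedness, the embedding $W^{1,2}(M,\bR) \hookrightarrow L^N(M,\bR)$ applied to $\phi^{\fraccheloue}$ gives $\phi \in L^{N(\fraccheloue)}(M,\bR)$, so $\phi^N \in L^{\fraccheloue}(M,\bR)$, and in particular $\phi^N \in L^a(M,\bR)$ for every $a \leq \fraccheloue$. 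Similarly, the trace inequality bounds $\phi^{\fraccheloue}\vert_{\partial M}$ in $L^{\fraccheloue}(\partial M,\bR)$, from which $\phi^N\vert_{\partial M} \in L^b(\partial M,\bR)$ for every $b \leq (\fraccheloue)^2 / N$.

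For continuity, fix $A_k \to A$ in $L^2(M,\bR)$. Proposition \ref{propLich} yields $\phi_k \to \phi$ in $W^{1,2}(M,\bR)$, and after extracting a subsequence one has $\phi_k \to \phi$ almost everywhere. Proposition \ref{propImprovedRegL2} provides a uniform bound on $\phi_k^{\fraccheloue}$ in $W^{1,2}(M,\bR)$, hence in $L^N(M,\bR)$; combining the a.e.\ convergence with this uniform $L^N$-bound through Vitali's theorem produces $\phi_k^{\fraccheloue} \to \phi^{\fraccheloue}$ strongly in $L^r(M,\bR)$ for every $r<N$. To pass from here to convergence of $\phi_k^N$, I would use the elementary pointwise estimate
\[
 |\phi_k^N - \phi^N| \leq N \max(\phi_k,\phi)^{N-1}\, |\phi_k - \phi|
\]
together with Hölder's inequality with conjugate exponents $p,q>1$ chosen so that $a(N-1)p \leq N(\fraccheloue)$ (so that the $\max$-factor is controlled uniformly in $k$ by the $L^N$-bound on $\phi_k^{\fraccheloue}$) and $aq < N(\fraccheloue)$ (so that $\|\phi_k - \phi\|_{L^{aq}}^a \to 0$). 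Such $(p,q)$ exist precisely when $a < \fraccheloue$, yielding convergence in $L^a(M,\bR)$. An entirely analogous argument using the compact trace embedding $W^{1,2}(M,\bR) \hookrightarrow L^s(\partial M,\bR)$ for $s<\fraccheloue$ handles the boundary component and pins down the threshold $b < (\fraccheloue)^2/N$.

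Compactness follows by rerunning the same logic without assuming that $(A_k)$ converges: if $(A_k)$ is merely bounded in $L^2(M,\bR)$, then $(\phi_k^{\fraccheloue})_k$ is bounded in $W^{1,2}(M,\bR)$, and Rellich--Kondrachov together with the compact trace extracts a subsequence converging strongly in $L^r(M,\bR)$ and $L^s(\partial M,\bR)$ for every $r<N$ and every $s<\fraccheloue$, and in particular almost everywhere. The Hölder argument of the previous paragraph then upgrades this strong convergence to convergence of $(\phi_k^N, \phi_k^N\vert_{\partial M})$ in $L^a(M,\bR) \times L^b(\partial M,\bR)$.

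The hardest step is not any individual computation but the exponent bookkeeping in the Hölder argument: one must simultaneously use the Sobolev/trace $L^N$-bound on $\phi^{\fraccheloue}$ to control the $(N-1)$-power of the $\max$, and the a.e./$L^r$-convergence of $\phi_k^{\fraccheloue}$ to make the difference factor small. It is this optimization that produces precisely the critical thresholds $\fraccheloue$ and $(\fraccheloue)^2/N$ appearing in the statement; strict inequality on $a$ and $b$ is what creates the slack needed for subcritical Rellich--Kondrachov to apply.
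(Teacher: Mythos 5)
Your argument follows the same overall strategy as the paper: both hinge on the estimate $\|\phi^{\fraccheloue}\|_{W^{1,2}(M,\bR)} \lesssim \|A\|_{L^2(M,\bR)}$ from Proposition \ref{propImprovedRegL2}, use Rellich--Kondrachov (resp.\ compactness of the trace) to extract a strongly convergent subsequence, and pass to $\phi^N$ through a mean-value-theorem pointwise bound combined with H\"older. The genuine difference is in the bookkeeping. You aim directly at $L^a$ by optimizing over H\"older exponents $(p,q)$, and your computation does recover the sharp thresholds $a<\fraccheloue$ and $b < \frac{1}{N}\left(\fraccheloue\right)^2$ (indeed: with $\lambda=\frac1p$, the constraints $(N-1)ap \leq N\left(\fraccheloue\right)$ and $aq < N\left(\fraccheloue\right)$ are simultaneously satisfiable iff $Na < N\left(\fraccheloue\right)$, and similarly on the boundary). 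The paper instead first proves continuity/compactness into $L^1$ only, where the H\"older split is forced (one factor in $L^N$ or $L^{\fraccheloue}$, the other in the conjugate exponent), and then upgrades to general $a$ (resp.\ $b$) by the interpolation inequality $\|\zeta\|_{L^a} \leq \|\zeta\|_{L^1}^{1-\lambda}\|\zeta\|_{L^{\fraccheloue}}^\lambda$. Both routes are valid and give identical exponents; the paper's $L^1$-then-interpolate version is a bit more modular and avoids the exponent optimization.

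Two wrinkles in your write-up. First, in the continuity step you extract an a.e.-convergent subsequence and invoke Vitali, which only gives convergence along a subsequence; to conclude you would need the standard ``every subsequence has a sub-subsequence converging to the same (unique, by Proposition \ref{propLich}) limit'' observation. You can bypass both Vitali and the subsequence entirely: $\phi_k \to \phi$ in $W^{1,2}(M,\bR)$ already gives $\phi_k\to\phi$ in $L^N(M,\bR)$ by the continuous Sobolev embedding, and combined with the uniform $L^{N(\fraccheloue)}(M,\bR)$-bound from Proposition \ref{propImprovedRegL2} this interpolates to $\phi_k\to\phi$ in $L^{aq}(M,\bR)$ for any $aq < N\left(\fraccheloue\right)$, which is all the H\"older estimate needs. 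Second, in the compactness step you write the pointwise estimate in terms of $\phi_k$ and $\phi$, but when $(A_k)$ is merely bounded there is no candidate $\phi$; you should phrase the mean-value bound in terms of $\theta_k = \phi_k^{\fraccheloue}$ and its Rellich limit $\theta_\infty$, i.e.\ $|\theta_k^\alpha - \theta_\infty^\alpha| \leq \alpha\,\max(\theta_k,\theta_\infty)^{\alpha-1}|\theta_k-\theta_\infty|$ with $\alpha = N/\left(\fraccheloue\right)$, exactly as the paper does. Both are routine adjustments, not missing ideas.
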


\begin{proof}
We first prove that the mapping $A \mapsto \phi^N$ is continuous and compact
from $L^2(M, \bR)$ to $L^1(M, \bR)$ (i.e. when $a = 1$) and deduce the result
for larger values of $a$ by interpolation.

From Proposition \ref{propLich}, we know that the mapping $A \mapsto \phi$ is
continuous as a map from $L^2(M, \bR)$ to $W^{1, 2}(M, \bR)$. From the Sobolev
embedding theorem, we get that $A \mapsto \phi$ is continuous as a map to
$L^N(M, \bR)$. Now, from the mean value theorem, if $\phi_1$ and $\phi_2$ are
two positive functions in $L^N(M, \bR)$, there exists a function
$\psi \in [\phi_1, \phi_2]$ such that, for all $x \in M$
\[
 \phi_1^N(x) - \phi_2^N(x) = N (\phi_1(x) - \phi_2(x)) \psi(x)^{N-1}.
\]
In particular, we have
\[
 0 \leq \psi^{N-1}(x) \leq \max\{\phi_1^{N-1}(x), \phi^{N-1}_2(x)\} \leq \phi_1^{N-1}(x) + \phi_2^{N-1}(x).
\]
Thus,
\begin{align*}
 \left\|\phi_1^N - \phi_2^N\right\|_{L^1(M, \bR)}
 &= \int_M \left|\phi_1^N - \phi_2^N\right| \dd\mu^g\\
 &= N \int_M |\phi_1(x) - \phi_2(x)| \psi^{N-1} \dd\mu^g\\
 &\leq N \int_M |\phi_1 - \phi_2| \left(\phi_1^{N-1} + \phi_2^{N-1}\right) \dd\mu^g\\
 &\leq N \|\phi_1 - \phi_2\|_{L^N(M, \bR)} \left(\|\phi_1\|^{N-1}_{L^N(M, \bR)} + \|\phi_2\|^{N-1}_{L^N(M, \bR)}\right),
\end{align*}
where the last line follows from H\"older's inequality.
This proves that the mapping $\phi \mapsto \phi^N$ is locally Lipschitz and, hence,
continuous from $L^N(M, \bR)$ to $L^1(M, \bR)$. Continuity of $A \mapsto \phi^N$
follows by composition.

To prove compactness, we utilize Proposition \ref{propImprovedRegL2}: if we are
given a bounded sequence $(A_k)_k$ of elements of $L^2(M, \bR)$, the corresponding
sequence of solutions $(\phi_k)_k$ are such that
$\|\phi_k^{\fraccheloue}\|_{W^{1, 2}(M, \bR)}$ is bounded. Set
$\theta_k \definedas \phi_k^{\fraccheloue}$. From Rellich's theorem, the embedding
$W^{1, 2}(M, \bR) \hookrightarrow L^2(M, \bR)$ is compact so there is a
subsequence of $(\theta_k)_k$ that converges in $L^2(M, \bR)$ to some
$\theta_\infty$. Considering that subsequence, there is no loss of
generality assuming that the sequence $(\theta_k)_k$ does converges to
$\theta_\infty$. As there is a subsequence of $(\theta_k)_k$ that weakly
converges in $W^{1, 2}(M, \bR)$, we have $\theta_\infty \in W^{1, 2}(M, \bR)$
(see e.g. the proof of Lemma \ref{lmEscobar}).

Let $\dsp{\alpha \definedas \frac{N}{\dsp{\fraccheloue}}=\frac{n}{n-1}}$ so $\phi_k^N = \theta_k^\alpha$. By
repeating the above argument with the mean value theorem, we have
\begin{align*}
 \left\|\phi_k^N - \phi_\infty^N\right\|_{L^1(M, \bR)}
 & = \int_M |\phi_k^N - \phi_\infty^N| \dd\mu^g\\
 & \leq \alpha \int_M |\theta_k - \theta_\infty| \left(\theta_k^{\alpha-1} + \theta_\infty^{\alpha-1}\right) \dd\mu^g\\
 & \leq \alpha \vol(M, g)^\beta \|\theta_k - \theta_\infty\|_{L^2(M, \bR)}\\ &\qquad \times \left(\left\|\theta_k^{\alpha-1}\right\|_{L^{N/(\alpha-1)}(M, \bR)} + \left\|\theta_\infty^{\alpha-1}\right\|_{L^{N/(\alpha-1)}(M, \bR)}\right)\\
 & \leq \alpha \vol(M, g)^\beta \|\theta_k - \theta_\infty\|_{L^2(M, \bR)} \left(\left\|\theta_k\right\|_{L^N(M, \bR)}^{\alpha-1} + \left\|\theta_\infty\right\|_{L^N(M, \bR)}^{\alpha-1}\right),
\end{align*}
where $\dsp{\beta = \frac{n^2 - 2n + 2}{2n(n-1)}}$.
As the norms $\left\|\theta_k\right\|_{L^N(M, \bR)}$ are uniformly bounded, we
conclude that $\dsp{\phi_k^N \to \phi_\infty^N}$ in $L^1(M, \bR)$ proving that the
mapping $A \mapsto \phi^N$ is compact as a mapping from $L^2(M, \bR)$ to $L^1(M, \bR)$.

Continuity and compactness of the mapping $A \mapsto \phi^N$ from $L^2(M, \bR)$
to $L^a(M, \bR)$, $\displaystyle a \in \left[1, \frac{N}{2}+1\right)$ is then
obtained easily from the interpolation inequality (see \cite[Inequality ($7.9$)]{GilbargTrudinger}) Let $\lambda \in [0, 1)$ be
such that
\[
 \frac{1}{a} = \frac{1-\lambda}{1} + {\lambda}{\frac{N}{2}+1},
\]
then we have, for any $\zeta \in L^{\fraccheloue}(M, \bR)$,
\[
 \|\zeta\|_{L^a(M, \bR)} \leq \|\zeta\|_{L^1(M, \bR)}^{1-\lambda} \|\zeta\|_{L^{\fraccheloue}(M, \bR)}^{\lambda}.
\]
Indeed, this inequality shows that if we are given a bounded sequence of
functions $(\phi_k^N)_k$ in $L^{\fraccheloue}(M, \bR)$ that converges in $L^1(M, \bR)$, it also converges in $L^a(M, \bR)$ for we can write
\[
\left\|\varphi_k^N - \varphi_{k'}^N\right\|_{L^a(M, \Rb)}
 \leqslant \underbrace{\|\varphi_k^N - \varphi_{k'}^N\|_{L^1(M, \Rb)}^{1-\lambda}}_{\text{tends to zero when } k, k' \to \infty} \quad
\underbrace{\|\varphi_k^N - \varphi_{k'}^N\|_{L^{\fraccheloue}(M, \Rb)}^{\lambda}}_{\text{bounded indepedently from $k$ and $k'$}}.
\]

Continuity and compactness of $A \mapsto \phi^N \vert_{\partial M}$ is proven by
the same methods. Details are left to the reader.
\end{proof}

We conclude this section by proving that, if $A$ enjoys better regularity than 
belonging to $L^2(M, \bR)$, $\phi$ is a strong solution to the Lichnerowicz 
equation that satisfies enhanced regularity.

\begin{lemma}\label{lmLich2}
Assume that $A \in L^{2q}(M, \bR)$ for some $q \in [1, p]$. Then the solution 
$\phi$ constructed in Proposition \ref{propLich} satisfies $\phi^N \in L^t(M, \Rb)$ with
 $$
 \left\lbrace
 \begin{aligned}
   t =  \frac{2(n-1)q}{n-2q} & \qquad\text{if } q < \frac{n}{2},\\
   t \in [1; \infty[\text{ arbitrary } & \qquad\text{otherwise}.
 \end{aligned}
 \right.
 $$
 In particular, $\phi^N \in W^{1, r}(M, \Rb)$ with 
\[
r = \left\lbrace
\begin{aligned}
\frac{n^2 - 2q}{2n(n-1)q} & \qquad \textrm{if } q < \frac{n^2}{n^2 - n + 2},\\
2 & \qquad \textrm{otherwise.}
\end{aligned}
\right.
\]
\end{lemma}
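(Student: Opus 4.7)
The natural strategy is to generalize the Moser-type test-function argument of Proposition \ref{propImprovedRegL2}. Working in the conformal gauge $g_0 = u_0^{N-2} g$ with $\phibar = \phi/u_0$, I would test the weak formulation \eqref{eqWeakFormConf} against $\psi = (\phibar^{2\alpha+1})_{,k}$ for a parameter $\alpha \geq N/2$ to be chosen below, and pass to the limit $k \to \infty$ by monotone/dominated convergence exactly as in the proof of Proposition \ref{propImprovedRegL2}. The gradient term then controls a positive multiple of $\|d(\phibar^{\alpha+1})\|_{L^2(M, \bR)}^2$; the scalar curvature, matter, and boundary contributions are all non-negative (as $\psi \geq 0$, $\lambda_0 > 0$, and $\Theta_- \leq 0$); and the nonlinear term produces $\int A^2 u_0^{-2N} \phibar^{2\alpha - N} \dd\mu^{g_0}$.

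Now that $A \in L^{2q}(M, \bR)$, the nonlinear term can be controlled by H\"older's inequality:
\[
\int A^2 u_0^{-2N} \phibar^{2\alpha - N} \dd\mu^{g_0} \lesssim \|A\|_{L^{2q}(M, \bR)}^2\, \|\phibar\|_{L^{(2\alpha - N) q'}(M, \bR)}^{2\alpha - N},
\]
with $q' = q/(q-1)$. The Sobolev embedding $W^{1, 2}(M, \bR) \hookrightarrow L^N(M, \bR)$ applied to $\phibar^{\alpha+1}$ supplies a priori control of $\phibar$ in $L^{N(\alpha+1)}(M, \bR)$, and matching the Hölder exponent $(2\alpha - N) q' = N(\alpha+1)$ uniquely singles out $\alpha = n(2q-1)/(n - 2q)$. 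The resulting right-hand side grows like $\|\phibar^{\alpha+1}\|_{L^N(M, \bR)}^{(2\alpha - N)/(\alpha+1)}$ with exponent strictly less than $2$, so Young's inequality absorbs it into the gradient term and yields $\phibar^{\alpha+1} \in W^{1, 2}(M, \bR)$. By the Sobolev embedding we then have $\phi \in L^{N(\alpha+1)}(M, \bR)$, equivalently $\phi^N \in L^{\alpha+1}(M, \bR) = L^{2(n-1)q/(n-2q)}(M, \bR)$, which is the claimed exponent $t$. For $q \geq n/2$ the denominator in the matching becomes non-positive, so the argument closes for every $\alpha \in [N/2, \infty)$ and produces integrability in every $L^t(M, \bR)$.

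For the $W^{1, r}$ statement I would apply the chain rule $d(\phi^N) = \dsp{\frac{N}{\alpha + 1}} \phi^{N - \alpha - 1}\, d(\phi^{\alpha+1})$. Since $d(\phi^{\alpha+1}) \in L^2(M, \bR)$ from the previous step, H\"older's inequality with exponent $s_1$ satisfying $\dsp{\frac{1}{r} = \frac{1}{s_1} + \frac{1}{2}}$ reduces the matter to $\phi^{N - \alpha - 1} \in L^{s_1}(M, \bR)$. Using the integrability $\phi \in L^{N(\alpha+1)}(M, \bR)$ and equating $s_1(N - \alpha - 1) = N(\alpha+1)$ yields, after substituting $\alpha = n(2q-1)/(n - 2q)$, exactly $r = 2n(n-1)q/(n^2 - 2q)$, valid precisely when $\alpha < N - 1$, i.e., $q < n^2/(n^2 - n + 2)$. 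In the complementary regime $q \geq n^2/(n^2 - n + 2)$ one has $\alpha \geq N - 1$, so $N - \alpha - 1 \leq 0$ and $\phi^{N - \alpha - 1} \leq \eta^{N - \alpha - 1}$ by the lower bound $\phi \geq \eta > 0$ from Lemma \ref{lmSubSol2}; this gives $d(\phi^N) \in L^2(M, \bR)$ directly, whence one may take $r = 2$.

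The principal technical obstacle I foresee is the rigorous justification of the passage to the limit $k \to \infty$ in the truncated test-function argument when $\alpha$ is large, because the integrand $A^2 u_0^{-2N} \phibar^{2\alpha - N}$ is not a priori in $L^1(M, \bR)$. I would address this, as in the proof of Proposition \ref{propImprovedRegL2}, by a bootstrap starting from $\alpha = N/2$ (where the result is already known) and incrementing $\alpha$ in successive steps; at each stage, the a priori integrability of $\phibar$ gained from the preceding iteration suffices to apply dominated convergence to the $A^2$-term and reach the next value of $\alpha$.
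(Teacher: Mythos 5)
Your proposal is correct and follows essentially the same Moser-type argument as the paper: test the conformally transformed weak form \eqref{eqWeakFormConf} against a truncated power of $\phibar$, drop the non-negative $\tau^2$- and $\Theta_-$-contributions, apply H\"older to the $A^2$-term, and tune the exponent so the Lebesgue norms on both sides of the resulting inequality coincide. Your parametrization (your $\alpha$ equals $\tfrac{N}{2}$ plus the paper's $\alpha$) and your choice of truncating the power rather than powering the truncation are cosmetically different but equivalent, and your $r = 2n(n-1)q/(n^2-2q)$ is the value actually intended by the statement (the displayed formula there is its reciprocal, matching the $1/r_0$ expression appearing in the proof of Proposition~\ref{propImprovedRegL2}). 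One small remark on your final paragraph: the bootstrap in $\alpha$ you propose to justify the passage to the limit is not in fact needed. Because the test function $\psi_k$ is bounded, every integral in \eqref{eqWeakFormConf} is automatically finite at fixed $k$, and the resulting estimate $\bigl\|(\phibar)_{,k}^{N}\bigr\|_{L^{t}(M,\bR)}^{1+2/N} \lesssim \|A\|_{L^{2q}(M,\bR)}^{2}$ has a $k$-independent right-hand side and a left-hand side that increases monotonically in $k$; monotone convergence alone then yields the conclusion, which is exactly what the paper does.
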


\begin{proof}
Our first task is to get an estimate for $\phi^N$ in some higher $L^r$-spaces.
We reproduce the proof of Proposition \ref{propImprovedRegL2} except that we
choose for $\psi$ higher powers of $\phibar$ in Equation
\eqref{eqWeakFormConf}. Let $\alpha \in \bR_+$ be some number to be chosen
later. We set $\psi = \psi_k = (\phibar)_{,k}^{N+1+2\alpha}$. We have
\[
 \dd\psi = \left\lbrace\begin{aligned}
  (N+1+2\alpha) \phibar^{N+2\alpha} \dd\phibar & \quad\text{if } \phibar < k,\\
  0 & \quad \text{otherwise.}
 \end{aligned}\right.
\]
So,
\begin{align*}
 \<\dd\phibar, \dd\psi_k\>_{g_0}
 &= \frac{N+1+2\alpha}{\dsp{\left(\frac{N}{2}+1+\alpha\right)^2}} 
\left\lbrace\begin{aligned} |\dd(\phibar^{\fraccheloue+\alpha})|^2 & \quad\text{if } 
\phibar < k,\\ 0 & \quad \text{otherwise}\end{aligned}\right.\\
 &= \frac{N+1+2\alpha}{\dsp{\left(\frac{N}{2}+1+\alpha\right)^2}} \left|\dd\left((\phibar)_{,k}^{\fraccheloue+\alpha}\right)\right|^2,
\end{align*}
Equation \eqref{eqWeakFormConf} then leads to the following inequality:
\begin{equation}\label{eqWeakFormConf2}
\begin{aligned}
&\int_M \left[\frac{4(n-1)}{n-2} \frac{N+1+2\alpha}{\dsp{\left(\frac{N}{2}+1+\alpha\right)^2}}  \left|\dd(\phibar)_{,k}^{\fraccheloue+\alpha}\right|_{g_0}^2
+ \lambda_0(M, g) u_0^{2-N} \left((\phibar)_{,k}^{\fraccheloue+\alpha}\right)^2\right] \dd\mu^{g_0}\\
&\qquad \leq \int_M A^2 u_0^{-2N} (\phibar)_{,k}^{2 \alpha} \dd\mu^{g_0}.
\end{aligned}
\end{equation}
From the Sobolev embedding theorem together with H\"older's inequality, we
infer
\begin{align*}
 \left\|(\phibar)_{,k}^{\fraccheloue+\alpha}\right\|_{L^N(M, \bR)}^2
 &\lesssim \int_M A^2 u_0^{-2N} (\phibar)_{,k}^{2 \alpha} \dd\mu^{g_0}\\
 &\lesssim \|A\|^2_{L^{2q}(M, \bR)} \|(\phibar)_{,k}^{2 \alpha}\|_{L^s(M, \bR)},
\end{align*}
where $r$ is such that $\dsp{\frac{1}{q} + \frac{1}{s} = 1}$. Thus, rearranging powers,
\[
 \left\|(\phibar)_{,k}^N\right\|_{L^{\fraccheloue+\alpha}(M, \bR)}^{\frac{2}{N} (\fraccheloue+\alpha)}
 \lesssim \|A\|^2_{L^{2q}(M, \bR)} \|(\phibar)_{,k}^N\|_{L^{\frac{2\alpha s}{N}}(M, \bR)}^{\frac{2\alpha}{N}}.
\]
The best we can hope for is that the Lebesgue norms of $(\phibar)_{,k}^N$
appearing on both sides coincide. Straightforward calculations show that we get
\[
 \alpha = \left(\frac{N}{2}+1\right) \frac{q-1}{\dsp{1-\frac{2q}{n}}}.
\]
This choice is valid as long as $\dsp{q < \frac{n}{2}}$ but $\alpha$ becomes negative
as soon as $\dsp{q > \frac{n}{2}}$. Assume first that $\dsp{q < \frac{n}{2}}$. Then we get, 
by simplifying the previous inequality, that
\[
 \left\|(\phibar)_{,k}^N\right\|_{L^{\fraccheloue+\alpha}(M, \bR)}^{1 + \frac{2}{N}}
 \lesssim \|A\|^2_{L^{2q}(M, \bR)}
\]
and, letting $k$ tend to infinity, we conclude that
\[
 \left\|\phi^N\right\|_{L^t(M, \bR)}^{1 + \frac{2}{N}}
 \lesssim \|A\|^2_{L^{2q}(M, \bR)},
\]
where $\dsp{t = \fraccheloue+\alpha = \frac{2(n-1)q}{n-2q} > q}$. If $\dsp{q \geq \frac{n}{2}}$, we have, for any $\alpha$,
that 
$$\dsp{\frac{N}{2} + 1 + \alpha \geq \frac{2 s}{N} \alpha}.$$ 
In particular,
\[
 \|(\phibar)_{,k}^N\|_{L^{\frac{2\alpha s}{N}}(M, \bR)} \lesssim \left\|(\phibar)_{,k}^N\right\|_{L^{\fraccheloue+\alpha}(M, \bR)}.
\]
So, by a similar reasoning, we get that $\phi^N \in L^t(M, \bR)$ for any $t \in
[1, \infty)$. Sobolev regularity for $\phi^N$ can then be obtained by means
similar to the one in Proposition \ref{propImprovedRegL2}. Note that, if $\dsp{q \geq
\frac{n^2}{n^2 - n + 2}}$, we can choose $\dsp{\alpha = \frac{N}{2}-1}$ in the previous
argument which directly leads to the fact that $\phi^N \in W^{1, 2}(M, \bR)$.
\end{proof}

Our next claim is that weak solutions to the Lichnerowicz equation are actually strong solutions if $A$ is regular enough.

\begin{prop}\label{propLichStrong}
Assume that $A \in L^{2q}(M, \bR)$ for some $q \in ]1,p]$. Then the weak solution $\phi$ constructed un Proposition \ref{propLich} is a strong solution and belongs to $W^{2, q}(M, \bR)$. We have $\phi^N \in W^{1, r}(M, \Rb)$ with 
$$
r = \left\{
\begin{aligned}
 \frac{8n(n-1)q}{7n^2 - 2n - (6n+4)q} &\qquad\text{if } q < \frac{n}{2},\\
 \frac{nq}{n-q} &\qquad\text{if } \frac{n}{2} < q < n,\\
 \infty &\qquad\text{if } q > n.
\end{aligned}
\right.
$$
\end{prop}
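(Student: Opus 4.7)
The strategy is a classical elliptic regularity bootstrap. The key inputs are Lemma \ref{lmLich2} (which controls the integrability of $\phi^N$ in terms of that of $A$), the uniform lower bound $\phi \geq \eta > 0$ from Lemmas \ref{lmSubSol}--\ref{lmMaxPrinciple} (which tames the singular factor $\phi^{-N-1}$), and the linear $L^q$-theory for the Robin problem that was already invoked in Lemma \ref{lmSubSol} via \cite[Lemma B.6]{HolstTsogtgerel}. The lower bound turns $A^2 \phi^{-N-1}$ into an $L^q(M, \bR)$ function since $A^2 \in L^q$, and this is what ultimately drives the argument.

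To obtain $\phi \in W^{2, q}(M, \bR)$, the plan is to first use Lemma \ref{lmLich2} to get $\phi^N \in L^t(M, \bR)$ for some $t \geq \frac{2(n-1)q}{n-2q}$ (with $t$ arbitrary when $q \geq n/2$), and hence $\phi \in L^{Nt}(M, \bR)$. A direct computation shows that $Nt/(N-1) \geq q$, so the nonlinear term $\tau^2 \phi^{N-1}$ lies in $L^q$ (using $\tau \in L^\infty$); combined with $A^2 \phi^{-N-1} \in L^q$, the right-hand side of the Lichnerowicz equation lies in $L^q(M, \bR)$. For the boundary datum $\frac{\Theta_-}{2} \phi^{N/2} - H_g \phi$, I would use $\Theta_- \in W^{1-1/p, p} \cap L^\infty$ and $H_g \in W^{1-1/p, p}$, together with Sobolev--Slobodeckij trace estimates applied to $\phi^{N/2}$ (whose $W^{1, \cdot}$ regularity follows from the product rule $d(\phi^{N/2}) = \frac{N}{2}\phi^{N/2-1} d\phi$, H\"older, and the known integrability of $\phi$), to conclude that the boundary term lies in $W^{1-1/q, q}(\partial M, \bR)$. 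Elliptic regularity, in the form of the $L^q$-version of the Fredholm result invoked in Lemma \ref{lmSubSol}, then yields $\phi \in W^{2, q}(M, \bR)$, and hence $\phi$ solves \eqref{eqLichSys} strongly, almost everywhere.

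For the $W^{1, r}$ regularity of $\phi^N$, the plan is to write $d(\phi^N) = N \phi^{N-1} d\phi$ and apply H\"older's inequality together with the Sobolev embeddings of $W^{2, q}$. When $q > n$, $W^{2, q} \hookrightarrow W^{1, \infty}$ immediately yields $\phi^N \in W^{1, \infty}$, giving $r = \infty$. When $n/2 < q < n$, one has $\phi \in L^\infty$ by Sobolev and $d\phi \in L^{nq/(n-q)}$, so $\phi^N \in W^{1, nq/(n-q)}$. In the subtle regime $q < n/2$, the embeddings $\phi \in L^{nq/(n-2q)}$ and $d\phi \in L^{nq/(n-q)}$ do not suffice to reach the announced exponent; one must combine them with an improved $W^{1, s}$ estimate on $\phi^{N/2+1}$ obtained by rerunning the test-function argument of Proposition \ref{propImprovedRegL2} with $\psi = (\phibar)_{,k}^{N+1+2\alpha}$ for an appropriately chosen $\alpha > 0$, and then optimize the H\"older interpolation between these two sources of regularity, which leads to the stated exponent $r = \frac{8n(n-1)q}{7n^2 - 2n - (6n+4)q}$.

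The main technical burden is the book-keeping of exponents, especially in the regime $1 < q < n/2$ where the nonlinear terms are subcritical and a delicate combination of interpolations is needed to produce the sharp exponent $r$. A companion difficulty is verifying at each step of the bootstrap that the nonlinear boundary datum lies in $W^{1-1/s, s}(\partial M, \bR)$ for the working exponent $s$, which requires Sobolev--Slobodeckij trace theorems for fractional regularity spaces. Once interior and boundary regularities are consistent, the elliptic Robin estimate closes the argument.
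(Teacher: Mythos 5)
Your overall strategy is aligned with the paper's: Lemma~\ref{lmLich2} together with the uniform lower bound on $\phi$ (from Lemmas~\ref{lmSubSol}--\ref{lmmu0}) places the right-hand side of \eqref{eqLichnerowiz2} in $L^q(M, \bR)$, one then checks the boundary datum has the requisite fractional trace regularity, elliptic Robin theory is used to upgrade the solution, and a Gagliardo--Nirenberg type interpolation between the $W^{2, q}$ estimate and the Lebesgue bound from Lemma~\ref{lmLich2} produces the announced $W^{1, r}$ regularity of $\phi^N$.

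There is, however, a genuine gap in the phrase ``elliptic regularity \ldots then yields $\phi \in W^{2, q}(M, \bR)$.'' The Fredholm $L^q$-theory for the Robin problem gives you the \emph{existence} of a unique $\phi_0 \in W^{2, q_0}(M, \bR)$ (with $q_0 = \min\{2, q\}$) solving the linear system in which the nonlinear terms $-\frac{n-1}{n}\tau^2\phi^{N-1} + A^2\phi^{-N-1}$ and $\frac{\Theta_-}{2}\phi^{N/2}$ of the \emph{given} weak solution $\phi$ are frozen in as data. It does not, by itself, tell you that $\phi_0 = \phi$. This identification is non-trivial precisely when $q < 2$: then $W^{2, q}(M, \bR)$ does not embed into $W^{1, 2}(M, \bR)$, so $\phi_0$ is not a priori an admissible competitor in the weak formulation \eqref{eqWeakForm0}, and the uniqueness statement of Proposition~\ref{propLich} cannot be invoked directly. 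The paper bridges this by a duality (``very weak solution'') argument: test both \eqref{eqWeakForm0} and the strong equation \eqref{eqPhi0} for $\phi_0$ against the same $\psi \in W^{2, p}(M, \bR)$ (which is a legitimate test function since $W^{2, p} \subset W^{1, 2}\cap L^\infty$), subtract and integrate by parts to move all derivatives onto $\psi$, then choose $\psi$ to solve a linear Robin problem with source $(\phi - \phi_0)_{-K, K}$ and homogeneous boundary datum; letting $K \to \infty$ yields $\int_M (\phi - \phi_0)^2 \dd\mu^g = 0$. Without some such argument, the claim $\phi \in W^{2, q}$ is unjustified for $q$ close to $1$, and this is the crux of the proposition.

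Two smaller remarks. First, your route to the boundary datum lying in a good space -- the product rule $\dd(\phi^{N/2}) = \frac{N}{2}\phi^{N/2-1}\dd\phi$, H\"older with $\dd\phi \in L^2$ and the Lebesgue bounds on $\phi$ -- only gives $\phi^{N/2} \in W^{1, s}$ for some $s < 2$. To reach $\phi^{N/2} \in W^{1, 2}(M, \bR)$ (as the paper claims and needs), you must write $\phi^{N/2-1}\dd\phi = \phi^{-1}\cdot\phi^{N/2}\dd\phi$, invoke the uniform lower bound $\phi \geq \eta > 0$ together with the $W^{1, 2}$ control on $\phi^{\fraccheloue}$ from Proposition~\ref{propImprovedRegL2}, so that $\phi^{N/2}\dd\phi$ is controlled by $\dd(\phi^{\fraccheloue}) \in L^2$. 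Second, for the final $W^{1, r}$ estimate on $\phi^N$, your idea of combining the elliptic $W^{2, q}$ estimate with the test-function-derived integrability from Lemma~\ref{lmLich2} is correct; the paper implements this via a one-parameter Gagliardo--Nirenberg interpolation $\frac{1}{s} = \frac{1}{n} + \lambda\bigl(\frac{1}{q} - \frac{2}{n}\bigr) + \frac{1-\lambda}{Nt}$ with $\lambda \in [1/2, 1]$ and then optimizes over $\lambda$, which is the mechanism that actually produces the exponent $r = \frac{8n(n-1)q}{7n^2 - 2n - (6n+4)q}$ in the regime $q < n/2$; your description (``optimize the H\"older interpolation'') is consistent with this but leaves the concrete inequality unspecified.
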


\begin{proof}
We first show that $\phi$ is a strong solution of the system \eqref{eqLichSys}. By using Proposition \ref{lmLich2} and the fact that $\phi$ is bounded below (Lemma \ref{lmmu0}), we see that
$$
-\frac{n-1}{n} \tau^2 \varphi^{N-1} + \frac{A^2}{\varphi^{N+1}} \in L^q(M, \Rb).
$$
Similarly, $\varphi^{N/2} \in W^{1, 2}(M, \Rb)$, so $\Theta_- \varphi^{N/2} \in W^{1, 2}(M, \Rb)$. Set $q_0 = \min \{2, q\}$, $q_0 >1$. There exists a unique function $\varphi_0 \in W^{2, q_0}(M, \Rb)$ satisfying
\begin{equation}\label{eqPhi0}
\left\lbrace
\begin{aligned}
-\frac{4(n-1)}{n-2} \Delta \varphi_0 + \Scal_{g}\varphi_0
&= - \frac{n-1}{n} \tau^2 \varphi^{N-1} + A^2 \varphi^{-N-1} &\qquad\text{on }M,\\
\frac{2(n-1)}{n-2} \partial_\nu \varphi_0 + H_g \varphi_0
&= \frac{\Theta_-}{2}  \varphi^{\frac{N}{2}} &\qquad\text{on } \partial M.
\end{aligned}
\right.
\end{equation}
Notice this is \eqref{eqLichSys} in which we have replaced the left-hand members by the corresponding expression for $\varphi_0$. The existence of $\varphi_0$ is granted by the fact that $\mathcal{E}(M, g) > 0$ (see, for example, the proof of Lemma \ref{lmSubSol}).

We will now show that $\varphi_0 \equiv \varphi$, which will allow us to conclude that $\varphi \in W^{2, q_0}(M, \Rb)$. The difficulty here comes from the fact that $q$ can be as close to $1$ as we want so we do not necessarily have $W^{2, q}(M, \Rb) \hookrightarrow W^{1, 2}(M, \Rb)$. To overcome it, the idea here is to see $\varphi_0$ and $\varphi$ as very weak solutions of the same problem. Remark that if $\psi \in W^{2, p}(M, \Rb)$, we get $\psi \in W^{1, 2}(M, \Rb) \cap L^\infty(M, \Rb)$ so $\psi$ is a valid test function in the weak formulation of \eqref{eqWeakForm0} of the Lichnerowicz equation. Likewise, multiplying \eqref{eqPhi0} by $\psi$ and integrating by parts, we can write
$$
\begin{aligned}
0 &= \int_M \left[\frac{4(n-1)}{n-2} \langle \dd\varphi_0, \dd\psi\rangle + \Scal_{g} \varphi_0 \psi 
+ 
\frac{n-1}{n} \tau^2 \varphi^{N-1} \psi - \frac{A^2 \psi}{\varphi^{N+1}}\right] \dd\mu^g\\
&\qquad + 2 \int_{\partial M} \left[H_g \varphi_0 - \frac{\Theta_-}{2} 
\varphi^{\frac{N}{2}} \right]\psi \dd\mu^g.
\end{aligned}
$$
By substracting that last calculation with \eqref{eqWeakForm0}, we then get
$$
\begin{aligned}
0 &= \int_M \left[\frac{4(n-1)}{n-2} \langle \dd(\varphi - \varphi_0), \dd\psi\rangle + \Scal_{g} (\varphi - \varphi_0) \psi\right] \dd\mu^g\\
  &\qquad + 2 \int_{\partial M} H_g (\varphi - \varphi_0) \psi \dd\mu^g.
\end{aligned}
$$
Once again integrating by parts gives us
\begin{equation}\label{eqWeakFormDiff4}
\begin{aligned}
0 &= \int_M \left[-\frac{4(n-1)}{n-2} \Delta \psi + \Scal_{g} \psi\right](\varphi - \varphi_0) \dd\mu^g\\
&\qquad + 2 \int_{\partial M}\left[\frac{2(n-1)}{n-2} \partial_\nu \psi + H_g  \psi\right] (\varphi - \varphi_0) \dd\mu^g.
\end{aligned}
\end{equation}
We will now choose $\psi$ to conclude that $\varphi-\varphi_0 \equiv 0$. Pick an arbitrary constant $K > 0$. We have $(\varphi - \varphi_0)_{-K, K} \in L^p(M, \Rb)$ so the solution $\psi$ of the problem 
\begin{equation}\label{eqPsi}
\left\lbrace
\begin{aligned}
-\frac{4(n-1)}{n-2} \Delta \psi + \Scal_{g} \psi
&= (\varphi-\varphi_0)_{-K, K} &\qquad\text{on }M\\
\frac{2(n-1)}{n-2} \partial_\nu \psi + H_g \psi
&= 0 &\qquad\text{on } \partial M
\end{aligned}
\right.
\end{equation}
is in $W^{2, p}(M, \Rb)$. Inserting it in \eqref{eqWeakFormDiff4}, we get
$$
\int_M (\varphi - \varphi_0) (\varphi - \varphi_0)_{-K, K} \dd\mu^g = 0
$$
then, by letting $K$ tend to infinity, the monotone convergence theorem gives
$$
\int_M (\varphi - \varphi_0)^2 \dd\mu^g = 0,
$$
which shows $\varphi \equiv \varphi_0$. So $\varphi \in W^{2, q_0}(M, \Rb)$ is a strong solution of problem \eqref{eqLichSys}. Let us now demonstrate that $\varphi \in W^{2, q}(M, \Rb)$, i.e. that $\varphi$ has the optimum regularity for this problem. We have already dealt with $q \leqslant 2$ because, in this case, $q = q_0$. Assume then $q > 2$ and let us assume as well that we have already shown that $\varphi \in W^{2, q_i}(M, \Rb)$ for a certain $q_i \geqslant 2$. We know that
$$
-\frac{n-1}{n} \tau^2 \varphi^{N-1} + \frac{A^2}{\varphi^{N+1}} \in L^q(M, \Rb).
$$
The only term we have to care about, regularity-wise, is then the boundary term $\dsp{\frac{\Theta_-}{2} \varphi^{N/2}}$. However, if $\dsp{q < \frac{n}{2}}$, we get that
$$
\dd \left(\frac{\Theta_-}{2} \varphi^{N/2}\right)
= \underbrace{\varphi^{N/2}}_{\in L^{\frac{4(n-1)q}{n-2q}}} \underbrace{\frac{\dd\Theta_-}{2}}_{\in L^{2p}} + \frac{N \Theta_-}{4} \underbrace{\varphi^{\frac{N}{2}-1}}_{\in L^{\frac{(n-1)(n-2)q}{n-2q}}} \underbrace{\dd\varphi}_{\in L^{q_i'}},
$$
with $q_i'$ defined by $\dsp{\frac{1}{q_i'} = \frac{1}{q_i} - \frac{1}{n}}$. We then have $\dsp{\dd \left(\frac{\Theta_-}{2} \varphi^{N/2}\right) \in L^{s}(M, \Rb)}$
with
$$
\frac{1}{s} = \max \left\{
\frac{n-2q}{4(n-1)q} + \frac{1}{2p}, \frac{1}{q_i} - \frac{(n^2 - n + 2) q - n^2}{n(n-1)(n-2)q}\right\} \leqslant \max\left\{\frac{1}{q}, \frac{1}{q_i} - c\right\},
$$
with $\dsp{c = \frac{(n^2 - n + 2) q - n^2}{n(n-1)(n-2)q} > 0}$ since we have assumed $q > 2$. Using elliptic regularity for the system \eqref{eqPhi0}, we then have $\varphi \equiv \varphi_0 \in W^{2, q_{i+1}}(M, \Rb)$ with $q_{i+1} = \min\{q, s\}$. We observe that, for a finite $i$, we have $q_{i+1} = q$, i.e. $\varphi \in W^{2, q}(M, \Rb)$. The case $\dsp{q \geqslant \frac{n}{2}}$ is similar, we will omit the proof.

Finally, we dive into the regularity of $\varphi^N$ and see how the elliptic estimates we just got will improve the result of Proposition \ref{lmLich2}. By this proposition, we have a  stronger Lebesgue estimate on $\varphi$ than the one given by elliptic regularity. Notice that, if $\dsp{q > \frac{n}{2}}$, $\varphi \in L^\infty(M, \Rb)$ and $\dd\varphi^N \in W^{1, q'}(M, \Rb)$ with $q'$ such that $\dsp{\frac{1}{q'} = \frac{1}{q} - \frac{1}{n}}$.
For smaller values of $q$, we will use Gagliardo-Nirenberg's inequality (see for example \cite[Theorem 12.87]{LeoniSobolev}).
Since $\varphi \in W^{2, q}(M, \Rb) \cap L^{Nt}(M, \Rb)$ where $t$ is defined in Lemma \ref{lmLich2}, we have $\varphi \in W^{1, s}(M, \Rb)$ for any $s$ such that
$$
\frac{1}{s} = \frac{1}{n} + \lambda \left(\frac{1}{q} - \frac{2}{n}\right) + \frac{1-\lambda}{Nt}.
$$
with $\dsp{\lambda \in \left[\frac{1}{2}; 1\right]}$. Notice that, for $\lambda = 1$, this inequality gives back the classic Sobolev embedding. Using that $\dd\varphi^N = N \varphi^{N-1} \dd\varphi$, we then get $\varphi^N \in W^{1, r}(M, \Rb)$ with
\begin{align*}
\frac{1}{r} &= \frac{N-1}{Nt} + \frac{1}{n} + \lambda \left(\frac{1}{q} - \frac{2}{n}\right) + \frac{1-\lambda}{Nt}\\
&= \frac{N-\lambda}{Nt} + \frac{1}{n} + \lambda \left(\frac{1}{q} - \frac{2}{n}\right).
\end{align*}
The optimum value of $\lambda$ (that gives the greatest $r$) depends then on the sign of the coefficient of $\lambda$, namely $\dsp{\frac{1}{q} - \frac{2}{n} - \frac{1}{Nt}}$ :
\begin{itemize}
 \item If $\dsp{\frac{1}{q} - \frac{2}{n} - \frac{1}{Nt}} < 0$, $\lambda = 1$ gives the best $r$. This condition is equivalent to $\dsp{q > \frac{n}{2}}$.
 \item Otherwise, $\dsp{\lambda = \frac{1}{2}}$ provides the best $r$. We then get
 $$
 \frac{1}{r} = \frac{\dsp{N - \frac{1}{2}}}{Nt} + \frac{1}{2q},
 $$
 hence $\dsp{r = \frac{8n(n-1)q}{7n^2 - 2n - (6n+4)q}}$, which is a better value than $r = 2$ if $\dsp{q \geqslant \frac{7 n^2 - 2n}{4 n^2 + 2n + 4}}$.
\end{itemize}
\end{proof}

\section{The vector equation}\label{secVect}
In this section, we study Equation \eqref{eqVector}. Our aim is to prove the
following result:

\begin{proposition}\label{propVector}
Assume given a Riemannian metric $g\in W^{2,p}(M,S_2 M)$, $\gamma \in 
L^q(M,T^*M)$ and $\omega \in W^{1-\frac{1}{q}, q}(\partial M, T^*M)$. If $g$ has no non trivial conformal Killing vector field,
there exists a unique $W \in W^{2, q} (M,TM)$ solution to the vector equation with boundary condition:
\begin{equation}\label{eqVectorBoundary}
  \begin{cases} 
\displaystyle{\DeltaL W = \gamma ,}\\[3mm]
\displaystyle{(\bL W)(\nu, \cdot)=\omega.}
  \end{cases}
\end{equation}
 Furthermore, the mapping $(\gamma, \omega) \mapsto W$ is continuous.
\end{proposition}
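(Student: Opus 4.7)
\textbf{Proof plan for Proposition \ref{propVector}.}

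The system \eqref{eqVectorBoundary} is of Neumann type: the vector Laplacian $\DeltaL$ is the Euler--Lagrange operator associated to the quadratic form $W \mapsto \frac{1}{2} \int_M |\bL W|^2 \dd\mu^g$, and the prescription $(\bL W)(\nu, \cdot) = \omega$ is precisely its natural boundary condition. My strategy is therefore the standard one: produce a weak solution in $W^{1,2}$ by Lax--Milgram, then bootstrap to $W^{2,q}$ by elliptic regularity.

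First, I would derive the weak formulation. Integrating by parts and using that $\bL W$ is symmetric and trace-free (so $\langle \bL W, \nabla V\rangle = \tfrac{1}{2} \langle \bL W, \bL V\rangle$), a strong solution satisfies, for every test vector field $V \in W^{1,2}(M, TM)$,
\begin{equation*}
\tfrac{1}{2} \int_M \langle \bL W, \bL V\rangle \dd\mu^g = -\int_M \langle \gamma, V\rangle \dd\mu^g + \int_{\partial M} \langle \omega, V\rangle \dd\mu^g.
\end{equation*}
Let $a(W, V)$ denote the left-hand side. The key analytic input is the conformal Korn inequality on $(M, g)$: there is a constant $C > 0$ with
$\|W\|_{W^{1,2}}^2 \leq C (\|\bL W\|_{L^2}^2 + \|W\|_{L^2}^2)$. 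The $L^2$ term on the right can be dropped under our hypothesis that $g$ admits no non-trivial conformal Killing vector field, via a standard contradiction argument: a normalized sequence $(W_k)$ with $\|\bL W_k\|_{L^2} \to 0$ would, by Rellich's theorem, subconverge strongly in $L^2$ to a unit-norm field annihilated by $\bL$. Hence $a$ is coercive on $W^{1,2}(M, TM)$.

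Next, for $q \geq 2$, the right-hand side is a continuous linear form on $W^{1,2}(M, TM)$ thanks to Sobolev embedding and the trace theorem, so Lax--Milgram produces a unique $W \in W^{1,2}(M, TM)$ solving the weak equation; for $q < 2$ one first solves with data regularized in $L^2$ and recovers the general case by duality on the self-adjoint realization of $\DeltaL$ with this boundary condition. Uniqueness of the weak solution follows directly from coercivity, since any element of the kernel is a conformal Killing vector field.

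Finally, the boundary value problem $\DeltaL W = \gamma$ with $(\bL W)(\nu, \cdot) = \omega$ is uniformly elliptic and satisfies the Lopatinski--Shapiro complementing condition (this is the same algebra as for the Neumann problem of the usual Laplacian on each component). Applying $W^{2, q}$ elliptic regularity for this system on the metric $g \in W^{2, p}$, I obtain $W \in W^{2, q}(M, TM)$ together with an estimate
\begin{equation*}
\|W\|_{W^{2, q}(M, TM)} \lesssim \|\gamma\|_{L^q(M, T^*M)} + \|\omega\|_{W^{1-1/q, q}(\partial M, T^*M)} + \|W\|_{L^q(M, TM)},
\end{equation*}
where the last term is absorbed by another contradiction argument using the absence of conformal Killing vector fields. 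Continuity of $(\gamma, \omega) \mapsto W$ is then the continuity of the bounded inverse.

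I expect the subtlest step to be the elliptic regularity estimate at the boundary in the low-regularity setting $g \in W^{2, p}$, $p > n/2$: one must cite an appropriate version of Agmon--Douglis--Nirenberg theory adapted to rough coefficients, and the absorption argument for $\|W\|_{L^q}$ must be done carefully because we cannot compactly embed $W^{2, q}$ into $W^{2, q}$ itself. As foreshadowed in the introduction, this regularity step is the source of the dimensional restriction $3 \leq n \leq 5$ once the vector equation is coupled back to the Lichnerowicz equation through $\gamma = \frac{n-1}{n} \phi^N d\tau$ and $\omega = \left(\frac{n-1}{n} \tau - \frac{\Theta_-}{2}\right) \phi^N \nu^\flat + \xi - \sigma(\nu, \cdot)$.
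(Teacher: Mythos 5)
Your route is essentially the paper's: coercivity of the conformal Killing bilinear form via a Korn-type inequality plus the no-CKV hypothesis, Lax--Milgram for a $W^{1,2}$ weak solution, then Agmon--Douglis--Nirenberg elliptic regularity and an absorption argument to reach $W^{2,q}$ and continuity of the solution operator. There is, however, one genuine gap in what you treat as a black box.

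You invoke ``the conformal Korn inequality on $(M,g)$'' as a known input, but the standard reference (Dain, \cite{DainKorn}) proves it for $g \in C^1(M, S_2 M)$, whereas here $g \in W^{2,p}(M, S_2 M)$ with $p > n/2$ only embeds into $W^{1,p'}$ with $p' = np/(n-p) > n$ when $n/2 < p \leq n$, which is H\"older but not $C^1$. The paper therefore devotes Lemma \ref{lmCoercivity} to re-proving this inequality at the lower regularity: one flattens $g$ near each point using that $W^{1,p'} \hookrightarrow C^{0,\alpha}$, perturbs off Dain's Euclidean inequality \eqref{eqKornRn} by estimating the commutator $\bL_g - \bL_\delta$ using $|g - \delta|$ and $|\partial g| \in L^{p'}$, and then patches with a quadratic partition of unity $\sum \chi_i^2 = 1$, using an algebraic identity for $\langle \bL(\chi_i^2 W), \bL(\chi_j^2 W)\rangle - |\bL(\chi_i \chi_j W)|^2$ to control the cross terms by zeroth-order contributions. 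Without this, your coercivity claim is unsupported precisely in the regularity range the proposition addresses. Note also that the same low-regularity issue means the Bochner formula is unavailable here (the boundary term it produces is not $(\bL W)(\nu,\cdot)$ anyway), so there is no shortcut around the Korn argument.

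Your treatment of $q<2$ (``regularize in $L^2$ and recover by duality'') is vaguer than what is actually needed; the paper instead shows the operator $\cP_q$ has closed range (from the elliptic estimate plus injectivity) and dense range (since $\cP_2$ is already onto and $L^2 \times W^{1/2,2}$ is dense in $L^q \times W^{1-1/q,q}$), then concludes by the open mapping theorem. Either route can be made to work, but you should spell out the duality argument if you keep it. One small remark: your worry that ``we cannot compactly embed $W^{2,q}$ into $W^{2,q}$ itself'' is beside the point --- the absorption in Claim \ref{clClosedImage} uses the compactness of $W^{2,q} \hookrightarrow L^q$, which is exactly what makes the contradiction argument close.
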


Similar results have been obtained in \cite[Theorem 4.5]{HolstMeierTsogtgerel} 
with different regularity assumptions, in \cite[Proposition 6 and 
Theorem 3]{Maxwell} for asymptotically Euclidean manifolds and in 
\cite[Theorem 6.8]{Gicquaud} for asymptotically hyperbolic manifolds.
Note that, since we immediately rule out the existence of conformal Killing 
vector fields, our arguments will be slightly simpler.

Part of the proof of this proposition will be based on the Lax-Milgram theorem.
The main difficulty consists in proving that the quadratic form
$\displaystyle
W \mapsto \int_M  |\bL W|^2 \dd\mu^g
$
is coercive. This cannot be proven by the standard Bochner formula for
the vector Laplacian (see \cite[Chapter 2, §6]{Yano}):
\[
 \frac{1}{2} \int_M |\bL W|^2 \dd\mu^g = \int_M \left[|\nabla W|^2 + 
\frac{n-2}{n} (\divg W)^2 - \ric(W, W) \right] \dd\mu^g,
\]
as this formula is valid only for $W$'s whose support is disjoint from the 
boundary. Note that the full formula with the boundary terms is given by
\begin{align*}
 \frac{1}{2} \int_M |\bL W|^2 \dd\mu^g
 &= \int_M \left[|\nabla W|^2 + \frac{n-2}{n} (\divg W)^2 - \ric(W, W) \right] \dd\mu^g\\
 &\qquad + \int_{\partial M} \left(\bL W(W, \nu) - \<W, \nabla_\nu W\>
  - \frac{n-2}{2} \divg(W) \<W, \nu\> \right) \dd\mu^g
\end{align*}
so the boundary term does not agree with the boundary condition we are 
imposing. Instead, we rely on the method described in \cite[Section 
4]{HolstMeierTsogtgerel} which is based on \cite{DainKorn}. We reproduce here 
\cite[Corollary 1.2]{DainKorn} as we provide a complete proof with less 
regularity required ($g \in W^{1, p'}(M, S_2M)$ instead of $g \in C^1(M, S_2 M)$
for \cite{DainKorn}). The exponent $p'$ which appears in what follows will be 
taken later to be $\dsp{p' = \frac{np}{n-p}}$ if $\dsp{\frac{n}{2}<p<n}$ (or
$p'>n$ arbitrary if $p>n$).

\begin{lemma}\label{lmCoercivity}
Let $(M,g)$ be a compact Riemannian manifold with boundary such that $g\in 
W^{1,p'}(M, S_2 M)$ for some $p' > n$. There exist two positive constants $c_1$ and 
$c_2$ such that for any $W\in W^{1, 2}(M, TM)$
\[
 c_1 \|W\|^2_{W^{1, 2} (M, TM)} - c_2 \|W\|_{L^2 (M, TM)}^2 \leq \|\bL W\|_{L^2 (M, \Sring_2 M)}^2.
\]
\end{lemma}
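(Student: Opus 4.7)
The plan follows the strategy of Dain and Holst--Meier--Tsogtgerel: establish a local Korn-type inequality on Euclidean model domains, then transfer it to $(M, g)$ via a partition of unity, using the continuity of $g$ provided by the Sobolev embedding $W^{1, p'}(M) \hookrightarrow C^0(M)$ for $p' > n$.

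First I would establish the Euclidean model estimate: for a bounded connected domain $\Omega \subset \bR^n$ (taken to be an open ball or an open half-ball), there exists a constant $C = C(\Omega)$ such that for all $W \in W^{1, 2}(\Omega, \bR^n)$,
\[
 \|W\|^2_{W^{1, 2}(\Omega)} \leq C \bigl( \|\bL_\delta W\|^2_{L^2(\Omega)} + \|W\|^2_{L^2(\Omega)} \bigr),
\]
where $\bL_\delta$ denotes the conformal Killing operator of the flat metric $\delta$. This would be proved by a compactness/contradiction argument: a renormalized sequence $(W_k)$ with $\|W_k\|_{W^{1,2}} = 1$ and $\|\bL_\delta W_k\|_{L^2} + \|W_k\|_{L^2} \to 0$ converges weakly in $W^{1,2}$ and strongly in $L^2$ to some $W_\infty$ with $W_\infty \equiv 0$; to reach a contradiction one needs strong convergence $W_k \to 0$ in $W^{1,2}$, which is obtained from a direct Korn-type identity on $\bR^n$ (via integration by parts or Fourier analysis) controlling $\nabla W$ by $\bL_\delta W$ and $\divg W$ modulo lower-order terms.

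Next I would cover $\overline M$ by finitely many coordinate charts whose images are balls (for interior charts) or half-balls (for boundary charts). Since $g \in W^{1,p'}$ with $p' > n$, Morrey's embedding gives $g \in C^0(\overline M, S_2 M)$, so after refining the cover I may assume that in each chart, $g_{ij}$ differs from a constant reference matrix (which, after a linear change of variables, can be taken to be $\delta$) by at most an arbitrarily small amount $\varepsilon > 0$ in $C^0$ norm. With a partition of unity $\{\chi_i\}$ subordinate to this cover, I would apply the Euclidean estimate to each $\chi_i W$ and control the commutator. In each chart, writing $\bL_g W = \bL_\delta W + R[W]$, the remainder splits into two kinds of terms: one depending algebraically on $g - \delta$ times $\nabla W$, bounded in $L^2$ by $\varepsilon \|\nabla W\|_{L^2}$ thanks to the $C^0$-smallness of $g - \delta$; and one involving Christoffel symbols times $W$, of schematic form $\nabla g \cdot W$. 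Since $\nabla g \in L^{p'}$ with $p' > n$, H\"older's inequality gives $\|\nabla g \cdot W\|_{L^2} \leq \|\nabla g\|_{L^{p'}} \|W\|_{L^s}$ with $\frac{1}{s} = \frac{1}{2} - \frac{1}{p'}$, and since $s < \frac{2n}{n-2}$, Sobolev embedding combined with Young's inequality yields a bound of the form $\varepsilon \|W\|_{W^{1,2}} + C_\varepsilon \|W\|_{L^2}$.

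Summing the local estimates, controlling the commutators $[\bL_g, \chi_i]$ by similar H\"older/Sobolev arguments, and absorbing the $\varepsilon \|W\|_{W^{1,2}}$ terms into the left-hand side by choosing the cover fine enough, I would obtain
\[
 \|W\|^2_{W^{1,2}(M)} \leq C \bigl( \|\bL_g W\|^2_{L^2(M)} + \|W\|^2_{L^2(M)} \bigr),
\]
which rearranges to the claimed inequality with explicit $c_1$ and $c_2$. The main obstacle is managing the perturbation $R[W]$ with only $g \in W^{1, p'}$: since $\nabla g$ is not pointwise bounded, one must combine H\"older's and Young's inequalities with Sobolev embedding, and the hypothesis $p' > n$ is precisely what allows these products to sit in $L^2$ with exponents permitting absorption into the left-hand side. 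A secondary technical point is that the Euclidean Korn inequality must be proved on a half-ball without any boundary condition on $W$; the compactness argument then requires classifying the conformal Killing fields of a half-space (a finite-dimensional space, ruled out by the $\|W\|_{L^2}$ normalization) but is otherwise standard.
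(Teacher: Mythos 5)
Your outline matches the paper's strategy closely: localize via a fine atlas of balls and half-balls, invoke a flat Korn inequality for $\bL_\delta$, transfer to $\bL_g$ using $C^0$-closeness of $g$ to $\delta$ for the algebraic part and $\|\partial g\|_{L^{p'}}$ together with H\"older--Sobolev--Young for the Christoffel part, then absorb and sum. Two points, however, need tightening.

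First, your proposed derivation of the Euclidean model estimate is circular as stated. You set up a contradiction argument (normalize $\|W_k\|_{W^{1,2}}=1$, extract weak/$L^2$-strong limits) and then observe that the contradiction requires strong $W^{1,2}$ convergence, ``obtained from a direct Korn-type identity controlling $\nabla W$ by $\bL_\delta W$ and $\divg W$.'' But such an identity, if available, already \emph{is} the Korn inequality you set out to prove, making the compactness step superfluous; and on a half-ball with no boundary condition on $W$, integration by parts does not produce it -- the uncontrolled boundary terms are precisely the difficulty. The classification of conformal Killing fields of the half-space does not rescue the argument either, because the failure is at the level of strong $W^{1,2}$ convergence, not of identifying the limit. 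The paper simply cites Dain's theorem \cite[Corollary 1.2]{DainKorn}, which is the honest ingredient here and is itself a nontrivial result. Second, when you ``apply the Euclidean estimate to each $\chi_i W$ and control the commutator,'' summing $\|\chi_i W\|^2_{W^{1,2}}$ over $i$ does not directly dominate $\|W\|^2_{W^{1,2}}$ unless the partition is normalized so that $\sum_i \chi_i^2 = 1$; the paper uses exactly this normalization and then a polarization identity rewriting $\langle \bL(\chi_i^2 W), \bL(\chi_j^2 W)\rangle$ as $|\bL(\chi_i\chi_j W)|^2$ plus purely zeroth-order terms, so that every term to which Dain's inequality is applied appears with the correct sign and the cross terms cost only $\|W\|_{L^2}^2$. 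Your sketch would work once these two points are fixed, and the remaining steps (the $L^{p'}$ bound on $\partial g$, the Sobolev embedding giving $\|W\|_{L^s}$ with subcritical $s$, and the absorption via Young) are exactly as in the paper.
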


\begin{proof}
We first choose $(U_i, \Phi_i)_{i\in I}$ a finite atlas on $M$ such that
$\Phi_i(U_i)$ is either $B_\varepsilon(0) \subset \bR^n$ or
$B_\varepsilon^+(0) = B_\varepsilon(0) \cap \{x_n \geq 0\}$ for some small
$\varepsilon > 0$ and such that:
\begin{itemize}
 \item For all $i \in I$, $\lambda^{-1} \delta \leq (\Phi_i)_* g \leq \lambda
\delta$,  where $\delta$ denotes the Euclidean metric. The value of $\lambda$ will be 
 chosen later.
 \item There exists a constant $\Lambda > 0$ such that $\|\partial ((\Phi_i)_* 
g)\|_{L^{p'}(\Phi_i(U_i))} \leq \Lambda$.
\end{itemize}
Such coordinates exist in a neighborhood of any point of $M$ since $g$ 
is in $W^{1, p'}(M, S_2 M)$ so, in particular, $g$ is H\"older continuous.
We also choose a family of functions $(\chi_i)_{i\in I}\in C^\infty (M, \Rb)$ such 
that $\displaystyle{\sum_{i\in I}\chi_i^2 =1}$ and $\forall i\in I, \supp 
(\chi_i) \subset U_i$. We thus get, for any $W \in W^{1, 2}(M, TM)$,

\begin{align}
\int_M |\bL W|^2 \dd\mu^g
 &= \int_M \left|\sum_{i\in I}\bL(\chi_i^2 W)\right|^2 \dd\mu^g\label{eqPartition}\\
 &=\sum_{i\in I} \int_M \left|\bL(\chi_i^2 W)\right|^2 \dd\mu^g + 
2 \sum_{i < j}\int_M \<\bL(\chi_i^2 W),\bL( \chi_j^2 W)\>\dd\mu^g.\nonumber
\end{align}

Straightforward calculations based on Leibniz' rule and 
symmetry and trace-freeness of $\bL V$ for any vector field $V$, give us
\begin{align*}
 \<\bL(\chi_i^2 W),\bL( \chi_j^2 W)\>
 & = \chi_i \chi_j \Big[\<\bL(\chi_i W),\bL( \chi_j W)\>\\
 &\qquad + 2 \<\bL(\chi_i W), \dd\chi_j \otimes W\>
  + 2 \<\bL(\chi_j W), \dd\chi_i \otimes W\>\\
 &\qquad + 2 \<\dd\chi_, \dd\chi_j\> |W|^2
   + 2\frac{n-2}{n} \<\dd\chi_i, W\>\<\dd\chi_j, W\>\Big],\\
 |\bL (\chi_i \chi_j W)|^2
 & = \chi_i \chi_j \Big[\<\bL(\chi_i W),\bL( \chi_j W)\>\\
 &\qquad + 2 \<\dd\chi_i , \dd\chi_j\> |W|^2
   + 2\frac{n-2}{n} \<\dd\chi_i, W\>\<\dd\chi_j, W\>\Big]\\
 &\qquad + 2 \chi_i^2 \<\bL (\chi_j W), \dd\chi_j\otimes W\>
   + 2\chi_j^2 \<\bL (\chi_i W), \dd\chi_i\otimes W\>.
\end{align*}
Subtracting both equalities, we get
\begin{align*}
 &\<\bL(\chi_i^2 W),\bL( \chi_j^2 W)\> - |\bL (\chi_i \chi_j W)|^2\\
 &\qquad  = 2 \chi_i \chi_j \<\bL(\chi_i W), \dd\chi_j \otimes W\>
  + 2 \chi_i \chi_j \<\bL(\chi_j W), \dd\chi_i \otimes W\>\\
 &\qquad \qquad - 2 \chi_i^2 \<\bL (\chi_j W), \dd\chi_j\otimes W\>
   - 2\chi_j^2 \<\bL (\chi_i W), \dd\chi_i\otimes W\>\\
 &\qquad = 2 \< \chi_j \bL(\chi_i W) - \chi_i \bL(\chi_j W),
 (\chi_i  \dd\chi_j - \chi_j \otimes \dd\chi_i) \otimes W\>\\
 &\qquad = 2 |W|^2  |\chi_i \dd\chi_j - \chi_j \dd\chi_i|^2
  - \frac{n+2}{n} \<W, \chi_i \dd\chi_j - \chi_j \dd\chi_i\>^2.
\end{align*}
As a consequence, we can replace $\<\bL(\chi_i^2 W),\bL( \chi_j^2 W)\>$
in \eqref{eqPartition} by $|\bL (\chi_i \chi_j W)|^2$ at the cost of 
introducing terms depending quadratically in $W$ but not on its covariant 
derivative. In particular, there exists a constant $c_0$ depending only on
$n$ and on all $\|\dd\chi_i\|_{L^\infty(M, \Rb)}$ such that, for any $W \in W^{1, 
2}(M, TM)$, we have
\begin{equation}\label{eqPartition2}
\begin{aligned}
&\int_M |\bL W|^2 \dd\mu^g + c_0 \|W\|_{L^2(M)}^2\\
&\qquad \geq \sum_{i\in I} \int_M \left|\bL(\chi_i^2 W)\right|^2 \dd\mu^g + 
2 \sum_{i < j} \int_M \left|\bL(\chi_i \chi_j W)\right|^2 \dd\mu^g.
\end{aligned}
\end{equation}
The main result of \cite{DainKorn} is that, for any bounded Lipschitz domain 
$\Omega$ in $\bR^n$, there exists a constant $c > 0$ such that, for any vector 
field $V \in W^{1, 2}(\Omega, T \Rb^n)$,
\begin{equation}\label{eqKornRn}
 \|V\|_{W^{1, 2}(\Omega, T \Rb^n)} \leq c \left(\int_{\Omega} |\bL_{\delta} V|_\delta^2 
\dd\mu^\delta + \int_{\Omega} |V|_\delta^2 \dd\mu^\delta\right).
\end{equation}
We will choose $\Omega$ to be either $B_\varepsilon(0)$ or $B_\varepsilon^+(0)$
in what follows and use $\Omega$ to denote one or another of these subsets.

Our next task is to prove that the inequality \eqref{eqKornRn} still holds with $\delta$
replaced by $(\Phi_i)_* g$ (that we will denote by $g$ for simplicity) and $V$ 
with compact support in either $B_\varepsilon(0)$ or $B_\varepsilon^+(0)$. We point 
out that, as we will choose $V$ to be either $\chi_i^2 W$ or $\chi_i \chi_j W$, it has compact support and can 
be extended by zero to larger balls (or half balls), in particular to 
$B_1(0)$ or $B_1^+(0)$ (assuming $\varepsilon \leqslant 1$). This shows that the
constant $c$ in \eqref{eqKornRn} can be chosen independently of $\varepsilon \leqslant 1$.

Note that, as we will use both metrics $g$ and $\delta$, we will need to keep 
track of which metric is used. As a consequence, in the rest of the proof, the 
metric with which norms, conformal Lie derivatives and so on will be indicated
in subscript. The only exception will be the notation $\partial$ for the 
derivative with respect to $\delta$.

The conformal Killing operators $\bL_g$ and $\bL_\delta$ are related by
\begin{equation}\label{eqDiffKilling0}
\begin{aligned}
 (\bL_g V)_{ij} &= (\bL_\delta V)_{ij}
+ (g_{kj} \Gamma^k_{il} + g_{ik} \Gamma^k_{jl} - \frac{2}{n} 
\Gamma^k_{kl} g_{ij}) V^l\\
 &\qquad + (g_{kj} - \delta_{kj}) \partial_i V^k + (g_{ik} - \delta_{ik}) 
\partial_j V^k - \frac{2}{n} (g_{ij} - \delta_{ij}) \partial_k V^k,
\end{aligned}
\end{equation}
where $\dsp{\Gamma_{ij}^{k}=\frac{1}{2}g^{kl}(\partial_i g_{lj}+\partial_j g_{il}-\partial_l g_{ij})}$ denotes the Christoffel symbols of $g$. The first two terms on the second line come from the fact that $V$ is a vector field so we need to take into consideration which metric is used.

Taking norms, we conclude that, for some constant $C > 0$, we have
\[
 |\bL_g V|^2_\delta \geq \frac{1}{2} |\bL_g V|^2_\delta - C 
\left(|g-\delta|^2_\delta |\partial V|^2_\delta + |\partial g|_\delta^2 
|V|^2_\delta\right).
\]
Integrating this inequality over $\Omega$, we get, for some constant $\kappa = 
\kappa(\lambda)$ that may vary from line to line but which is such that 
$\kappa(\lambda)$ tends to $1$ when $\lambda$ tends to $1$.
\begin{align}
 &\int_{\Omega} |\bL_g V|_g^2 \dd\mu^g \label{eqDiffKilling}\\
 &\qquad\geq \kappa \int_{\Omega} |\bL_g V|_\delta^2 \dd\mu^\delta\nonumber\\
 &\qquad \geq \frac{\kappa}{2} \int_{\Omega} |\bL_\delta V|^2_\delta 
\dd\mu^\delta  - \kappa C 
\left[\left(\sup_\Omega |g-\delta|^2_\delta\right) \int_{\Omega} |\partial 
V|_\delta^2 \dd\mu^\delta + \int_{\Omega} |\partial g|_\delta^2 
|V|^2_\delta \dd\mu^\delta\right]\nonumber\\
 &\qquad\geq \frac{\kappa}{2} \int_{\Omega} |\bL_\delta V|^2_\delta 
\dd\mu^\delta\nonumber\\
 &\qquad\qquad - \kappa C 
\left[(\lambda-1)^2 \int_{\Omega} |\partial V|_\delta^2 \dd\mu^\delta +
\left(\int_{\Omega} |\partial g|_\delta^{p'} \dd\mu^\delta\right)^{2/p'} 
\left(\int_{\Omega} |V|^q_\delta \dd\mu^\delta\right)^{2/q}\right],\nonumber
\end{align}
where $q$ is such that $\displaystyle \frac{2}{p'} + \frac{2}{q} = 1$. As $p' > 
n$, we have that $q < N$. Let $\theta \in (0, 1)$ be such that
$\displaystyle \frac{1}{q} = \frac{1-\theta}{2} + \frac{\theta}{N}$. By Young's inequality, we get that, for any $\mu > 0$,
\[
 \left(\int_{\Omega} |V|^q_\delta \dd\mu^\delta\right)^{2/q}
 \leq \theta \mu \left(\int_{\Omega} |V|^N_\delta \dd\mu^\delta\right)^{2/N}
  + (1-\theta) \mu^{-\theta/(1-\theta)} \int_{\Omega} |V|^2_\delta 
\dd\mu^\delta.
\]
From the Sobolev inequality, there exists a constant $s > 0$ such that
\[
 \left(\int_{\Omega} |V|^N_\delta \dd\mu^\delta\right)^{2/N}
 \leq s \|V\|_{W^{1, 2}(\Omega, T \Rb^n)}.
\]
Note that, as for the constant in Korn's inequality, the constant $s$ is 
independent of $\varepsilon$. As a consequence, the previous inequality can be 
transformed into
\[
 \left(\int_{\Omega} |V|^q_\delta \dd\mu^\delta\right)^{2/q}
 \leq \theta \mu s \|V\|_{W^{1, 2}(\Omega, T \Rb^n)}^2
  + (1-\theta) \mu^{-\theta/(1-\theta)} \int_{\Omega} |V|^2_\delta 
\dd\mu^\delta.
\]
Finally, the estimate \eqref{eqDiffKilling} becomes
\begin{align*}
 &\int_{\Omega} |\bL_g V|_g^2 \dd\mu^g
 - \frac{\kappa}{2} \int_{\Omega} |\bL_\delta V|^2_\delta \dd\mu^\delta\\
 &\qquad\geq - \kappa C 
\left[(\lambda-1)^2 \int_{\Omega} |\partial V|_\delta^2 \dd\mu^\delta +
\Lambda^2 
\left(\theta \mu s \|V\|_{W^{1, 2}(\Omega, T \Rb^n)}^2
  + (1-\theta) \mu^{\frac{-\theta}{1-\theta}} \int_{\Omega} |V|^2_\delta 
\dd\mu^\delta\right)\right].
\end{align*}
From Korn's identity \eqref{eqKornRn}, we conclude
that
\begin{align*}
 &\int_{\Omega} |\bL_g V|_g^2 \dd\mu^g\\
 &\qquad\geq \kappa \left(\frac{c}{2} - C(\lambda-1)^2 - C 
\Lambda^2 \theta \mu s\right)\|V\|_{W^{1, 2}(\Omega)}^2
- C' \int_{\Omega} |V|^2_\delta \dd\mu^\delta.
\end{align*}
for some large constant $C'$. So we see that, choosing $\lambda$ close 
enough to $1$ and $\mu > 0$ sufficiently small, we get
\begin{equation}\label{eqKorng1}
 \int_{\Omega} |\bL_g V|_g^2 \dd\mu^g
 \geq \frac{c \kappa}{4} \|V\|_{W^{1, 2}(\Omega, T \Rb^n)}^2
- C' \int_{\Omega} |V|^2_\delta \dd\mu^\delta.
\end{equation}
Combining \eqref{eqKorng1} with \eqref{eqPartition2}, we obtain that, for 
some new constant $c_0$ and some $\mu > 0$, we have
\begin{equation}\label{eqPartition3}
\begin{aligned}
&\int_M |\bL W|^2 \dd\mu^g + c_0 \|W\|_{L^2(M, T \Rb^n)}^2\\
&\qquad \geq \mu \left(\sum_{i\in I} \int_M \left|\nabla (\chi_i^2 W)\right|^2 
\dd\mu^g + 
2 \sum_{i < j} \int_M \left|\nabla (\chi_i \chi_j W)\right|^2 \dd\mu^g\right).
\end{aligned}
\end{equation}
Note that we have taken the freedom to replace $\|V\|_{W^{1, 2}(\Omega, T\Rb^n)}^2$ by
$\displaystyle \int_M \left|\nabla V\right|^2 \dd\mu^g$ (with $V = \chi_i^2 W$ or
$V = \chi_i \chi_j W$) as calculations similar to the ones after Equation 
\eqref{eqDiffKilling} allow us to conclude that, under our assumptions
\[
 \int_\Omega \left|\nabla V\right|^2 \dd\mu^g \lesssim \|V\|_{W^{1, 2}(\Omega, T\Rb^n)}
\]
so \eqref{eqPartition3} remains valid upon redefinnig $\mu$.

Our last task is to rewind the construction we performed to get 
\eqref{eqPartition2} with $\bL_g$ replaced by $\nabla$. Calculations are 
similar 
to those leading to \eqref{eqPartition2} so we only indicate the result:
\begin{equation}\label{eqPartition4}
\begin{aligned}
 \< \nabla(\chi_i^2 W), \nabla(\chi_j^2 W)\>
 &= |\nabla(\chi_i \chi_j W)|^2 - |\chi_i \dd \chi_j - \chi_j \dd \chi_i|^2 |W|^2\\
 &\leq |\nabla(\chi_i \chi_j W)|^2
\end{aligned}
\end{equation}
Thus, from Equation \eqref{eqPartition3}, it follows that
\[
\begin{aligned}
\int_M |\nabla W|^2 \dd\mu^g
 &= \int_M \left|\sum_{i\in I}\nabla (\chi_i^2 W)\right|^2 \dd\mu^g\\
 &=\sum_{i\in I} \int_M \left|\nabla (\chi_i^2 W)\right|^2 \dd\mu^g + 
2 \sum_{i < j}\int_M \<\nabla (\chi_i^2 W), \nabla( \chi_j^2 W)\>\dd\mu^g\\
 &\leq \sum_{i\in I} \int_M \left|\nabla (\chi_i^2 W)\right|^2 \dd\mu^g + 
2 \sum_{i < j}\int_M \left|\nabla (\chi_i \chi_j W)\right|^2\dd\mu^g\\
 &\leq \frac{1}{\mu} \left(\int_M |\bL W|^2 \dd\mu^g + c_0 
\|W\|_{L^2(M)}^2\right).
\end{aligned}
\]
\end{proof}

We are now in a position to prove Proposition \ref{propVector}. To this end,
we introduce the functional $J_0$ defined on $W^{1, 2}(M, TM)$ as follows:
\begin{equation}\label{eqDefJ}
 J_0(V) = \int_M |\bL V|^2 \dd\mu^g.
\end{equation}
As a consequence of Lemma \ref{lmCoercivity}, we have the following result:

\begin{lemma}\label{lmCoercivity2}
Under the assumptions of Proposition \ref{propVector}, the functional $J_0$
is coercive.
\end{lemma}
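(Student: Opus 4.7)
The plan is to argue by contradiction, using Lemma \ref{lmCoercivity} as the workhorse together with the hypothesis that $(M,g)$ carries no non-trivial conformal Killing vector field. Recall that coercivity of $J_0$ on $W^{1,2}(M,TM)$ means the existence of $\varepsilon > 0$ such that $J_0(V) \geq \varepsilon \|V\|_{W^{1,2}(M,TM)}^2$ for every $V$. So I would suppose, for contradiction, that no such $\varepsilon$ exists; then one produces a sequence $(V_k)_k$ with $\|V_k\|_{W^{1,2}(M,TM)} = 1$ and $J_0(V_k) \to 0$.

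Next, I would combine this with the Korn-type inequality of Lemma \ref{lmCoercivity}: from
\[
 c_1 \|V_k\|_{W^{1,2}(M,TM)}^2 - c_2 \|V_k\|_{L^2(M,TM)}^2 \leq J_0(V_k),
\]
the normalization $\|V_k\|_{W^{1,2}} = 1$ and $J_0(V_k) \to 0$ together force $\|V_k\|_{L^2(M,TM)}^2 \geq (c_1 - o(1))/c_2$, i.e.\ the $L^2$-norms stay bounded away from zero. Since $(V_k)_k$ is bounded in $W^{1,2}(M,TM)$, the Rellich--Kondrachov theorem allows us, after extraction, to assume that $V_k \rightharpoonup V_\infty$ weakly in $W^{1,2}(M,TM)$ and $V_k \to V_\infty$ strongly in $L^2(M,TM)$. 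The strong convergence gives $\|V_\infty\|_{L^2(M,TM)} > 0$, so in particular $V_\infty \not\equiv 0$.

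Finally, I would pass to the limit in $J_0$. Since $V \mapsto \|\bL V\|_{L^2}^2$ is a continuous convex (hence weakly lower semicontinuous) functional on $W^{1,2}(M,TM)$, we have
\[
 J_0(V_\infty) \leq \liminf_{k \to \infty} J_0(V_k) = 0,
\]
so $\bL V_\infty \equiv 0$ a.e., i.e.\ $V_\infty$ is a (weak) conformal Killing vector field. By standard elliptic regularity for the operator $\bL$ applied to the equation $\bL V_\infty = 0$, $V_\infty$ has enough regularity to qualify as a non-trivial conformal Killing field, contradicting the hypothesis on $g$ in Proposition \ref{propVector}. This contradiction establishes the existence of the desired $\varepsilon$.

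The main obstacle I anticipate is not in the abstract functional-analytic steps but in ensuring that the limit $V_\infty \in W^{1,2}(M,TM)$ with $\bL V_\infty = 0$ really is a \emph{non-trivial} conformal Killing vector field in the sense excluded by the assumption; this is a minor regularity issue that can be handled by a boot-strap via elliptic regularity for $\bL$ under the given regularity $g \in W^{2,p}$. Beyond that, the argument is a direct application of Lemma \ref{lmCoercivity} combined with the compactness of the embedding $W^{1,2}(M,TM) \hookrightarrow L^2(M,TM)$ and the weak lower semicontinuity of $J_0$.
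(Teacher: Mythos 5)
Your proof is correct and follows essentially the same outline as the paper's: argue by contradiction, normalize, feed the resulting sequence into Lemma \ref{lmCoercivity} to get boundedness and a non-trivial weak limit, then conclude $\bL V_\infty = 0$ and contradict the no-conformal-Killing-field assumption. The one genuine (but minor) variation is in the last step: you invoke weak lower semicontinuity of $J_0$ (as a continuous convex functional on $W^{1,2}(M,TM)$) to get $J_0(V_\infty) \leq \liminf J_0(V_k) = 0$, whereas the paper instead applies Lemma \ref{lmCoercivity} to the differences $V_k - V_{k'}$ to show that the sequence is Cauchy in $W^{1,2}(M,TM)$, so that $J_0(V_\infty) = \lim J_0(V_k) = 0$ by strong continuity. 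Your route is slightly shorter and uses a standard fact; the paper's route is more hands-on and gives strong convergence of the sequence as a byproduct. Either way, both are valid. Also note that your normalization $\|V_k\|_{W^{1,2}} = 1$ (versus the paper's $\|V_k\|_{L^2} = 1$) is a cosmetic difference; the two are interchangeable given Lemma \ref{lmCoercivity}. The regularity worry you flag at the end is not really an obstacle here: the assumption in Proposition \ref{propVector} already excludes any non-trivial $W^{1,2}$ vector field with $\bL V = 0$ a.e., so no bootstrap is needed to reach a contradiction.
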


\begin{proof}
 We assume by contradiction that for all integer $k > 0$, there exists
 $V_k \in W^{1, 2}(M, TM)$ such that
 \[
  J_0(V_k) \leq \frac{1}{k} \|V_k\|_{W^{1, 2}(M, TM)}^2.
 \]
 Upon rescaling $V_k$, we can assume further that $\|V_k\|_{L^2(M, TM)} = 1$.
 From Lemma \ref{lmCoercivity}, we have
 \[
  c_1 \|V_k\|_{W^{1, 2}(M, TM)}^2 - c_2 \|V_k\|_{L^2(M, TM)}^2
  \leq J_0(V_k) \leq \frac{1}{k} \|V_k\|_{W^{1, 2}(M, TM)}^2.
 \]
 In particular, if $\dsp{k \geq \frac{2}{c_1}}$, we get
 \[
  \frac{c_1}{2} \|V_k\|_{W^{1, 2}(M, TM)}^2 \leq c_2 \|V_k\|_{L^2(M, TM)}^2
   = c_2
 \]
 showing that $(V_k)_k$ is bounded in $W^{1, 2}(M, TM)$. By compactness of the
 embedding $W^{1, 2}(M, TM) \hookrightarrow L^2(M, TM)$, we can further assume
 that the sequence $(V_k)$ converges to some $V_\infty \in L^2(M, TM)$. In
 particular,
 \[
  \|V_\infty\|_{L^2(M, TM)} = \lim_{k \to \infty} \|V_k\|_{L^2(M, TM)} = 1.
 \]
 So $V_\infty \not\equiv 0$. Using once again Lemma \ref{lmCoercivity}, we get that,
 for all $\dsp{k, k' \geq \frac{2}{c_1}}$,
 \begin{align*}
  &c_1 \|V_k - V_{k'}\|_{W^{1, 2}(M, TM)}^2 - c_2 \|V_k - V_{k'}\|_{L^2(M, 
TM)}^2\\
  &\qquad\leq J_0(V_k - V_{k'})\\
  &\qquad\leq 2 (\|\bL V_k\|_{L^2(M, TM)}^2 + \|\bL V_{k'}\|_{L^2(M, TM)}^2)\\
  &\qquad\leq 2 \left(\frac{1}{k} \|V_k\|_{W^{1, 2}(M, TM)}^2 + \frac{1}{k'} 
\|V_{k'}\|_{W^{1, 2}(M, TM)}^2\right)\\
  &\qquad \leq\frac{4 c_2}{c_1} \left(\frac{1}{k} + \frac{1}{k'}\right).
 \end{align*}
 As a consequence, choosing $k$ and $k'$ large enough, we can make the norm
 $\|V_k - V_{k'}\|_{W^{1, 2}(M, TM)}$ as small as we want. The sequence 
$(V_k)_k$ is thus Cauchy in $W^{1, 2}(M, TM)$ and converges to $V_\infty \in 
W^{1, 2}(M, M)$. We have
 \[
  J_0(V_\infty)
  = \lim_{k \to \infty} J_0(V_k)
  \leq \lim_{k \to \infty} \frac{2 c_2}{k c_1 } = 0.
 \]
 So $V_\infty$ is a non-trivial solution to $\bL V = 0$. This contradicts
 the assumptions of Proposition \ref{propVector}. It follows that $J_0$ is 
coercive.
\end{proof}

We can finally prove Proposition \ref{propVector} in a succession of claims.
For any $q \in (1, p]$, we introduce the operator
\[
 \cP_q: W^{2, q}(M, TM) \to L^q(M, T^* M) \times W^{1-\frac{1}{q}, q}(\partial M, T^*M)
\]
defined by
\[
 \cP_q(W) \definedas (\DeltaL W,~\bL W(\nu, \cdot)).
\]

\begin{claim}\label{clRegEllip}
 There exists a constant $C$ such that, for any $W \in W^{2, q}(M, TM)$, we 
have
 \[
  \left\|W\right\|_{W^{2, q}(M, TM)} \leq C 
\left(\left\|\DeltaL W\right\|_{L^q(M, T^*M)}
+ \left\|\bL W(\nu, \cdot)\right\|_{W^{1-\frac{1}{q}, q}(\partial M, T^*M)}
+ \left\|W\right\|_{L^q(M, TM)} \right).
 \]
\end{claim}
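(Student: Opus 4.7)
The plan is to establish Claim \ref{clRegEllip} as a standard Agmon--Douglis--Nirenberg (ADN) elliptic estimate for the boundary value problem $W \mapsto (\DeltaL W, \bL W(\nu, \cdot))$. The first step is to verify strong ellipticity of the interior operator: the principal symbol of $\DeltaL$ at $\xi \neq 0$ is, up to the musical isomorphism, $|\xi|^2 I + \frac{n-2}{n} \xi \otimes \xi$, which is positive-definite. This is consistent with the coercivity of $J_0$ modulo $\|W\|_{L^2}^2$ already established in Lemma \ref{lmCoercivity}.

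Next, I would localize by a partition of unity $(\chi_i)_{i \in I}$ exactly as in the proof of Lemma \ref{lmCoercivity}, with each $\supp(\chi_i)$ contained either in an interior chart modeled on $B_\varepsilon(0) \subset \bR^n$ or in a boundary chart modeled on $B_\varepsilon^+(0) = B_\varepsilon(0) \cap \{x_n \geq 0\}$, where $\partial M$ is straightened to $\{x_n = 0\}$. For interior charts, standard $L^q$ Calder\'on--Zygmund estimates for strongly elliptic systems with continuous coefficients (which holds since $g \in W^{2,p}(M, S_2 M)$ with $p > n/2$ embeds into $C^{0,\alpha}$ by Sobolev embedding) directly yield the desired bound. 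For boundary charts, freezing the metric at a boundary point to the Euclidean one reduces matters to a constant coefficient half-space model problem $\DeltaL V = \gamma$, $\bL V(\nu, \cdot)|_{x_n = 0} = \omega$ on $\bR^n_+$.

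The key verification is that the boundary operator $B(V) = \bL V(\nu, \cdot)$ satisfies the Lopatinski--Shapiro complementing condition relative to $\DeltaL$: for every tangential frequency $\xi' \neq 0$, the only exponentially decaying solution on $[0, \infty)$ of the associated ODE system $\DeltaL V = 0$ with $B(V)|_0 = 0$ must vanish. Because $B$ is the natural boundary operator arising from the variational form $J_0$, and since $J_0$ is coercive modulo $\|W\|_{L^2}^2$ (Lemma \ref{lmCoercivity}), this complementing condition is automatic. Standard ADN $L^q$ estimates on the half-space then provide the local bound.

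The main obstacle is controlling the perturbation from frozen to variable coefficients under the low regularity $g \in W^{2,p}$. This is handled by choosing $\varepsilon$ sufficiently small so that $(\Phi_i)_* g$ is $C^0$-close to the Euclidean metric $\delta$, in the same spirit as the proof of Lemma \ref{lmCoercivity}; the resulting perturbation terms, entirely analogous to those appearing in \eqref{eqDiffKilling}, are either absorbed into the principal part (for a small enough neighborhood) or controlled by the $\|W\|_{L^q(M, TM)}$ term on the right-hand side via H\"older's inequality and Sobolev embedding. Summing the local estimates against the partition of unity and commuting $\bL$ with the cutoffs $\chi_i$ (producing only lower-order terms, again absorbed into $\|W\|_{L^q(M, TM)}$) yields the global estimate. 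A closely related argument in a compatible regularity class appears in \cite[Theorem 4.5]{HolstMeierTsogtgerel}, which I would cite for the bulk of the technical details.
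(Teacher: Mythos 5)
Your proposal is correct and follows the same approach as the paper, which proves Claim \ref{clRegEllip} simply by invoking \cite[Proposition 5]{Maxwell} together with the ADN theory of \cite{ADN} (see also \cite[Theorem 6.3.7]{Morrey}). Your sketch --- computing the principal symbol of $\DeltaL$, identifying $\bL W(\nu,\cdot)$ as the natural boundary operator of the coercive form $J_0$ so that the Lopatinski--Shapiro complementing condition holds, and localizing by partition of unity with a perturbation argument for the $W^{2,p}$ coefficients --- correctly unpacks the ADN argument the paper defers to, and citing \cite[Theorem 4.5]{HolstMeierTsogtgerel} in place of \cite[Proposition 5]{Maxwell} is an equally valid reference for the technical details.
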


This claim is proven in a more general context in \cite[Proposition 5]{Maxwell}, see also 
\cite[Section 6]{Gicquaud} and is an application of the general theory developed in
\cite{ADN} (see also \cite[Theorem 6.3.7]{Morrey}). The next claim is a 
particular case of \cite[Theorem 6.4.8]{Morrey}:

\begin{claim}\label{clImprovedRegularity}
 Given $q \in [2, p]$, assume that $W \in W^{1, 2}(M, TM)$ is a
 weak solution to the problem \eqref{eqVectorBoundary} with
 $\gamma \in L^q(M, T^*M)$ and $\omega \in W^{1-\frac{1}{q}, q}(\partial M, T^*M)$.
 Namely, $W$ satisfies that, for all $V \in W^{1, 2}(M, TM)$,
\begin{equation}\label{eqWeakVector}
 \frac{1}{2} \int_M \< \bL V, \bL W\> \dd\mu^g
  = \int_{\partial M} \<V, \omega\> \dd\mu^g - \int_M \<V, \gamma\> \dd\mu^g.
\end{equation}
Then $W \in W^{2, q}(M, TM)$.
\end{claim}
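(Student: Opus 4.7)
The strategy is to combine the a priori estimate of Claim \ref{clRegEllip} with Fredholm theory to produce a strong solution $W_q \in W^{2, q}(M, TM)$ of $\cP_q(W_q) = (\gamma, \omega)$, and then to identify it with the given weak solution $W$ via a uniqueness argument in $W^{1, 2}(M, TM)$.

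First, I would argue that $\cP_q$ is a Fredholm operator of index zero. The estimate of Claim \ref{clRegEllip}, together with the compact Rellich embedding $W^{2, q}(M, TM) \hookrightarrow L^q(M, TM)$, shows that $\cP_q$ has closed range and finite-dimensional kernel. Invoking the general theory of elliptic boundary value problems (\cite{ADN}, see also \cite[Theorem 6.3.7]{Morrey}) applied to the elliptic system $\DeltaL$ with the Neumann-type boundary condition $W \mapsto \bL W(\nu, \cdot)$, which satisfies the Lopatinski--Shapiro complementing condition (as used in \cite[Proposition 5]{Maxwell}), yields that $\cP_q$ has index zero. To conclude that $\cP_q$ is an isomorphism, I would verify that $\ker \cP_q = 0$: if $V \in W^{2, q}(M, TM)$ satisfies $\DeltaL V = 0$ and $\bL V(\nu, \cdot) = 0$, then integration by parts gives
\[
 0 = -\int_M \<V, \DeltaL V\> \dd\mu^g = \frac{1}{2} \int_M |\bL V|^2 \dd\mu^g - \int_{\partial M} \bL V(\nu, V) \dd\mu^g = \frac{1}{2} \int_M |\bL V|^2 \dd\mu^g,
\]
so $\bL V = 0$, and the standing assumption that $g$ carries no non-trivial conformal Killing vector field forces $V \equiv 0$.

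Since $\cP_q$ is now an isomorphism, there exists a unique $W_q \in W^{2, q}(M, TM)$ with $\cP_q(W_q) = (\gamma, \omega)$. Because $q \geq 2$ and $M$ is compact, $W^{2, q}(M, TM) \hookrightarrow W^{1, 2}(M, TM)$, so $W_q$ is in particular a weak solution of \eqref{eqVectorBoundary} in the sense of \eqref{eqWeakVector}. To identify $W_q$ with the given weak solution $W$, I would consider $V \definedas W - W_q \in W^{1, 2}(M, TM)$, which satisfies the homogeneous weak equation; choosing $V$ itself as the test function in \eqref{eqWeakVector} (with zero data) gives
\[
 0 = \frac{1}{2} \int_M |\bL V|^2 \dd\mu^g = J_0(V),
\]
and the coercivity of $J_0$ established in Lemma \ref{lmCoercivity2} forces $V \equiv 0$. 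Hence $W = W_q \in W^{2, q}(M, TM)$.

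The principal obstacle is the Fredholm step, i.e.\ establishing that the index of $\cP_q$ is exactly zero despite the limited regularity $g \in W^{2, p}(M, S_2 M)$ of the coefficients. The a priori estimate of Claim \ref{clRegEllip} controls only one side; the matching adjoint estimate, which together with the complementing condition for $W \mapsto \bL W(\nu, \cdot)$ yields the index-zero statement, is covered by Morrey's framework \cite[Theorem 6.3.7]{Morrey} but requires care in the bookkeeping under low coefficient regularity. Once that is in place, the rest of the argument is a purely formal manipulation relying on the coercivity of $J_0$.
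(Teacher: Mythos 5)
The paper proves this claim by a single citation to the general regularity theory for weak solutions of elliptic boundary value problems, namely \cite[Theorem 6.4.8]{Morrey}, with a side remark that the case $q=2$ can be handled by difference quotients as in \cite[Section IX.6]{Brezis}. Your proposal instead attempts to establish the stronger statement that $\cP_q$ is an isomorphism and then to match the resulting strong solution with the given weak one. The parts you label ``formal manipulation'' are indeed correct: the integration by parts showing $\ker\cP_q=0$ is fine, and identifying $W$ with $W_q$ via the coercivity of $J_0$ (Lemma~\ref{lmCoercivity2}) is legitimate once a strong solution exists.

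The gap is the Fredholm-index-zero step, and it is not a bookkeeping issue but the crux. Claim~\ref{clRegEllip} together with Rellich compactness yields only \emph{semi}-Fredholmness of $\cP_q$ (finite-dimensional kernel and closed range). To obtain full Fredholmness you also need finite codimension of the range, typically via an a priori estimate for the adjoint; and even then you only get ``Fredholm,'' not ``index zero'' --- pinning the index requires an additional argument (homotopy to a constant-coefficient model, or self-adjointness of the $L^2$-realization plus invariance of the index in $q$), none of which you supply. The references you cite for this step, \cite{ADN} and \cite[Theorem~6.3.7]{Morrey}, are exactly the ones the paper invokes for the a priori estimate of Claim~\ref{clRegEllip}; they do not assert an index computation. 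Moreover, in the paper's own logical order, surjectivity of $\cP_q$ is established only afterwards, in Claims~\ref{clClosedImage} and \ref{clDenseImage}, and the latter relies on Claim~\ref{clImprovedRegularity} to upgrade the Lax--Milgram weak solution to $W^{2,q}$. An independent proof that $\cP_q$ is onto is therefore essentially as hard as the regularity statement you set out to prove, and without one your argument is circular. The straightforward repair is to argue as the paper does: invoke the weak-to-strong regularity theorem directly (or reproduce the difference-quotient argument when $q=2$), rather than routing through invertibility of $\cP_q$.
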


Remark that, when $q=2$, there is an elegant way to prove this result using difference quotients, see \cite[Section IX.6]{Brezis}

Claim \ref{clImprovedRegularity} implies, in particular, that any
$W \in \ker(\cP_q)$ belongs to $W^{2, p}(M, TM)$. Thus, for such a $W$,
taking $V = W$ in Formula \eqref{eqWeakVector} yields
\[
 \int_M |\bL W|^2 \dd\mu^g = 0,
\]
so $W$ is a conformal Killing vector. As we excluded the possibility for $M$ to 
have non-zero conformal Killing vector fields, we obtain the next claim:

\begin{claim}\label{clInjectivity}
 The operator $\cP_q$ is injective for all $q \in (0, p]$.
\end{claim}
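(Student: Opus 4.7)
The plan is to combine the two preceding claims with the vanishing energy identity for $\bL W$ in order to reduce injectivity of $\cP_q$ to the hypothesis that $(M, g)$ carries no non-trivial conformal Killing vector fields. The whole strategy is to first promote an arbitrary element of $\ker(\cP_q)$ to a function in $W^{2, p}(M, TM)$ and then apply the standard energy argument.

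First I would bootstrap the regularity. Fix $W \in \ker(\cP_q)$, so $\DeltaL W \equiv 0$ and $\bL W(\nu, \cdot) \equiv 0$. If $q \geq 2$, then $W^{2, q}(M, TM) \hookrightarrow W^{1, 2}(M, TM)$, so $W$ is a weak solution to \eqref{eqVectorBoundary} with $\gamma \equiv 0$ and $\omega \equiv 0$, and iterating Claim \ref{clImprovedRegularity} on an increasing sequence of exponents up to $p$ gives $W \in W^{2, p}(M, TM)$. If $q < 2$, one first iterates Claim \ref{clRegEllip}: each application controls $\|W\|_{W^{2, q_i}}$ by $\|W\|_{L^{q_i}}$, which in turn is controlled (via Sobolev embedding) by $\|W\|_{W^{2, q_{i-1}}}$ for a strictly larger exponent $q_i > q_{i-1}$; after finitely many steps one reaches $q_i \geq 2$ and the previous case applies.

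Once $W \in W^{2, p}(M, TM) \subset W^{1, 2}(M, TM)$, I would test the weak formulation \eqref{eqWeakVector} against $V = W$ itself (with $\gamma$ and $\omega$ both vanishing) to obtain
\[
 \tfrac{1}{2} \int_M |\bL W|^2 \dd\mu^g = 0.
\]
Thus $\bL W \equiv 0$ and $W$ is a conformal Killing vector field of $g$. By the standing assumption of Proposition \ref{propVector} this forces $W \equiv 0$, whence $\cP_q$ is injective.

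The only mildly technical point is the bootstrap in the low-regularity range $q < 2$, since Claim \ref{clImprovedRegularity} is stated only for $q \geq 2$; but the lower-order term $\|W\|_{L^q}$ on the right-hand side of Claim \ref{clRegEllip} is precisely what permits the iteration, so this is essentially book-keeping rather than a genuine analytic obstacle.
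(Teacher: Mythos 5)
Your overall strategy is exactly the paper's: promote an element $W \in \ker(\cP_q)$ to $W^{2,p}(M,TM)$, test the weak formulation \eqref{eqWeakVector} against $V = W$ to obtain $\int_M |\bL W|^2 \dd\mu^g = 0$, conclude $W$ is a conformal Killing field, and invoke the standing no-CKV hypothesis. For $q \geq 2$ your argument is sound (in fact a single application of Claim \ref{clImprovedRegularity} with exponent $p$ suffices, since the data $\gamma \equiv 0$, $\omega \equiv 0$ lie in every Lebesgue and trace space), and this is essentially the paper's reasoning verbatim.

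The $q < 2$ branch, however, contains a genuine gap. Claim \ref{clRegEllip} is an \emph{a priori estimate}: the inequality $\|W\|_{W^{2,q_i}} \leq C(\dots + \|W\|_{L^{q_i}})$ holds only for $W$ already known to lie in $W^{2,q_i}(M, TM)$. Knowing $W \in W^{2,q_{i-1}}$ and hence $W \in L^{q_i}$ (by Sobolev) does not, by itself, put $W$ into $W^{2,q_i}$; a priori estimates cannot be ``iterated'' to upgrade regularity without an accompanying existence/regularity theorem (e.g.\ an approximation argument, the difference-quotient method, or the $L^p$-theory of \cite{ADN, Morrey} underlying Claim \ref{clImprovedRegularity}). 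What one actually needs is that the Sobolev embedding $W^{2,q}(M) \hookrightarrow W^{1,q^*}(M)$ with $\tfrac{1}{q^*} = \tfrac{1}{q} - \tfrac{1}{n}$ lands in $W^{1,2}(M)$, which holds precisely when $q \geq \tfrac{2n}{n+2}$; once $W \in W^{1,2}(M,TM)$, Claim \ref{clImprovedRegularity} applies. For the residual range $q \in \left(1, \tfrac{2n}{n+2}\right)$ (nonempty for $n \geq 3$), one would have to either invoke the full elliptic regularity theory for the boundary value problem directly, or note that the claim is only ever used in the paper for exponents above this threshold; the paper itself is terse on this point, but it never appeals to a bootstrap through the a priori estimate, which is the step in your proposal that does not hold up.
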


We now turn our attention to proving that $\cP_q$ is surjective. This is the 
content of the following two claims:

\begin{claim}\label{clClosedImage}
For any $q = \in (1, p]$, there exists a constant $C'$ such that, for any $W 
\in W^{2, q}(M, TM)$, we have
 \begin{equation}\label{eqRegEllip}
  \left\|W\right\|_{W^{2, q}(M, TM)} \leq C'
\left(\left\|\DeltaL W\right\|_{L^q(M, T^*M)}
+ \left\|\bL W(\nu, \cdot)\right\|_{W^{1-\frac{1}{q}, q}(\partial M, T^*M)}\right).
 \end{equation}
 In particular, $\cP_q$ has closed range.
\end{claim}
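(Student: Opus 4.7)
The plan is a classic compactness-plus-injectivity absorption argument. The general elliptic estimate from Claim \ref{clRegEllip} already gives
\[
\|W\|_{W^{2,q}(M,TM)} \leq C\Big(\|\DeltaL W\|_{L^q(M,T^*M)} + \|\bL W(\nu,\cdot)\|_{W^{1-\frac{1}{q},q}(\partial M,T^*M)} + \|W\|_{L^q(M,TM)}\Big),
\]
so the task is to eliminate the lower-order term $\|W\|_{L^q(M,TM)}$. I would argue by contradiction: assume the estimate \eqref{eqRegEllip} fails, so there exists a sequence $(W_k)_k \subset W^{2,q}(M,TM)$ with $\|W_k\|_{W^{2,q}(M,TM)} = 1$ for all $k$ and
\[
\|\DeltaL W_k\|_{L^q(M,T^*M)} + \|\bL W_k(\nu,\cdot)\|_{W^{1-\frac{1}{q},q}(\partial M,T^*M)} \to 0.
\]

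Next, I would use Rellich--Kondrachov: the embedding $W^{2,q}(M,TM) \hookrightarrow L^q(M,TM)$ is compact, so after extraction $W_k \to W_\infty$ strongly in $L^q(M,TM)$ for some $W_\infty$. Applying Claim \ref{clRegEllip} to the differences $W_k - W_{k'}$ and combining the $L^q$-Cauchy property with the fact that $(\DeltaL W_k, \bL W_k(\nu,\cdot))$ is Cauchy in its target space, I conclude $(W_k)_k$ is Cauchy in $W^{2,q}(M,TM)$. Hence $W_k \to W_\infty$ in $W^{2,q}(M,TM)$, which forces $\|W_\infty\|_{W^{2,q}(M,TM)} = 1$ and $\cP_q(W_\infty) = 0$. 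By Claim \ref{clInjectivity}, $W_\infty \equiv 0$, contradicting $\|W_\infty\|_{W^{2,q}(M,TM)} = 1$. This proves \eqref{eqRegEllip}.

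Closed range then follows directly: given a sequence $(W_k)_k$ in $W^{2,q}(M,TM)$ with $\cP_q(W_k) \to (\gamma, \omega)$ in $L^q(M,T^*M) \times W^{1-\frac{1}{q},q}(\partial M, T^*M)$, the estimate \eqref{eqRegEllip} applied to $W_k - W_{k'}$ shows that $(W_k)_k$ is Cauchy in $W^{2,q}(M,TM)$, hence converges to some $W_\infty \in W^{2,q}(M,TM)$, and by continuity of $\cP_q$ we get $\cP_q(W_\infty) = (\gamma, \omega)$.

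The only subtlety I anticipate is making sure the hypotheses of Claim \ref{clInjectivity} do apply to the limit $W_\infty$ in the required range of $q$ (in particular that $W_\infty$ is genuinely a weak $W^{1,2}$ solution so that the conformal Killing field argument from Claim \ref{clImprovedRegularity} kicks in); since $q > 1$ and $W^{2,q}(M,TM) \hookrightarrow W^{1,2}(M,TM)$ for $q$ sufficiently close to the range we care about, this is immediate when $q \geq 2$, and for $1 < q < 2$ one can invoke the already-established injectivity of $\cP_q$ directly from Claim \ref{clInjectivity} without passing through $W^{1,2}$. The rest is a routine compactness bookkeeping.
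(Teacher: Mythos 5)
Your proposal is correct and follows essentially the same compactness-plus-absorption argument as the paper; the only cosmetic difference is that you normalize $\|W_k\|_{W^{2,q}}=1$ whereas the paper normalizes $\|W_k\|_{L^q}=1$, and both normalizations deliver the contradiction just as cleanly. The "subtlety" you flag at the end is a non-issue: Claim \ref{clInjectivity} is already stated for all $q\in(0,p]$, so once you know $W_\infty\in W^{2,q}(M,TM)$ and $\cP_q(W_\infty)=0$ you may invoke it directly without detouring through $W^{1,2}$.
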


\begin{proof}
 The argument goes by contradiction. Assume that there exists no such constant 
$C'$. Then, for any integer $k \geq 1$, there exists a vector field
$W_k \in W^{2, q}(M, TM)$ such that
 \begin{equation}\label{eqContradiction}
  \left\|W_k\right\|_{W^{2, q}(M, TM)} \geq k
\left(\left\|\DeltaL W_k\right\|_{L^q(M, T^*M)}
+ \left\|\bL W_k(\nu, \cdot)\right\|_{W^{1-\frac{1}{q}, q}(\partial M, T^*M)}\right).
 \end{equation}
 Without loss of generality, we can assume that $\left\|W_k\right\|_{L^q(M, 
TM)} = 1$. From Claim \ref{clRegEllip}, we conclude that
 \[
  \left\|W_k\right\|_{W^{2, q}(M, TM)}
  \leq \frac{C}{k} \left\|W_k\right\|_{W^{2, q}(M, TM)}
+ C \left\|W_k\right\|_{L^q(M, TM)}.
 \]
 In particular, if $k \geq 2C$, we have
 \[
  \left\|W_k\right\|_{W^{2, q}(M, TM)}
  \leq 2 C \left\|W_k\right\|_{L^q(M, TM)} = 2 C.
 \]
 So the sequence $(W_k)_k$ is bounded in $W^{2, q}(M, TM)$. As the embedding
 $W^{2, q}(M, TM) \hookrightarrow L^q(M, TM)$ is compact, we can assume that 
the sequence $(W_k)_k$ converges in $L^q(M, TM)$ to some $W_\infty \in L^q(M, 
TM)$.
 From Claim \ref{clRegEllip}, we have for any pair of integers $(k, l)$ with
 $k, l \geq 2C$,
 \begin{align*}
  \left\|W_k - W_l\right\|_{W^{2, q}(M, TM)}
  & \leq C \left(\left\|\DeltaL W_k - \DeltaL W_l\right\|_{L^q(M, 
T^*M)}\right.\\
  &\qquad \left. + \left\|\bL W_k(\nu, \cdot) - \bL W_l(\nu, 
\cdot)\right\|_{W^{1-\frac{1}{q}, q}(\partial M, T^*M)}
  + \left\|W_k - W_l\right\|_{L^q(M, TM)} \right)\\
  & \leq C \left(\left\|\DeltaL W_k\right\|_{L^q(M, 
T^*M)} + \left\|\DeltaL W_l\right\|_{L^q(M, T^*M)}\right.\\
  &\qquad + \left\|\bL W_k(\nu, \cdot)\right\|_{W^{1-\frac{1}{q}, q}(\partial M, 
T^*M)} + \left\|\bL W_l(\nu, \cdot)\right\|_{W^{1-\frac{1}{q}, q}(\partial M, T^*M)}\\
  & \qquad + \left.\left\|W_k - W_l\right\|_{L^q(M, TM)} \right)\\
  & \leq C \left(\frac{1}{k} \left\|W_k\right\|_{W^{2, q}(M, TM)}
   + \frac{1}{l} \left\|W_l\right\|_{W^{2, q}(M, TM)}\right.\\
  &\qquad + \left. \left\|W_k - W_l\right\|_{L^q(M, TM)}\right)\\
  &\leq 2 C^2 \left(\frac{1}{k} + \frac{1}{l}\right)
   + C \left\|W_k - W_l\right\|_{L^q(M, TM)}.
 \end{align*}
 So the sequence $(W_k)$ is Cauchy in $W^{2, q}(M, TM)$. Thus, we have
 $W_\infty \in W^{2, q}(M, TM)$ and, passing to the limit $k \to \infty$ in
 Equation \eqref{eqContradiction}, we conclude that
 $W_\infty$ satisfies $\DeltaL W_\infty = 0$ and
 $\bL W_\infty(\nu, \cdot) = 0$. From the discussion preceeding Claim 
\ref{clInjectivity}, we know that $W_\infty \equiv 0$.
 However, as $\|W_k\|_{L^q(M, TM)} = 1$ for all $k$, we have that 
$\|W_\infty\|_{L^q(M, TM)} = 1$. This gives the desired contradiction and 
proves the inequality \eqref{eqRegEllip}.

Closure of the range is then a simple consequence of Inequality \eqref{eqRegEllip}.
Assume  given a converging sequence of elements $(\gamma_k, \omega_k)$ in the 
range of $\cP_q$ that converges to some
$(\gamma_\infty, \omega_\infty) \in L^q(M, T^*M) \times W^{1-1/q, q}(\partial 
M, T^*M)$. As $\cP_q$ is injective, there exists a unique $W_k \in W^{2, q}(M, 
TM)$ such that $\cP_q(W_k) = (\gamma_k, \omega_k)$. The inequality 
\eqref{eqRegEllip} implies that, for any pair of integers $(k, l)$, we have
\begin{align*}
 &\left\|W_k - W_l\right\|_{W^{2, q}(M, TM)}\\
 &\qquad \leq C' \left(\left\|\DeltaL (W_k-W_l)\right\|_{L^q(M, T^*M)}
  + \left\|\bL (W_k - W_l)(\nu, \cdot)\right\|_{W^{1-\frac{1}{q}, q}(\partial M, 
T^*M)}\right)\\
 & \qquad \leq C' \left(\left\|\gamma_k - \gamma_l\right\|_{L^q(M, T^*M)}
  + \left\|\omega_k - \omega_l\right\|_{W^{1-\frac{1}{q}, q}(\partial M, 
T^*M)}\right).
\end{align*}
As the sequence $((\gamma_k, \omega_k))_k$ converges, it is Cauchy so the right-
hand side of the previous inequality tends to zero as $k$ and $l$ tend to 
infinity. This shows that the sequence $(W_k)_k$ is also Cauchy. Since $W^{2, 
q}(M, TM)$ is a Banach space, the sequence $(W_k)_k$ tends to some  $W_\infty 
\in W^{2, q}(M, TM)$. As $\cP_q$ is continuous, we have $\cP_q(W_\infty) = 
(\gamma_\infty, \omega_\infty)$ showing that $(\gamma_\infty, \omega_\infty)$ 
is in the range of $\cP_q$. This concludes the proof of the fact that the range 
of $\cP_q$ is closed.
\end{proof}

\begin{claim}\label{clDenseImage}
For any $q = \in (1, p]$, $\cP_q$ has dense range.
\end{claim}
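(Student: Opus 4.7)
The plan is to exploit the closed-range result of Claim \ref{clClosedImage} by showing that the image of $\cP_q$ is dense, which together with closedness will give surjectivity (and, combined with Claim \ref{clInjectivity}, bijectivity). Since the space of smooth pairs $(\gamma, \omega) \in C^\infty(M, T^*M) \times C^\infty(\partial M, T^*M)$ is dense in $L^q(M, T^*M) \times W^{1 - \frac{1}{q}, q}(\partial M, T^*M)$, it suffices to show that every smooth pair lies in the range of $\cP_q$.

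So I would fix a smooth pair $(\gamma, \omega)$ and solve the problem at the Hilbert level $q = 2$ via Lax-Milgram. On $W^{1, 2}(M, TM)$ the bilinear form
\[
 a(V, W) \definedas \frac{1}{2} \int_M \<\bL V, \bL W\> \dd\mu^g
\]
is continuous (the coefficients of $\bL$ are controlled by $g \in W^{2, p}$, which embeds into $L^\infty$) and coercive by Lemma \ref{lmCoercivity2}, while the linear form
\[
 L(V) \definedas \int_{\partial M} \<V, \omega\> \dd\mu^g - \int_M \<V, \gamma\> \dd\mu^g
\]
is continuous on $W^{1, 2}(M, TM)$ thanks to the trace inequality. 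Lax-Milgram then produces a unique $W \in W^{1, 2}(M, TM)$ satisfying $a(V, W) = L(V)$ for all $V \in W^{1, 2}(M, TM)$, which is exactly the weak formulation \eqref{eqWeakVector} associated with the data $(\gamma, \omega)$.

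Next I would upgrade regularity: since $\gamma \in C^\infty \subset L^p(M, T^*M)$ and $\omega \in C^\infty \subset W^{1 - \frac{1}{p}, p}(\partial M, T^*M)$, Claim \ref{clImprovedRegularity} (applied with exponent $p$) yields $W \in W^{2, p}(M, TM) \subset W^{2, q}(M, TM)$ for any $q \in (1, p]$. An integration by parts in the weak formulation then shows $\DeltaL W = \gamma$ in $M$ and $\bL W(\nu, \cdot) = \omega$ on $\partial M$, i.e.\ $\cP_q(W) = (\gamma, \omega)$. This exhibits a smooth pair in the range of $\cP_q$, and by the density of smooth pairs the range of $\cP_q$ is dense.

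The serious analytic work was really already done: the difficult ingredient is coercivity of the quadratic form $J_0$ on $W^{1, 2}(M, TM)$, which in the low-regularity setting $g \in W^{2, p}$ required the Korn-type argument of Lemma \ref{lmCoercivity} and the exclusion of conformal Killing fields in Lemma \ref{lmCoercivity2}. Once that is in hand, the density of the range, and hence surjectivity of $\cP_q$, reduces to a standard Lax-Milgram plus elliptic regularity routine, concluding the proof of Proposition \ref{propVector}.
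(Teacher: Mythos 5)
Your proof is correct and rests on the same ingredients as the paper's: Lax--Milgram at the Hilbert level $q = 2$ using the coercivity from Lemma~\ref{lmCoercivity2}, followed by elliptic regularity via Claim~\ref{clImprovedRegularity}. The only cosmetic difference is that the paper splits into the cases $q \geq 2$ (proving full surjectivity) and $q < 2$ (reducing to $q = 2$ by density of $L^2 \times W^{1/2,2}$ in the target), while you work uniformly with smooth dense data and invoke the regularity claim at exponent $p$; both routes are valid and equivalent in substance.
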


\begin{proof}
This is now that we use Lemma \ref{lmCoercivity2}. Assume first that
$q \geq 2$. We can use the Lax-Milgram theorem to conclude that for any
$(\gamma, \omega) \in L^q(M, T^M) \times W^{1-\frac{1}{q}, q}(\partial M, T^*M)$,
there is a weak solution to the problem \eqref{eqVectorBoundary}, that is to 
say a $W \in W^{1, 2}(M)$ solving \eqref{eqWeakVector} for all
$V \in W^{1, 2}(M, TM)$. From Claim \ref{clImprovedRegularity}, we conclude 
that $W \in W^{2, q}(M, TM)$ meaning that $(\gamma, \omega)$ is in the range of 
$\cP_q$.

In the case $q < 2$, the strategy is to remark that $L^2(M, T^*M) \times 
W^{\frac{1}{2}, 2}(M, T^*M)$ is dense in $L^q(M, T^*M) \times W^{1-\frac{1}{q}, q}(M, T^*M)$.
As we know that $\cP_2$ is surjective, we conclude that the range of $\cP_q$ 
contains a dense subspace. In particular, the range of $\cP_q$ is dense.
\end{proof}

It then follows from claims \ref{clClosedImage} and 
\ref{clDenseImage} that the range of $\cP_q$ is the whole of
$L^q(M, T^*M) \times W^{1-\frac{1}{q}, q}(\partial M, T^* M)$. As $\cP_q$ is also 
injective (Claim \ref{clInjectivity}), we conclude from the open mapping theorem 
that $\cP_q$ is a linear isomorphism between $W^{2, q}(M, TM)$ and
$L^q(M, T^*M) \times W^{1-\frac{1}{q}, q}(\partial M, T^* M)$. This proves Proposition 
\ref{propVector}.

As an application of Proposition \ref{propVector}, we prove an analog of York's
decomposition in the context of compact manifolds with boundary:

\begin{proposition}\label{propYork}
For any symmetric traceless 2-tensor $T \in W^{1, p}(M, \Sring_2 M)$, there
exists a unique pair
\[
 (\sigma, W) \in W^{1, p}(M, \Sring_2 M) \times W^{2, p}(M, TM)
\]
with $\sigma$ being divergence-free (i.e. a TT-tensor) satisfying
\[
 \sigma(\nu, \cdot) \equiv 0\text{ on } \partial M.
\]
such that $T = \sigma + \bL W$. This decomposition is $L^2$-orthogonal:
\[
 \int_M \< \sigma, \bL W\> \dd\mu^g = 0.
\]
In particular, $\displaystyle
\int_M |\sigma + \bL W|^2 \dd\mu^g = \int_M \left(|\sigma|^2 + |\bL W|^2\right) \dd\mu^g .
$
\end{proposition}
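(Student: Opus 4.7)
The plan is to reduce the decomposition to the boundary value problem for the vector Laplacian solved in Proposition \ref{propVector} (assuming $g$ has no non-trivial conformal Killing vector field, as is in force throughout this section). If one \emph{assumes} that a decomposition $T = \sigma + \bL W$ exists with $\sigma$ a TT-tensor satisfying $\sigma(\nu, \cdot) \equiv 0$, then taking the divergence on $M$ and the normal contraction on $\partial M$ forces
\[
 \DeltaL W = \divg_g T \quad\text{on }M, \qquad \bL W(\nu, \cdot) = T(\nu, \cdot) \quad\text{on }\partial M.
\]
Since $T \in W^{1, p}(M, \Sring_2 M)$, we have $\divg_g T \in L^p(M, T^*M)$ and, by the trace theorem, $T(\nu, \cdot) \in W^{1-\frac{1}{p}, p}(\partial M, T^*M)$. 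Thus I would apply Proposition \ref{propVector} with $q = p$ to obtain a unique $W \in W^{2, p}(M, TM)$ solving these two equations.

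Next, I would \emph{define} $\sigma := T - \bL W \in W^{1, p}(M, \Sring_2 M)$. Symmetry and trace-freeness of $\sigma$ are inherited from those of $T$ and $\bL W$. By construction, $\divg_g \sigma = \divg_g T - \DeltaL W \equiv 0$ on $M$ and $\sigma(\nu, \cdot) = T(\nu, \cdot) - \bL W(\nu, \cdot) \equiv 0$ on $\partial M$, so $\sigma$ has all the required properties and $(\sigma, W)$ gives the sought decomposition. For uniqueness, if $(\sigma_1, W_1)$ and $(\sigma_2, W_2)$ are two such decompositions, then $U \definedas W_1 - W_2$ satisfies $\DeltaL U = 0$ and $\bL U(\nu, \cdot) = 0$, and the injectivity part of Proposition \ref{propVector} forces $U \equiv 0$, whence $\sigma_1 = \sigma_2$.

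For the $L^2$-orthogonality, I would integrate by parts. Since $\sigma$ is symmetric and traceless, a direct computation gives $\langle \sigma, \bL W\rangle = 2 \sigma^{ij} \nabla_i W_j$. Hence
\[
 \int_M \< \sigma, \bL W\> \dd\mu^g
 = 2 \int_M \sigma^{ij} \nabla_i W_j \dd\mu^g
 = -2 \int_M \langle \divg_g \sigma, W\rangle \dd\mu^g + 2 \int_{\partial M} \sigma(\nu, W) \dd\mu^g = 0,
\]
using $\divg_g \sigma \equiv 0$ on $M$ and $\sigma(\nu, \cdot) \equiv 0$ on $\partial M$; the boundary integration by parts is justified by the $W^{1, p} \times W^{2, p}$ regularity of the pair $(\sigma, W)$ and the $W^{1-\frac{1}{p}, p}$ traces. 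The Pythagorean identity then follows from expanding $|\sigma + \bL W|^2$ and integrating.

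The proof is essentially a routine corollary of Proposition \ref{propVector}; there is no substantial obstacle, and the only points requiring care are the correct choice of boundary datum $(\divg_g T,~T(\nu, \cdot))$ so that its solution produces a $\sigma$ simultaneously divergence-free \emph{and} with vanishing normal contraction, and the verification that the two boundary conditions used in the integration by parts are precisely the ones imposed on $\sigma$.
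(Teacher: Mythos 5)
Your proposal is correct and follows essentially the same route as the paper: reduce to Proposition \ref{propVector} with the boundary datum $(\divg_g T,\, T(\nu,\cdot))$, define $\sigma := T - \bL W$, and verify orthogonality by the divergence theorem using symmetry, trace-freeness, $\divg_g\sigma\equiv 0$ and $\sigma(\nu,\cdot)\equiv 0$. The only cosmetic difference is that you spell out the uniqueness argument explicitly, whereas the paper attributes it directly to the uniqueness part of Proposition \ref{propVector}.
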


\begin{proof}
This is a simple application of what we have proven so far. We rewrite the
equation $\divg_g(\sigma) = 0$ as
\begin{equation}\label{eqYorkInner}
 \DeltaL W = \divg_g(\bL W) = \divg_g(T - \sigma) = \divg(T)
\end{equation}
and the boundary condition for $\sigma$ as
\begin{equation}\label{eqYorkBoundary}
 \bL W(\nu, \cdot) = (T - \sigma)(\nu, \cdot) = T(\nu, \cdot).
\end{equation}
From the regularity of $T$, we have $\divg_g(T) \in L^p(M, T^*M)$ and
$T(\nu, \cdot) \in W^{1-\frac{1}{p}, p}(\partial M, T^*M)$ so Proposition \ref{propVector}
applies showing that there exists a unique $W$ satisfying both \eqref{eqYorkInner}
and \eqref{eqYorkBoundary}. The orthogonality of $\sigma$ and $\bL W$ follows from
a simple calculation:
\begin{align*}
 \int_M \< \sigma, \bL W\> \dd\mu^g
 &= 2 \int_M \< \sigma, \nabla W\> \dd\mu^g\\
 &= 2 \int_M \left(\divg_g(\sigma(W, \cdot)) - \<\divg_g(\sigma), W\>\right) \dd\mu^g\\
 &= 2 \int_{\partial M} \sigma(W, \nu) \dd\mu^g - \int_M \<\divg_g(\sigma), W\> \dd\mu^g\\
 &= 0.
\end{align*}
The last line follows because of the conditions imposed on $\sigma$.
\end{proof}

\section{The coupled system}\label{secCoupled}
We now study the full system and prove the existence of solutions by means of
the Schauder fixed point theorem which we recall now in a slightly different
form:

\begin{theorem}\label{thmSchauder}
Let $X$ be a Banach space and $\Phi: X \to X$ a continuous mapping. Let 
$\Omega$ be a closed convex set such that $\Phi(\Omega)$ is relatively compact (in $X$)
and contained in $\Omega$. Then $\Phi$ admits a fixed point in $\Omega$.
\end{theorem}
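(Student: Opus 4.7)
The plan is to reduce the statement to the classical Schauder fixed point theorem on a compact convex set, and prove that case by a finite-dimensional approximation combined with Brouwer's theorem. Concretely, I would first pass to a compact convex subdomain of $\Omega$, then build Schauder projections onto finite-dimensional convex polytopes, obtain approximate fixed points via Brouwer, and finally extract a convergent subsequence.

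The first step is the reduction. Set $K \definedas \overline{\mathrm{co}}(\Phi(\Omega))$, the closed convex hull in $X$ of the relatively compact set $\Phi(\Omega)$. By Mazur's theorem, the closed convex hull of a relatively compact subset of a Banach space is compact, so $K$ is compact. Since $\Omega$ is convex, closed and contains $\Phi(\Omega)$, it contains $\overline{\mathrm{co}}(\Phi(\Omega)) = K$; in particular $\Phi$ restricts to a continuous self-map $\Phi \colon K \to K$. It therefore suffices to find a fixed point of $\Phi$ on the compact convex set $K$.

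The second step is the finite-dimensional approximation. For each integer $n \geq 1$, compactness of $K$ yields a $(1/n)$-net $\{x_1^n, \ldots, x_{m_n}^n\} \subset K$. Define the Schauder projection
\[
 P_n(y) \definedas \frac{\sum_{i=1}^{m_n} \mu_i^n(y)\, x_i^n}{\sum_{i=1}^{m_n} \mu_i^n(y)}, \qquad \mu_i^n(y) \definedas \max\{0,\, 1/n - \|y - x_i^n\|\}.
\]
The denominator is strictly positive on $K$ because the $x_i^n$ form a $(1/n)$-net, so $P_n$ is continuous and maps $K$ into the compact convex polytope $C_n \definedas \mathrm{co}(x_1^n, \ldots, x_{m_n}^n) \subset K$, with the pointwise bound $\|P_n(y) - y\| \leq 1/n$. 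The composition $P_n \circ \Phi$ is then a continuous self-map of $C_n$, which sits inside the finite-dimensional subspace spanned by the $x_i^n$. Brouwer's fixed point theorem produces $y_n \in C_n \subseteq K$ with $P_n(\Phi(y_n)) = y_n$.

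The third step is passing to the limit. Since $K$ is compact, a subsequence $(y_{n_k})_k$ converges to some $y^* \in K$. The uniform bound gives
\[
 \|y_{n_k} - \Phi(y_{n_k})\| = \|P_{n_k}(\Phi(y_{n_k})) - \Phi(y_{n_k})\| \leq \tfrac{1}{n_k} \xrightarrow[k\to\infty]{} 0,
\]
and continuity of $\Phi$ forces $\Phi(y_{n_k}) \to \Phi(y^*)$, so $y^* = \Phi(y^*) \in \Omega$. The genuinely non-trivial ingredients here are Mazur's theorem (needed to guarantee that $K$ is compact in infinite dimensions, which is what powers the reduction) and Brouwer's fixed point theorem (used as a black box on each $C_n$); the rest is a routine passage to the limit exploiting the uniform approximation property of the Schauder projections.
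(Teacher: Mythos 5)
Your proof is correct and, in its essential reduction, matches the paper exactly: both pass to the closed convex hull $K = \overline{\mathrm{co}}(\Phi(\Omega))$, invoke Mazur's theorem (the paper cites Rudin, Theorem 3.24) to get compactness, and observe that $\Phi(K) \subseteq \Phi(\Omega) \subseteq K$. The only difference is that the paper then stops and cites the classical compact-convex Schauder theorem (Gilbarg--Trudinger, Theorem 11.1) as a black box, whereas you go on to reprove it from Brouwer via Schauder projections; that extra step is sound but not needed for the paper's purposes.
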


In the classical statement of the theorem (see e.g.
\cite[Theorem 11.1]{GilbargTrudinger}), $\Omega$ is assumed to be convex and
compact. Our statement is an easy consequence of the classical one as we can
replace the subset $\Omega$ by the closed convex hull $\Omegatil$ of
$\Phi(\Omega)$ which is a compact subset of $\Omega$ (see e.g. \cite[Theorem 3.24]{Rudin})
so, in particular, $\Phi(\Omegatil) \subset \Phi(\Omega)) \subset \Omegatil$.
This formulation is, however, more convenient for us.

The overall strategy follows that of \cite{GicquaudHabilitation}, see
\cite{cang1} for the original method. We choose for $X$ the space $W^{1, 2}(M, TM)$
of vector fields with Sobolev regularity. The mapping $\Phi$ is then constructed
as follows: Given $V \in W^{1, 2}(M, TM)$, we set $A = |\sigma + \bL V|\in L^2 (M,\Rb)$ and solve
for $\phi \in W^{1, 2}(M, \bR)$ the weak formulation of the Lichnerowicz equation
with apparent horizon boundary condition (see Proposition \ref{propLich}).
We then solve for $W$ the (weak form of the) vector equation \eqref{eqVector} with
boundary condition \eqref{eqCondVector} and set $\Phi(V) = W$.

We first prove that the mapping $\Phi$ satisfies the assumptions of the Schauder fixed
point theorem (Proposition \ref{propPhi}) and find some invariant subset $\Omega$ for
$\Phi$ (Proposition \ref{propInvariant}). As a consequence of Theorem \ref{thmSchauder},
we conclude that there exists a fixed point for $\Phi$, i.e. a weak solution to the
problem \ref{BCCE}. 

We place ourselves under the assumptions of Theorrem \ref{thmMain} and thus assume $\sigma\not\equiv 0$ or $\xi\not\equiv 0$.

\begin{proposition}\label{propPhi}
Assume that $\tau \in W^{1,t}(M, \bR) \cap L^\infty(M, \bR)$ for some
$\dsp{t > \frac{2n(n-1)}{3n-2}}$. Then $\Phi$ is well-defined, continuous and compact.
\end{proposition}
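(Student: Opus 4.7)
The mapping $\Phi$ decomposes as
\[
V \longmapsto A = |\sigma + \bL V| \longmapsto \phi \longmapsto (\gamma, \omega) \longmapsto W,
\]
with $\gamma = \tfrac{n-1}{n}\phi^N \dd\tau$ and $\omega = \bigl(\tfrac{n-1}{n}\tau - \tfrac{\Theta_-}{2}\bigr)\phi^N \nu^\flat + \xi$, where $W$ solves \eqref{eqVectorBoundary}. The plan is to choose an auxiliary exponent $q$ such that $W$ lands in $W^{2,q}(M, TM)$, which then embeds compactly into $W^{1,2}(M, TM)$ by Rellich--Kondrachov---this requires $q > \tfrac{2n}{n+2}$. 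Specifically, I will take $q$ with
\[
\tfrac{2n}{n+2} < q < \min\{r_0, p\} \qquad\text{and}\qquad \tfrac{1}{q} > \tfrac{n-2}{2(n-1)} + \tfrac{1}{t},
\]
where $r_0 = \tfrac{2n(n-1)}{n^2-2}$ as in Proposition \ref{propImprovedRegL2}. Compatibility of the two bounds reduces to $\tfrac{n-2}{2(n-1)} + \tfrac{1}{t} < \tfrac{n+2}{2n}$, which is precisely the hypothesis $t > \tfrac{2n(n-1)}{3n-2}$.

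\textbf{Well-definedness and continuity.} For $V \in W^{1,2}(M, TM)$, $A \in L^2(M, \bR)$; Proposition \ref{propLich} produces $\phi \in W^{1,2}(M, \bR)$ bounded from below and depending continuously on $A$, and Proposition \ref{propImprovedRegL2} upgrades this to $\phi^N \in W^{1, r_0}(M, \bR) \hookrightarrow L^{r_0^*}(M, \bR)$ with $r_0^* = \tfrac{N}{2}+1$. H\"older's inequality then puts $\gamma$ in $L^q(M, T^*M)$, continuously in $\phi^N$. For the boundary data, $\tau, \Theta_- \in W^{1,2p}(M) \hookrightarrow L^\infty(M)$ (since $2p > n$), so the product rule yields $c\phi^N \in W^{1, r_0}(M)$ for $c = \tfrac{n-1}{n}\tau - \tfrac{\Theta_-}{2}$; the trace theorem places $c\phi^N|_{\partial M}$ in $W^{1-1/r_0, r_0}(\partial M) \hookrightarrow W^{1-1/q, q}(\partial M)$ (using $q \leq r_0$), and $\xi \in W^{1-1/p, p}(\partial M) \hookrightarrow W^{1-1/q, q}(\partial M)$ (using $q \leq p$). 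Proposition \ref{propVector} then delivers $W \in W^{2,q}(M, TM)$, and the Sobolev embedding $W^{2,q} \hookrightarrow W^{1,2}$ gives $\Phi(V) \in W^{1,2}(M, TM)$. Continuity of each arrow follows from the cited propositions; the only mildly delicate passage is $\phi \mapsto \omega$, which combines the $L^b$-continuity of $\phi^N|_{\partial M}$ from Proposition \ref{propContinuity} with the uniform $W^{1-1/r_0, r_0}(\partial M)$-bound through the compact embedding $W^{1-1/r_0, r_0} \hookrightarrow W^{1-1/q, q}$ on $\partial M$ (valid for $q < r_0$).

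\textbf{Compactness.} For a bounded sequence $(V_k)_k$ in $W^{1,2}(M, TM)$, the solutions $\phi_k^N$ are uniformly bounded in $W^{1, r_0}(M)$ by Proposition \ref{propImprovedRegL2}. Rellich--Kondrachov applied in the bulk (for $\gamma_k$) and on $\partial M$ (for $\omega_k$) yields precompactness of $(\gamma_k, \omega_k)_k$ in $L^q(M, T^*M) \times W^{1-1/q, q}(\partial M, T^*M)$. Continuity of the vector solve then gives precompactness of $(\Phi(V_k))_k$ in $W^{2,q}(M, TM) \hookrightarrow W^{1,2}(M, TM)$. The main technical obstacle is the boundary term $\omega$: unlike the interior source, which is handled by plain H\"older from the Lebesgue regularity of $\phi^N$, it requires fractional Sobolev regularity on $\partial M$, forcing the careful interplay between the $W^{1, r_0}$-control of $\phi^N$, the $L^\infty \cap W^{1, 2p}$-regularity of $\tau$ and $\Theta_-$, and the trace theorem outlined above. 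The narrow window $\bigl(\tfrac{2n}{n+2}, \min\{r_0, p\}\bigr)$ in which $q$ must lie is precisely what pins down the hypothesis on $t$.
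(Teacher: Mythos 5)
Your proof is correct but takes a genuinely different route from the paper, and the comparison is instructive. The paper works entirely at the weak level: it puts $\gamma$ in $L^{2n/(n+2)}(M)$ and $\omega$ in $L^{2(n-1)/n}(\partial M)$ (the duals of the Sobolev/trace images of $W^{1,2}(M, TM)$), applies Lax--Milgram to produce $W \in W^{1,2}(M, TM)$, and obtains compactness of $\Phi$ directly from Proposition \ref{propContinuity}, which packages the compactness of $A \mapsto (\phi^N, \phi^N|_{\partial M})$ into Lebesgue targets. It never needs a strong elliptic solve, the trace theorem in Sobolev form, or fractional embeddings on $\partial M$; the $L^\infty$ control of $\tau$ and $\Theta_-$ suffices for the boundary term. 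Your route instead pushes the vector solve up to $W^{2,q}(M, TM)$ with $q > \tfrac{2n}{n+2}$ via $\cP_q$ (Proposition \ref{propVector}), then uses the continuous embedding $W^{2,q} \hookrightarrow W^{1,2}$; compactness is harvested upstream from the compact embeddings $W^{1,r_0}(M) \hookrightarrow L^s(M)$, $s < \tfrac{N}{2}+1$, and $W^{1-1/r_0, r_0}(\partial M) \hookrightarrow W^{1-1/q,q}(\partial M)$, $q < r_0$. Both arguments ultimately trace their compactness back to Proposition \ref{propImprovedRegL2}; yours is more technically demanding (fractional trace spaces on $\partial M$) but delivers higher interior regularity of $W$ at this stage, which the paper postpones to the bootstrap at the end of Section~\ref{secCoupled}. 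The trade-off $\tfrac{n-2}{2(n-1)} + \tfrac{1}{t} < \tfrac{n+2}{2n} \iff t > \tfrac{2n(n-1)}{3n-2}$ indeed reproduces the stated hypothesis, and your window for $q$ is nonempty since $r_0, p > \tfrac{2n}{n+2}$.

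One caveat: you invoke $\tau, \Theta_- \in W^{1,2p}(M) \hookrightarrow L^\infty$ to conclude $c\phi^N \in W^{1, r_0}(M)$, but Proposition~\ref{propPhi} only assumes $\tau \in W^{1,t}(M)\cap L^\infty(M)$ for $t > \tfrac{2n(n-1)}{3n-2}$, and places no $W^{1,\cdot}$-assumption on $\Theta_-$. Under the stated hypotheses you only get $c\phi^N \in W^{1, \min\{r_0, s\}}(M)$ with $\tfrac{1}{s} = \tfrac{n-2}{2(n-1)} + \tfrac{1}{t}$, which can be strictly weaker than $W^{1, r_0}$ when $t < n$; this is harmless since the hypothesis on $t$ still guarantees $\min\{r_0, s\} > \tfrac{2n}{n+2}$, so you can take $q$ in $\bigl(\tfrac{2n}{n+2}, \min\{r_0, s, p\}\bigr)$, but the write-up should be adjusted to avoid overclaiming $W^{1,r_0}$-regularity of $c\phi^N$. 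More importantly, your argument genuinely does require some first-order Sobolev control on $\Theta_-$ on $M$ (or at least on its trace), which the paper's weak-solve route does not; so strictly speaking your proof establishes the proposition under slightly stronger hypotheses (those of Theorem~\ref{thmMain}), which is all that is ultimately used, but it is worth noting that the paper's statement is more general.
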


\begin{proof}
From our assumptions, the mapping $V \mapsto A \definedas |\sigma + \bL V|$ is continuous
as a map from $W^{1, 2}(M, TM)$ to $L^2(M, \bR)$. We now apply Proposition \ref{propContinuity}
and get that $A \mapsto \phi^N$ is continuous and compact as a map to
$L^a(M, \bR) \times L^b(\partial M, \bR)$, where $a$ and $b$ are in the ranges
given there. We set
\[
 \gamma \definedas \frac{n-1}{n} \phi^N \dd\tau \quad \text{and} \quad
 \omega \definedas f \phi^N \nu^\flat + \xi,
\]
where $\dsp{f \definedas \frac{n-1}{n} \tau - \frac{\Theta_-}{2} \in L^\infty(\partial M, \bR)}$.
We have that $\phi^N \mapsto \gamma$ is a continuous linear map from $L^a(M, \bR)$ to
$L^{\frac{2n}{n+2}}(M, T^*M)$ as it follows from an application of H\"older's inequality and our choice for $t$.

Similarly, we get that $\phi^N \mapsto f \phi^N \nu^\flat + \xi$ is a continuous linear
map to $L^{\frac{2(n-1)}{n}}(\partial M, \bR)$. Due to our choices of regularity for $\gamma$ and $\omega$, we get that the linear form
\[
 W \mapsto \int_{\partial M} \<W, \omega\> \dd\mu^g - \int_M \<W, \gamma\> \dd\mu^g
\]
is continuous over $W^{1, 2}(M, TM)$. It follows from the Lax-Milgram theorem that
there exists a unique $W \in W^{1, 2}(M, TM)$ such that, for all
$X \in W^{1, 2}(M, TM)$, we have
\begin{equation}\label{eqWeakVector2}
 \frac{1}{2} \int_M \< \bL X, \bL W\> \dd\mu^g
  = \int_{\partial M} \<X, \omega\> \dd\mu^g - \int_M \<X, \gamma\> \dd\mu^g,
\end{equation}
namely $W$ is a weak solution to the problem \eqref{eqVector}-\eqref{eqCondVector}
(see Equation \eqref{eqWeakVector} in Claim \ref{clImprovedRegularity}).
$W$ depends continuously on $(\gamma, \omega)$. All in all, we have shown that
$V \mapsto W \definedas \Phi(V)$ is well-defined and continuous. It is also compact
as $A \mapsto \phi^N$ is compact.
\end{proof}

\begin{proposition}\label{propInvariant}
There exists positive constants $\varepsilon$ and $\mu$ such that, if
\[
 \|\sigma\|_{L^2(M, S_2(M))} \leq \varepsilon \quad\text{and}\quad
 \|\xi\|_{L^{\frac{2(n-1)}{2}}(\partial M, T^*M)} \leq \mu,
\]
there exists $R > 0$ such that the set
\[
 \Omega_R \definedas \left\{W \in W^{1, 2}(M, TM), \int_M |\bL W|^2 \dd\mu^g \leq R^2\right\}
\]
is stable by $\Phi$ for some well chosen $R$: $\Phi(\Omega_R) \subset \Omega_R$.
\end{proposition}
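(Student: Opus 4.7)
My plan is as follows. Given $V \in \Omega_R$, I first control $A \definedas |\sigma + \bL V|$ in $L^2(M,\bR)$, then propagate the estimate through Proposition \ref{propImprovedRegL2} to get a bound on $\phi^{N/2+1}$ in $W^{1,2}(M,\bR)$, then plug $W = \Phi(V)$ as a test function in the weak vector equation \eqref{eqWeakVector2}, and finally close the iteration by choosing $R$, $\varepsilon$ and $\mu$ small enough.

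For the first step, since $\sigma$ is a TT-tensor with $\sigma(\nu,\cdot)\equiv 0$, the integration by parts performed at the end of the proof of Proposition \ref{propYork} gives the orthogonality $\int_M \langle \sigma, \bL V\rangle \,\dd\mu^g = 0$ for every $V \in W^{1,2}(M,TM)$. Hence $\|A\|_{L^2(M)}^2 = \|\sigma\|_{L^2}^2 + \|\bL V\|_{L^2}^2 \leq \varepsilon^2 + R^2$, and Proposition \ref{propImprovedRegL2} yields $\|\phi^{N/2+1}\|_{W^{1,2}(M)} \lesssim \sqrt{\varepsilon^2+R^2}$.

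For the second step I take $X = W$ in \eqref{eqWeakVector2}. By Lemma \ref{lmCoercivity2} one has $\|W\|_{W^{1,2}} \lesssim \|\bL W\|_{L^2}$, so Sobolev and the trace embeddings yield $\|W\|_{L^N(M)} + \|W\|_{L^{2(n-1)/(n-2)}(\partial M)} \lesssim \|\bL W\|_{L^2}$. Applying Hölder, the interior contribution $\int_M \langle W, \phi^N \dd\tau\rangle\,\dd\mu^g$ is bounded using exponents $\bigl(N, \tfrac{N}{2}+1, \tfrac{2n(n-1)}{3n-2}\bigr)$, the boundary term $\int_{\partial M} f\phi^N \langle W,\nu\rangle\,\dd\mu^g$ using $\bigl(\tfrac{N}{2}+1, \tfrac{2(n-1)}{n}\bigr)$, and $\int_{\partial M}\langle W,\xi\rangle\,\dd\mu^g$ using $\bigl(\tfrac{n-1}{n-2}, n-1\bigr)$. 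The identities $\tfrac{N}{2}+1 = \tfrac{2(n-1)}{n-2}$ and $2N/(N+2) = n/(n-1)$ show that both $\|\phi^N\|_{L^{N/2+1}(M)}$ and $\|\phi^N\|_{L^{2(n-1)/n}(\partial M)}$ are controlled by $\|\phi^{N/2+1}\|_{W^{1,2}(M)}^{n/(n-1)}$ via Sobolev in $M$ and via the trace in $\partial M$. Dividing through by $\|\bL W\|_{L^2}$, I expect the clean estimate
\[
\|\bL W\|_{L^2(M)} \leq C \left[(\varepsilon^2+R^2)^{n/(2(n-1))} + \|\xi\|_{L^{n-1}(\partial M)}\right],
\]
with $C$ depending only on the fixed data $M, g, \tau, \Theta_-$.

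To close the iteration, since $n/(n-1) > 1$, for $\varepsilon \leq R$ one has $(\varepsilon^2 + R^2)^{n/(2(n-1))} \leq 2^{n/(2(n-1))} R^{n/(n-1)}$, which is sublinear in $R$ for small $R$. I therefore pick $R_0>0$ small enough that $C\cdot 2^{n/(2(n-1))} R_0^{n/(n-1)-1} \leq \tfrac{1}{2}$, then set $\varepsilon \leq R_0$ and $\mu \leq R_0/(2C)$, which gives $\|\bL W\|_{L^2}\leq R_0$ and hence the stability of $\Omega_{R_0}$ under $\Phi$. The main subtlety is to select the Hölder exponents in the right way so that the $L^2$ smallness of $\sigma$ and the $L^{n-1}$ smallness of $\xi$ appearing in the statement are exactly those dictated by the critical Sobolev and trace embeddings for $W \in W^{1,2}(M,TM)$; the super-critical exponent $n/(n-1) > 1$ on the right-hand side is precisely the reason why the smallness hypotheses on $\sigma$ and $\xi$ are indispensable, and forces the radius $R$ itself to be small.
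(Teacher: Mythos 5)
Your proposal is correct and follows essentially the same route as the paper's proof: the York orthogonality bounds $A$ in $L^2$, Proposition \ref{propImprovedRegL2} controls $\phi^{N/2+1}$ in $W^{1,2}$, testing \eqref{eqWeakVector2} with $X=W$ together with the coercivity of $J_0$ and the critical Sobolev/trace embeddings yields the estimate $\|\bL W\|_{L^2} \lesssim (R^2+\varepsilon^2)^{N/(N+2)} + \mu$, and smallness of $\varepsilon,\mu,R$ closes the loop. The only cosmetic differences are that you pair $W$ with $\xi$ at the subcritical exponent $\bigl(\tfrac{n-1}{n-2},\,n-1\bigr)$ instead of the critical $\bigl(\tfrac{N}{2}+1,\,\tfrac{2(n-1)}{n}\bigr)$ used in the paper (both legitimate on the compact boundary), and that you close by taking $R$ small rather than computing the minimizer $R_0$ of the auxiliary function $f_{\varepsilon,\mu}$ as the paper does -- equivalent in substance.
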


\begin{proof}
Let $R > 0$ be arbitrary for the moment. We first note that, if
$W \in \Omega_R$ and $A \definedas |\sigma + \bL W|$, we have
\[
 \int_M A^2 \dd\mu^g = \int_M |\sigma + \bL W|^2 \dd\mu^g = \int_M (|\sigma|^2 + |\bL W|^2) \dd\mu^g \leq R^2 + \varepsilon^2.
\]
Let $\phi$ denote the solution to the Lichnerowicz equation \eqref{eqLichnerowicz}
with apparent horizon boundary condition \eqref{eqCondLich}.
From Proposition \ref{propImprovedRegL2}, together with the Sobolev and
trace inequalities, we have, for some positive constants $C$ and $c$,
\begin{align*}
 \|\phi^N\|_{L^{\fraccheloue}(M, \bR)}^{\frac{\fraccheloue}{N}}
 &= \left\|\phi^{\fraccheloue}\right\|_{L^N(M, \bR)} \leq C \|A\|_{L^2(M, \bR)} \leq C \sqrt{R^2 + \varepsilon^2},\\
 \|\phi^N\|_{L^q(\partial M, \bR)}^{\frac{\fraccheloue}{N}}
 &= \left\|\phi^{\fraccheloue}\right\|_{L^{\fraccheloue}(\partial M, \bR)} \leq c \|A\|_{L^2(M, \bR)} \leq c \sqrt{R^2 + \varepsilon^2},
\end{align*}
where we set $\dsp{q \definedas \frac{\dsp{\left(\fraccheloue\right)^2}}{N} = \frac{2 (n-1)^2}{n(n-2)}}$. So
\[
 \|\phi^N\|_{L^{\fraccheloue}(M, \bR)} \leq C^{\frac{N}{\fraccheloue}} (R^2 + \varepsilon^2)^{\frac{N}{N+2}}
 \quad\text{and}\quad
 \|\phi^N\|_{L^q(\partial M, \bR)} \leq c^{\frac{N}{\fraccheloue}} (R^2 + \varepsilon^2)^{\frac{N}{N+2}}.
\]
If we let $\dsp{\gamma = \frac{n-1}{n} \phi^N \dd\tau}$, we get
\begin{equation}\label{eqEstimateGamma}
 \|\gamma\|_{L^{\frac{2n}{n+2}}(M, \bR)} \leq \frac{n-1}{n} \|\phi^N\|_{L^{\fraccheloue}(M, \bR)} \|\dd\tau\|_{L^{\frac{2n(n-1)}{3n-2}}(M, \bR)} \leq C' (R^2 + \varepsilon^2)^{\frac{N}{N+2}},
\end{equation}
where $\dsp{C' = \frac{n-1}{n} \|\dd\tau\|_{L^{\frac{2n(n-1)}{3n-2}}(M, \bR)} C^{\frac{N}{\fraccheloue}}}$.

Similarly, if $\displaystyle \omega = \left(\frac{n-1}{n} \tau - \frac{\Theta_-}{2}\right) 
\phi^N \nu^\sharp + \xi$, we have
\begin{equation}\label{eqEstimateOmega}
\begin{aligned}
 \|\omega\|_{L^{\frac{2(n-1)}{n}}(\partial M, T^*M)} 
 & \leq \left\|\frac{n-1}{n} \tau - \frac{\Theta_-}{2}\right\|_{L^{\frac{2(n-1)^2}{n}}(\partial M, \bR)} \|\phi^N\|_{L^q(\partial M, \bR)}\\
 &\qquad\qquad + \|\xi\|_{L^{\frac{2(n-1)}{n}}(\partial M, T^*M)}\\
 & \leq c' (R^2 + \varepsilon^2)^{\frac{N}{N+2}} + \mu,
\end{aligned}
\end{equation}
with $\dsp{c' = \left\|\frac{n-1}{n} \tau - \frac{\Theta_-}{2}\right\|_{L^{\frac{2(n-1)^2}{n}}(\partial M, \bR)} c^{\frac{N}{\fraccheloue}}}$.
Finally, if $W$ is the (weak) solution to the vector equation \eqref{eqVector}-\eqref{eqCondVector},
we have, setting $X \equiv W$ in \eqref{eqWeakVector2}:
\begin{align*}
 \frac{1}{2} \int_M |\bL W|^2 \dd\mu^g
  &= \int_{\partial M} \<W, \omega\> \dd\mu^g - \int_M \<W, \gamma\> \dd\mu^g\\
  &\leq \|W\|_{L^{\fraccheloue}(\partial M, TM)} \|\omega\|_{L^{\frac{2(n-1)}{n}}(\partial M, T^*M)}\\
  &\qquad\qquad + \|W\|_{L^N(M, TM)} \|\gamma\|_{L^{\frac{2n}{n+2}}(M, T^*M)}
\end{align*}
As a consequence of the coercivity of $J_0$, defined in \eqref{eqDefJ}
(see Lemma \ref{lmCoercivity2}), and the Sobolev (resp. trace) inequality,
there exists a positive constant $s_V$ (resp. $t_V$) independent of $W$ such
that
\begin{align*}
 \|W\|_{L^N(M, TM)}^2 \leq s_V \int_M |\bL W|^2 \dd\mu^g,\\
 \left(\textrm{resp. } \|W\|_{L^{\fraccheloue}(\partial M, TM)}^2 \leq t_V \int_M |\bL W|^2 \dd\mu^g \right).
\end{align*}
Inserting these inequalities in the previous estimate, we get
\begin{align*}
 &\frac{1}{2} \int_M |\bL W|^2 \dd\mu^g\\
 &\qquad \leq \left(\sqrt{t_V} \|\omega\|_{L^{\frac{2(n-1)}{n}}(\partial M, T^*M)} + \sqrt{s_V} \|\gamma\|_{L^{\frac{2n}{n+2}}(M, T^*M)}\right) \sqrt{\int_M |\bL W|^2 \dd\mu^g}.
\end{align*}
As a consequence,
\begin{equation}\label{eqEstimateR}
 \int_M |\bL W|^2 \dd\mu^g \leq 4 \left(\sqrt{t_V} \|\omega\|_{L^{\frac{2(n-1)}{n}}(\partial M, T^*M)} + \sqrt{s_V} \|\gamma\|_{L^{\frac{2n}{n+2}}(M, T^*M)}\right)^2.
\end{equation}
So $W \in \Omega_R$ (i.e. $\Omega_R$ is stable by $\Phi$) provided that
\[
 \sqrt{t_V} \|\omega\|_{L^{\frac{2(n-1)}{n}}(\partial M, T^*M)} + \sqrt{s_V} \|\gamma\|_{L^{\frac{2n}{n+2}}(M, T^*M)} \leq \frac{R}{2}.
\]
Using Estimates \eqref{eqEstimateGamma} and \eqref{eqEstimateOmega}, we conclude
that $\Omega_R$ is stable as soon as
\begin{equation}\label{eqCondR}
 \zeta_1 (R^2 + \varepsilon^2)^{\frac{N}{N+2}} + \zeta_2 \mu \leq R,
\end{equation}
for some positive constants $\zeta_1$ and $\zeta_2$ that can be explicited from the calculations
above. We now show that, if $\varepsilon$ and $\mu$ are small enough, we can find a
value for $R$ such that Inequality \eqref{eqCondR} is fulfilled.

Note that, if we set $\varepsilon = \mu = 0$, the function
\[
 f_{\varepsilon, \mu}(R) = \zeta_1 (R^2 + \varepsilon^2)^{\frac{N}{N+2}} + \zeta_2 \mu - R
\]
is minimal for $\displaystyle R = R_0 = \left(\frac{N+2}{2N} \zeta_1^{-1}\right)^{\frac{N+2}{N-2}}$.
Seting $R = R_0$ in \eqref{eqCondR}, and rearranging, we get that $\Omega_{R_0}$
is stable provided that
\begin{equation}\label{eqCondExplicit}
 \frac{N+2}{2N} \left[\left(1 + \frac{\varepsilon^2}{R_0^2}\right)^{\frac{N}{N+2}}-1\right] + \frac{\zeta_2 \mu}{R_0} \leq \frac{N-2}{2N}.
\end{equation}
This inequality is trivially satisfied if $\varepsilon$ and $\mu$ are chosen small
enough.
\end{proof}

As a consequence of Propositions \ref{propPhi} and \ref{propInvariant}, the
Schauder fixed point theorem guarantees the existence of a fixed point $W$ to the
map $\Phi$ provided that $\varepsilon$ and $\mu$ are small enough. We will denote by $\phi$ the corresponding (weak) solution to the Lichnerowicz equation.

Our last step consists in making several comings and goings between the regularity theory of the Lichnerowicz equation and the one of the vector equation:\\

$\bullet$ \textsc{Round 1}: From Proposition \ref{propImprovedRegL2}, we know that $\phi^N \in W^{1, r_0}(M, \bR) \cap L^{\fraccheloue}(M, \bR)$. As we assumed $\tau \in W^{1, n}(M, \bR)$, we get from H\"older's inequality that
\[
\gamma \definedas \frac{n-1}{n} \phi^N \dd\tau \in L^{r_0}(M, \bR).
\]
Similarly, we have
\[
\omega \definedas \left(\frac{n-1}{n}\tau 
-\frac{\Theta_-}{2}\right)\phi^{N}\nu^{\flat} \in W^{1, r_0}(M, \bR).
\]
This is because we have $\dsp{f \definedas \frac{n-1}{n}\tau -\frac{\Theta_-}{2} \in W^{1, 2}(M, \bR) \cap L^\infty(M, \bR)}$ so
\[
\dd(f\phi^N) = f \dd\phi^N + \phi^N df
\]
As $f \in L^\infty(M, \bR))$ and $\dd\phi^N \in L^{r_0}(M, \bR)$, $f \dd\phi^N \in L^{r_0}(M, \bR)$. Also, as $\phi^N \in L^{\fraccheloue}(M, \bR)$ and $df \in L^{2p}(M, \bR)$, we have $\phi^N df \in L^{r_0}(M, \bR)$ as
\[
\frac{1}{\dsp{\frac{N}{2}+1}} + \frac{1}{2p} < \frac{1}{\dsp{\frac{N}{2}+1}} + \frac{1}{n}
= \frac{1}{r_0}.
\]
We conclude from Proposition \ref{propVector} that $W \in W^{2, r_0}(M, \bR)$. As a consequence, $A \definedas |\sigma + \bL W| \in L^{2 q_0}(M, \bR)$ with $\dsp{q_0 = \frac{n-1}{n-2}}$. It follows from simple calculations that $\dsp{q_0 > \frac{n^2}{n^2 - n + 2}}$ so, from Lemma \ref{lmLich2}, we conclude that $\phi^N \in W^{1, 2}(M, \bR)$.\\

Note that, if $n = 3$, we have that $q_0 = 2$ so Proposition \ref{propLichStrong} ensures that $\phi$ is actually a strong solution belonging to $W^{2, 2}(M, \bR)$ or $W^{2, p}(M, \bR)$ if $p < 2$. Standard elliptic regularity then implies that we actually have $(\phi, W) \in W^{2, p}(M, \bR) \times W^{2, p}(M, TM)$. This proves Theorem \ref{thmMain} for $n = 3$.\\

$\bullet$ \textsc{Round 2}: Assume now that $n \geqslant 4$. From Round 1, with the use of Hölder inequality, we conclude that $\gamma \in L^2(M, TM)$ and $\omega \in W^{1, 2}(M, TM)$. By Proposition \ref{propVector}, we have that $W \in W^{2, 2}(M, TM)$. Thus, $A := |\sigma + \Lb_g W| \in L^{2 q_0}(M, \Rb)$ with $\dsp{q_0 = \frac{N}{2} < p}$.

Proposition \ref{propLichStrong} gives that $\varphi^N \in L^{r_0}(M, \Rb)$ with
$\dsp{r_0 = \frac{8n(n-1)q}{7n^2 - 2n - (6n+4)q}}$ if $n > 4$ or $r_0 = N - \varepsilon = 4-\varepsilon$, with $\varepsilon > 0$ as small as we want, if $n = 4$ (reducing the value of $q_0$ so we do not attain the critical value $\dsp{q = \frac{n}{2}}$).

We then obtain
$$
\Lb W(\nu, \cdot)
= \left(\frac{n-1}{n} \tau - \frac{\Theta_-}{2}\right) \varphi^N \nu^\flat + \xi
\in W^{1, r_0}(M, \Rb) 
$$
Formally, this estimate comes from the fact that $W^{1, r_0}(M, \Rb)$ is a $W^{1, 2p}(M, \Rb)$-module for the usual multiplication of functions if $r_0 < 2p$. This property is a direct generalization of \cite[Theorem 4.39]{AdamsFournier} where it is proven that $W^{1, 2p}(M, \Rb)$ is a Banach algebra.
Likewise, using Sobolev embedding, we get that $\varphi^N \dd\tau \in L^r(M, T^*M)$. So, using Proposition \ref{propVector}, we get $W \in W^{2, r_0}(M, TM)$, which gives us $A \in L^{2q_1}(M, \Rb)$ with $\dsp{\frac{1}{2q_1} = \frac{1}{r_0} - \frac{1}{n}}$. We thus set a sequence $(q_i)$ by recurrence 
$$
\frac{1}{2 q_{i+1}} = \frac{1}{r_i} - \frac{1}{n},
$$
with
$$
r_i = \frac{8n(n-1)q_i}{7n^2 - 2n - (6n+4)q_i}
$$
(here, we suppose implicitly that $q_i < \dsp{\frac{n}{2}}$, and wish to obtain, after a certain amount of iterations $q_i > \dsp{\frac{n}{2}}$ to get to Round 3). We then get
$$
\frac{1}{q_{i+1}} = - \frac{7n-2}{2n(n-1)} + \frac{7n-2}{4(n-1)} \frac{1}{q_i}.
$$
This is a sequence for the $\dsp{\frac{1}{q_i}}$ of common ratio $\dsp{\frac{7n-2}{4(n-1)} > 1}$, we can rewrite
$$
\frac{1}{q_{i+1}} - \ell = \frac{7n-2}{4(n-1)} \left(\frac{1}{q_i} - \ell\right)
$$
with $\dsp{\ell = \frac{14n-4}{3n^2+2n}}$. The $q_i$ sequence will increase (and exceed $\dsp{\frac{n}{2}}$ after a finite number of iterations) if
$$
\frac{1}{q_0} - \ell < 0.
$$
However, straightforward calculations that this last equality is satisfied if, and only if, $n < 6$. If $n \geqslant 6$, successive iterations do not grant us additional regularity. We will now suppose that $3 \leqslant n \leqslant 5$.\\

$\bullet$ \textsc{Round 3}: We finally have $A \in L^{2 q}(M,  \Rb)$ for a certain $q > \dsp{\frac{n}{2}}$. We can suppose also that $q < n$. Proposition \ref{propLichStrong} gives $\varphi^N \in W^{1, r}(M, \Rb)$ with $\dsp{r = \frac{n q}{n-q} > n}$. We then obtain, using Proposition \ref{propVector} that $W \in W^{2, s}(M, TM)$ with $s = \min\{r, p\}$. In particular, $\Lb W \in L^\infty(M, \Rb)$ and so $A \in L^{2p}(M, \Rb)$, which proves that $\varphi \in W^{2, p}(M, \Rb)$, then $W \in W^{2, p}(M, TM)$ which concludes the proof of our main theorem \ref{thmMain}.

\providecommand{\bysame}{\leavevmode\hbox to3em{\hrulefill}\thinspace}
\providecommand{\MR}{\relax\ifhmode\unskip\space\fi MR }
\providecommand{\MRhref}[2]{%
    \href{http://www.ams.org/mathscinet-getitem?mr=#1}{#2}
}
\providecommand{\href}[2]{#2}


\end{document}